\newtheoremstyle{mytheoremstyle}  
{}                           
{}                           
{}                      
{}                              
{\bfseries}                     
{.}                             
{ }                             
{}                              
\theoremstyle{mytheoremstyle}
\newtheorem{theorem}{Theorem}[section]
\newtheorem*{theorem*}{Theorem}
\newtheorem{definition}[theorem]{Definition}
\newtheorem{proposition}[theorem]{Proposition}
\newtheorem{corollary}[theorem]{Corollary}
\newtheorem*{conjecture*}{Conjecture}
\newtheorem{remark}[theorem]{Remark}
\numberwithin{equation}{section}
\renewcommand{\thesection}{\arabic{section}} 
\renewcommand{\@seccntformat}[1]{\csname the#1\endcsname\quad} 
\newlist{proofparts}{description}{1}
\setlist[proofparts,1]{%
  font=\normalfont\textsf,
  itemindent=-10pt,
  topsep=2pt,
  itemsep=8pt,
  labelsep=0.75ex
}
\tikzstyle{Z dot}=[inner sep=0mm, minimum size=2mm, shape=circle, draw=black, fill=zxGreen, tikzit fill={rgb,255: red,216; green,248; blue,216}, outer sep=-0.5mm]
\tikzstyle{Z phase dot}=[draw=black, fill=zxGreen, shape=rectangle, minimum size=4.5mm, rounded corners=1.8mm, inner sep=0.5mm, outer sep=-0.5mm, scale=0.8, tikzit shape=circle, tikzit fill={rgb,255: red,216; green,248; blue,216}, font={\footnotesize\boldmath}]
\tikzstyle{Z tiny phase dot}=[Z dot, draw=black, tikzit fill={rgb,255: red,216; green,248; blue,216}, font={\footnotesize\boldmath\tiny}]
\tikzstyle{X dot}=[shape=circle, draw=black, fill=zxRed, tikzit fill={rgb,255: red,221; green,165; blue,165}, inner sep=0 mm, minimum size=2 mm, outer sep=-0.5mm]
\tikzstyle{X phase dot}=[Z phase dot, draw=black, fill=zxRed, tikzit fill={rgb,255: red,221; green,165; blue,165}]
\tikzstyle{X tiny phase dot}=[X dot, draw=black, tikzit fill={rgb,255: red,221; green,165; blue,165}, font={\footnotesize\boldmath\tiny}]
\tikzstyle{hadamard}=[fill=zxHad, draw=black, shape=rectangle, inner sep=0.6mm, minimum height=1.5mm, minimum width=1.5mm, tikzit fill=yellow, font={\footnotesize\boldmath}]
\tikzstyle{paulibox}=[draw=black, shape=rectangle, fill=white, minimum size=1em, inner sep=0.2em, scale=0.85, font={\scriptsize}, outer sep=-0.5mm]
\tikzstyle{vertex}=[inner sep=0mm, minimum size=1mm, shape=circle, draw=black, fill=black, tikzit category=misc]
\tikzstyle{vertex set}=[inner sep=0mm, minimum size=1mm, shape=circle, draw=black, fill=white, font={\footnotesize\boldmath}, tikzit category=misc]
\tikzstyle{small black dot}=[fill=black, draw=black, shape=circle, inner sep=0pt, minimum width=1.2mm, tikzit category=circuit]
\tikzstyle{scalar}=[shape=rectangle, text height=1.5ex, text depth=0.25ex, yshift=0.5mm, fill=white, draw=black, minimum height=5mm, yshift=-0.5mm, minimum width=5mm, font={\small}]
\tikzstyle{empty diagram}=[draw={gray!40!white}, dashed, shape=rectangle, minimum width=1cm, minimum height=1cm, tikzit category=misc]
\tikzstyle{sLabel}=[font={\scriptsize}, tikzit draw=black, auto]
\tikzstyle{fault-location}=[fill=white, draw=black, shape=circle, minimum size=2mm, inner sep=0mm, outer sep=-0.5 mm, regular polygon, regular polygon sides=8, font=\tiny]
\tikzstyle{hadamard edge}=[-, dashed, dash pattern=on 2pt off 0.5pt, thick, draw={rgb,255: red,68; green,136; blue,255}]
\tikzstyle{fault-free}=[-, draw={rgb,255: red,177; green,98; blue,255}, line width=1.2pt]
\tikzstyle{component}=[-, style=dashed, draw={rgb,255: red,0; green,128; blue,128}]
\tikzstyle{X Web}=[-, preaction={ultra thick, draw={rgb,255: red,208; green,0; blue,0}, opacity=0.4}, tikzit draw={rgb,255: red,208; green,0; blue,0}]
\tikzstyle{Z Web}=[-, preaction={ultra thick, draw={rgb,255: red,0; green,207; blue,0}, opacity=0.4}, tikzit draw={rgb,255: red,0; green,207; blue,0}]
\tikzstyle{X Web Overlay}=[-, ultra thick, draw={rgb,255: red,208; green,0; blue,0}, opacity=0.4, tikzit draw={rgb,255: red,208; green,0; blue,0}]
\tikzstyle{Z Web Overlay}=[-, ultra thick, draw={rgb,255: red,0; green,207; blue,0}, opacity=0.4, tikzit draw={rgb,255: red,0; green,207; blue,0}]
\tikzstyle{X Web Overlay WIDE}=[-, line width=3pt, draw={rgb,255: red,208; green,0; blue,0}, opacity=0.4, tikzit draw={rgb,255: red,208; green,0; blue,0}]
\tikzstyle{Z Web Overlay WIDE}=[-, line width=3pt, draw={rgb,255: red,0; green,207; blue,0}, opacity=0.4, tikzit draw={rgb,255: red,0; green,207; blue,0}]
\tikzstyle{XZ Web}=[-, preaction={line width=2pt, draw={rgb,255: red,208; green,0; blue,0}, opacity=0.4, offset=-1pt}, preaction={line width=2pt, draw={rgb,255: red,0; green,207; blue,0}, opacity=0.4, offset=1pt}, tikzit draw={rgb,255: red,128; green,128; blue,0}]
\tikzstyle{ZX Web}=[-, preaction={line width=2pt, draw={rgb,255: red,208; green,0; blue,0}, opacity=0.4, offset=1pt}, preaction={line width=2pt, draw={rgb,255: red,0; green,207; blue,0}, opacity=0.4, offset=-1pt}, tikzit draw={rgb,255: red,191; green,128; blue,64}]
\tikzstyle{XZ Web no line}=[-, line width=0pt, opacity=0, preaction={line width=2pt, draw={rgb,255: red,208; green,0; blue,0}, opacity=0.4, offset=-1pt}, preaction={line width=2pt, draw={rgb,255: red,0; green,207; blue,0}, opacity=0.4, offset=1pt}, tikzit draw={rgb,255: red,128; green,128; blue,0}]
\tikzstyle{ZX Web no line}=[-, line width=0pt, opacity=0, preaction={line width=2pt, draw={rgb,255: red,208; green,0; blue,0}, opacity=0.4, offset=1pt}, preaction={line width=2pt, draw={rgb,255: red,0; green,207; blue,0}, opacity=0.4, offset=-1pt}, tikzit draw={rgb,255: red,191; green,128; blue,64}]
\tikzstyle{arrow}=[draw=black, ->]
\newcommand{\eg}{e.g.\xspace}
\newcommand{\ie}{i.e.\xspace}
\newcommand{\wrt}{w.r.t.\xspace}
\newcommand{\LHS}{{\normalfont\textsf{LHS}}\xspace}
\newcommand{\RHS}{{\normalfont\textsf{RHS}}\xspace}
\newcommand{\paulifaults}[1]{\overline{\mathcal{P}^{#1}}}
\newcommand{\weightfunc}{\text{wt}}
\title{Completeness for Fault Equivalence of Clifford ZX Diagrams}
\author{Maximilian Rüsch, Aleks Kissinger, Benjamin Rodatz}
\affiliation{University of Oxford, Oxford, UK}
\begin{document}
    \maketitle

    \begin{abstract}
	    Two circuits are considered to be equivalent under noise if the effect of faults on one circuit is no worse than the effect of faults on the other circuit.
We call this relationship \textit{fault equivalence}.
Fault equivalence offers a way to transform circuits while provably preserving their fault-tolerant properties, enabling a framework for fault-tolerant circuit synthesis and optimisation that is correct by construction.
The ZX calculus offers a diagrammatic way to represent and reason about quantum circuits and is a useful tool for manipulating circuits while preserving fault equivalence.
For this, the usual set of ZX rewrites has to be restricted to not only preserve the underlying linear map represented by the diagram, but also fault equivalence.

In this work, we provide a set of ZX rewrites that are sound and complete for fault equivalence of Clifford ZX diagrams.
This means that any equivalence that can be derived using the proposed rules is certain to be correct, and any correct equivalence can be derived using only these rules.
For this, we utilise diagrammatic constructions called fault gadgets to reason about arbitrary, possibly correlated Pauli faults in ZX diagrams.
Fault gadgets allow us to separate the diagram into a fault-free part, which captures the noise-free behaviour of a diagram, and a noisy part that enumerates the effects of all possible faults.
Using this, we provide a unique normal form for ZX diagrams under noise and show that any diagram can be brought into this normal form using our proposed rule set.
    \end{abstract}

    \tableofcontents


    \section{Introduction}
Scaling quantum computation involves, among many other components, optimising quantum programs and suppressing noise.
Independently, both fields are making progress: entire frameworks, libraries, and calculi are built to both systematically and heuristically simplify quantum programs while preserving their semantics~\autocite{nam2018automated, qiskit2024, kissinger2020Pyzx,vandewetering2020zxcalculusworkingquantumcomputer}, while fault-tolerant quantum computing research offers a versatile range of tools to reason about, detect, and correct faults resulting from noise~\autocite{gottesmanIntroductionQuantumError2009,gottesman1997stabilizercodesquantumerror,bacon2017SparseQuantumCodes,delfosse2023spacetimecodescliffordcircuits}.
More recently,~\autocite{rodatzFloquetifyingStabiliserCodes2024,rodatz2025faulttoleranceconstruction} have suggested a way to combine both fields, providing a framework to rewrite quantum circuits while preserving their behaviour under noise and therefore their fault tolerance properties.
This allows the provably fault-tolerant manipulation of quantum circuits, which can, for example, be used for circuit synthesis and optimisation. 
This paper advances this line of research by providing a complete set of such rewrite rules, meaning that, using these rules, any two circuits can be transformed into each other if and only if they are equivalent under noise. 

The core notion that underlies this framework is \textit{fault equivalence}.
Fault equivalence is based on the observation that two quantum circuits that implement the same linear map may have very different behaviour under noise, \eg a single fault on one circuit may be much more detrimental than a single fault on the other.
To formalise this, fault equivalence requires that for every undetectable fault that can happen on one circuit, there exists an equivalent fault on the other circuit. 
Additionally, this equivalent fault must have at most the same weight as the original fault, meaning it is at least as likely as the original one. 
This relationship must hold in both directions.
This guarantees that in a larger protocol, fault-equivalent circuits can be interchanged without affecting the protocol's behaviour under noise. 
Therefore, we can perform fault-tolerant circuit synthesis and optimisation as long as we preserve fault equivalence. 

To efficiently perform fault-tolerant circuit manipulation,~\autocite{rodatz2025faulttoleranceconstruction} proposes using the Clifford ZX calculus.
The ZX calculus formalises quantum programs as undirected diagrams obtained from a few simple generators~\autocite{coeckeInteractingQuantumObservables2011}.
It has proven successful as a framework for circuit synthesis and optimisation~\autocite{duncanGraphtheoreticSimplification2020, cowtanPhaseGadget2020, staudacher2024multicontrolled}. 
However, rewrites on the vanilla ZX calculus only preserve semantic equivalence, \ie whether the corresponding diagrams represent the same linear map.
They do not guarantee the preservation of the stronger notion of fault equivalence. 
Therefore,~\autocite{rodatz2025faulttoleranceconstruction} restricts the set of allowed rewrites to \textit{fault-equivalent rewrites}, requiring this stronger notion.

A major question concerning any formal rewrite system is whether it is sound and complete. 
That is, the calculus should only allow deriving equivalences between expressions that actually hold (soundness), and it should allow deriving every equivalence that does hold (completeness).
While previous work in this domain has proven the soundness of the proposed rewrite rules for fault equivalence, it remained an open question of whether all valid fault equivalences could be derived from these rules. 
In this work, we propose a set of fault-equivalent rewrites and show that it is sound and complete. 

To facilitate the proof, we generalise the previous notion of a \textit{noise model} from~\autocite{rodatz2025faulttoleranceconstruction} that described noise through a set of independent faults which may occur during computation, called \textit{atomic faults}.
In our revised notion, atomic faults may be assigned integer weights describing their individual likelihood, inducing a weight assignment for all faults generated under their multiplication.
By generalising fault equivalence to these noise models, we increase the expressiveness of the framework, enabling circuit synthesis for operations with varying noise levels. 

Further, we utilise \textit{fault gadgets}~\autocite{rodatz2025faulttoleranceconstruction} to represent and manipulate atomic faults in circuits.
This allows us to separate a diagram into a fault-free part with non-trivial semantics and a faulty part with trivial semantics. 
Therefore, to rewrite one diagram into another, we have to show that each part can be rewritten individually.
For the fault-free part of the diagram, we leverage existing completeness results~\autocite{backensZXcalculusCompleteStabilizer2014,backens2017AsimplifiedStabiliserZXCalculus}.
For the faulty part of the diagram, we provide a novel procedure to rewrite any diagram into another fault-equivalent diagram, through a unique normal form that any diagram can be rewritten into.

As part of the rewriting procedure, we construct a framework to manipulate faults in spacetime.
We recover \textit{inconsequential faults}~\autocite{MagdalenadelaFuente2025XZYruby} that leave the diagram unchanged and form a group under multiplication.
This yields an equivalence of faults in spacetime: Two faults change the diagram in the same manner if and only if multiplying one by some inconsequential fault yields the other.
We further show that the group of inconsequential faults is exactly generated by local stabilisers inside the diagram, allowing us to reproduce the spacetime equivalence with fault gadgets.
Finally, we augment the framework with \textit{flip operators}, a class of faults such that every detectable or undetectable atomic fault may be expressed as a combination of flip operators.
These tools for manipulating faults open avenues to explore the behaviour of circuits under noise beyond fault equivalence checking.

Our proof of completeness is constructive, so an algorithmic procedure for checking fault equivalence can be directly derived.
An implementation of this algorithm to automatically check fault equivalence of ZX diagrams is provided as part of an open source Python package~\autocite{paritea2026code}\footnote{Available at \url{https://github.com/paritea/paritea}}.
    \section{Background}\label{sec:background}

\subsection{ZX Calculus}
The ZX calculus is a graphical language for representing and reasoning about linear maps between qubits~\autocite{coeckeInteractingQuantumObservables2011}.
Throughout this work, we focus on the Clifford ZX calculus, a fragment for reasoning about Clifford circuits.
For a more thorough introduction to the entire calculus, we refer to~\autocite{vandewetering2020zxcalculusworkingquantumcomputer}.

The basic building blocks of the ZX calculus are the \textit{spiders} that represent linear maps from $(\mathbb{C}^2)^{\otimes m}$ to $(\mathbb{C}^2)^{\otimes n}$:
\begin{definition}[Z spider, X spider]
 \begin{align*}
 \textit{Z-spider:} \qquad \tikzfig{02-preliminaries/z-spider} \quad
            &\coloneqq \quad \ket{0}^{\otimes n}\! \bra{0}^{\otimes m} + e^{i k\frac{\pi}{2}} \ket{1}^{\otimes n}\! \bra{1}^{\otimes m} \\[8pt]
 \textit{X-spider:} \qquad \tikzfig{02-preliminaries/x-spider} \quad
            &\coloneqq \quad \ket{+}^{\otimes n}\! \bra{+}^{\otimes m} + e^{i k\frac{\pi}{2}} \ket{-}^{\otimes n}\! \bra{-}^{\otimes m}
 \end{align*}
\end{definition}
By convention, when the phase of a spider is $0$, we omit the phase parameter.

We also include \textit{identities}, \textit{swaps}, \textit{cups}, and \textit{caps} in ZX diagrams:
\[
    \tikzfig{02-preliminaries/wire} \ \ \coloneqq\ \  \sum_i |i \rangle\langle i|
    \qquad \tikzfig{02-preliminaries/swap} \ \ \coloneqq\ \ \sum_{ij} |ij \rangle\langle ji|
    \qquad \tikzfig{02-preliminaries/cup} \ \ \coloneqq \ \ \sum_i |ii \rangle
    \qquad \tikzfig{02-preliminaries/cap} \ \ \coloneqq \ \ \sum_i \langle ii|
\]

We can compose these elementary building blocks to form larger ZX diagrams:
\begin{definition}[Composition of ZX diagrams]
 The \textit{sequential composition} of two diagrams $D_1: (\mathbb{C}^2)^{\otimes m} \rightarrow (\mathbb{C}^2)^{\otimes k}$ and $D_2: (\mathbb{C}^2)^{\otimes k} \rightarrow (\mathbb{C}^2)^{\otimes n}$ is the diagram $D_2 \circ D_1$, which corresponds graphically to:
    \[ D_2 \circ D_1 \qquad\rightsquigarrow\qquad \tikzfig{02-preliminaries/sequential-composition}: (\mathbb{C}^2)^{\otimes m} \rightarrow (\mathbb{C}^2)^{\otimes n} \]
 The \textit{parallel composition} of two diagrams $D_1: (\mathbb{C}^2)^{\otimes m} \rightarrow (\mathbb{C}^2)^{\otimes n}$ and $D_2: (\mathbb{C}^2)^{\otimes k} \rightarrow (\mathbb{C}^2)^{\otimes l}$ is the diagram $D_1 \otimes D_2$, which corresponds graphically to:
    \[ D_1 \otimes D_2 \qquad\rightsquigarrow\qquad \tikzfig{02-preliminaries/parallel-composition}: (\mathbb{C}^2)^{\otimes (m+k)} \rightarrow (\mathbb{C}^2)^{\otimes (n+l)} \]
\end{definition}

As a shorthand, we further introduce a special notation for the Hadamard gate $H$, one of the basic Clifford unitaries:
\[ \tikzfig{02-preliminaries/hadamard-gate}\]

Diagrams without inputs and outputs represent complex numbers.
We refer to them as global scalars.
If we have multiple disconnected components in a diagram, the global scalar is the product of the scalars represented by each component.
The global scalar 0 is represented by a single spider, either green or red, with phase $\pi$ and no legs.

Along with diagrams, the ZX calculus also features a set of axioms called \say{rewrite rules}, that enable showing equivalences between diagrams.
An essential rule that we most often assume implicitly is \say{Only Connectivity Matters}, commonly abbreviated \say{OCM}.
This rule states that as long as we keep the connectivity of the spiders internally and the order of inputs and outputs the same, we may move spiders and bend legs arbitrarily without changing the underlying linear map:
\[ \tikzfig{02-preliminaries/ocm} \]

Beyond OCM, we present six additional rewrite rules:
\[ \tikzfig{02-preliminaries/zx-axioms} \]
The axioms represent the most basic rewrites that transform one diagram into another semantically equivalent diagram.
All of these rules also hold in the colour inverse, meaning they are true if we make all green spiders red and all red spiders green.
We note that these rewrites are inherently bidirectional, \ie the transformation they apply has no particular direction in which it does not hold.

The presented rules only hold up to a non-zero global scalar.
This is sufficient for our purposes, as for comparing quantum circuits, we only care about the probability distribution of measurement outcomes, which is unaffected by non-zero global scalars.
Beyond the rules above, we consider any two diagrams with the same number of inputs and outputs and a global scalar of 0 to be equivalent, as they represent the same zero map.

A key characteristic of the ZX calculus is that it is sound and complete with respect to semantic equivalence:
\begin{theorem}[Completeness of the Clifford ZX calculus]
 We can rewrite two (non-zero) Clifford ZX diagrams $D_1, D_2$ into each other using the axioms above if and only if they represent the same linear map up to global scalar.
\end{theorem}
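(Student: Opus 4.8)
The plan is to separate the two directions. Soundness---that interderivability implies equality of the represented linear maps---is routine: I would check that each of the seven displayed axioms, together with OCM, preserves the underlying linear map by a direct computation on the generators. All the weight of the theorem lies in completeness, so I will concentrate there.

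First I would reduce the statement about arbitrary maps to a statement about states. Using the cups and caps, any diagram $D : (\mathbb{C}^2)^{\otimes m} \to (\mathbb{C}^2)^{\otimes n}$ can be bent into an input-free state on $(\mathbb{C}^2)^{\otimes(m+n)}$, and since OCM lets me bend legs freely and reversibly, $D_1$ and $D_2$ are interderivable exactly when their bent-down state versions are. Hence it suffices to prove the following: two input-free Clifford ZX diagrams that represent the same (sub-normalised) vector can be rewritten into one another.

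The heart of the argument is a normal form. I would show that every input-free Clifford diagram can be rewritten, using only the given axioms, into a \emph{graph-state-with-local-Cliffords} form: one Z-spider per output, connected pairwise by Hadamard edges, with a single-qubit Clifford applied at each output leg. This leans on the fact that Clifford ZX diagrams represent exactly the stabiliser states (up to scalar) and that every stabiliser state admits such a description; the rewriting itself is carried out with spider fusion, the colour-change rule, the copy and $\pi$-commutation rules, and Euler decomposition to normalise the local-Clifford layer. I would then push further to a canonical \emph{reduced} form by applying local complementations---which are themselves derivable from the axioms---to pin down the adjacency structure and the residual local Cliffords into a single representative per equivalence class.

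The main obstacle is uniqueness: showing that two reduced normal forms representing the same stabiliser state are literally the same diagram, or are related by a short sequence of derivable moves. Here I would invoke the stabiliser-formalism classification of when two graph-state-with-local-Clifford descriptions coincide---equality forces the underlying graphs to be related by local complementation at certain vertices together with compatible local-Clifford adjustments---and then verify that each such relation is realised by an admissible rewrite in the calculus (local complementation and pivoting being derivable). Combining the existence of the normal form with this uniqueness result yields completeness, and together with soundness the biconditional follows.
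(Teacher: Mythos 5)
Your sketch is essentially the proof the paper relies on: the paper does not prove this theorem itself but cites~\autocite{backensZXcalculusCompleteStabilizer2014,backens2017AsimplifiedStabiliserZXCalculus}, and your route --- map-state duality via cups/caps, rewriting into (reduced) graph-state-with-local-Clifford form using derivable local complementation, then uniqueness via the stabiliser-formalism classification of when two such descriptions coincide --- is precisely Backens's argument in those references. No gap.
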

\begin{proof}
 See~\autocite{backensZXcalculusCompleteStabilizer2014,backens2017AsimplifiedStabiliserZXCalculus}.
\end{proof}
Finally, we note that through the Choi-Jamiołkowski isomorphism~\autocite{Nielsen_Chuang_2010} we can uniquely identify a linear map with a pure quantum state and vice versa.
In the ZX calculus, this isomorphism corresponds to bending around the input wires of a diagram into output wires, being careful to keep their relative order, or bending them back into input wires.
For the Clifford fragment, ZX diagrams uniquely correspond to stabiliser states following the stabiliser formalism.
Throughout this work, we will thus not distinguish between a Clifford ZX diagram representing a linear map and its corresponding stabiliser state.

\subsection{Noise on ZX Diagrams}
Following~\autocite{bacon2017SparseQuantumCodes, gottesman2022opportunitieschallengesfaulttolerantquantum, delfosse2023spacetimecodescliffordcircuits,rodatz2025faulttoleranceconstruction}, we consider Pauli faults in spacetime.
We define:
\begin{definition}[Pauli group]
 The \textit{Pauli group} $\mathcal{P}^1$ is defined as:
    \[ \mathcal{P}^1 = \{ \alpha P\ |\ \alpha\in\{ 1,-1,i,-i \},P \in \{ I,X,Y,Z \} \} \]
 where
    \[ I = \begin{pmatrix}1&0\\0&1\end{pmatrix} \quad X = \begin{pmatrix}0&1\\1&0\end{pmatrix} \quad Y = \begin{pmatrix}0&-i\\i&0\end{pmatrix} \quad Z = \begin{pmatrix}1&0\\0&-1\end{pmatrix}\,. \]
 The $n$-qubit Pauli group $\mathcal{P}^n$ is defined as the $n$-fold tensor product of $\mathcal{P}^1$.
 The Pauli group up to scalars, denoted $\paulifaults{n}$, is defined as the quotient of the Pauli group by its centre $\mathcal{P}^n / \{ \pm I, \pm i I \}$.
\end{definition}

We can represent faults in spacetime as elements of the Pauli group up to scalars:
\begin{definition}[Faults on ZX diagrams]
 Let $D$ be a ZX diagram with edges $E$.
 A \textit{fault} $F$ on $D$ is an element of $\paulifaults{|E|}$.
 The \textit{faulty diagram} $D^F$ is the diagram $D$ with the corresponding Pauli rotation applied to the respective edges as indicated by $F$.
\end{definition}

To indicate a specific fault $F$, we draw the effect of the fault on each edge in octagons on the edges, to clearly distinguish faults from Paulis that are part of the original diagram (see~\cref{fig:fault-examples-ZX} (a)).
Applying $F$ to $D$ then gives rise to the new faulty diagram $D^F$ (see~\cref{fig:fault-examples-ZX} (b)).
As we only care about diagrams up to non-zero global phase, we decompose the Pauli $Y$ rotation into a combination of Pauli $X$ and $Z$ rotations.
Since $Y = iXZ = -iZX$, up to global phase, this is well defined.

\begin{figure}
 \centering
 \begin{minipage}{0.32\textwidth}
 \centering
 \tikzfig{02-preliminaries/example-zx-fault}
 \caption*{(a)}
 \end{minipage}
 \begin{minipage}{0.32\textwidth}
 \centering
 \tikzfig{02-preliminaries/example-fault-diag}
 \caption*{(b)}
 \end{minipage}
 \begin{minipage}{0.32\textwidth}
 \centering
 \tikzfig{02-preliminaries/example-noise-model}
 \caption*{(c)}
 \end{minipage}
 \caption{An example ZX diagram from two qubits to three qubits.
 We have (a) A fault on the diagram as indicated by the Paulis in the octagons. (b) The faulty diagram obtained from applying the fault from (a). (c) An example annotation of an edge-flip noise model under which the fault from (a) has weight $8$.}
    \label{fig:fault-examples-ZX}
\end{figure}

We treat the ZX diagram $D$ as postselected on an expected measurement outcome.
From~\autocite{KissingerWetering2024Book} we know that if a diagram is zero, we will never see that particular measurement outcome.
Thus, if $D^F = 0$, the fault prevents us from ever seeing the expected measurement outcome of $D$.
In other words, we always see an unexpected outcome and therefore know that something went wrong --- the fault is detected.
We have:
\begin{definition}[Detectability of faults]
    Let $D$ be a diagram and $F \in \paulifaults{|E|}$ be a fault on $D$. We say $F$ is detectable if $D^F = 0$.
\end{definition}

Having defined faults in spacetime, we now provide a way to specify what faults can happen if something goes wrong:
\begin{definition}[Weighted noise model]
 Let $D$ be a ZX diagram with edges $E$.
 Let $\mathcal{F} \subseteq \paulifaults{|E|}$ be a set of faults that we treat as atomic faults.
 Then a \textit{weighted noise model} $\mathcal{N}$ is a map $\mathcal{N}: \mathcal{F} \to \mathbb{N}^+$ that assigns each atomic fault a weight, which we call \textit{atomic weight}.
 The set of all possible faults under the noise model $\mathcal{N}$ is the group $\langle \mathcal{F} \rangle$ generated by the atomic faults $\mathcal{F}$.
\end{definition}
In this definition, atomic faults make up the elementary faults that can occur.
This is analogous to the independent error mechanisms found in \texttt{stim}~\autocite{gidneyStimReadme2021}.
A fault is a combination of atomic faults.
The atomic weight of an atomic fault indicates how likely it is --- faults with a lower weight are more likely to occur than faults with a higher weight.
A weight can also be provided for combinations of atomic faults, through finding the combination with the lowest atomic weight sum:
\begin{definition}(Induced weight function)
    \label{def:induced-weight}
    A noise model $\mathcal{N}: \mathcal{F} \rightarrow \mathbb{N}^+$ induces a \textit{weight} function $\weightfunc_{\mathcal{N}}: \langle \mathcal{F} \rangle \rightarrow \mathbb{N}$, defined for a fault $F \in \langle \mathcal{F} \rangle$ as the minimum weight of any product of atomic faults that yields $F$:
    \[\weightfunc_{\mathcal{N}}(F) = \min_{\mathcal{F}_\text{sub} \subseteq \mathcal{F}} \sum_{F_i \in \mathcal{F}_\text{sub}} \mathcal{N}(F_i) \qquad \text{where} \quad \prod_{F_i \in \mathcal{F}_\text{sub}} F_i = F.\]
\end{definition}
Intuitively, the induced weight of a fault describes the \say{cost of the cheapest construction} of that fault.
So, we calculate the cost of a fault $F$ by considering all sets of atomic faults $\mathcal{F}_{sub} \subset \mathcal{F}$ that multiply together to create $F$.
Out of all those sets, the cost of $F$ is the cost of the cheapest set.
This means that the induced weight of an atomic fault may be smaller than its atomic weight.
For example, if the atomic faults $ZI$ and $IZ$ have weight $1$ and the atomic fault $ZZ$ has weight $3$, then the induced weight of $ZZ$ is $2$, as the cheapest way to construct $ZZ$ is by combining $ZI$ and $IZ$.
When the noise model $\mathcal{N}$ inducing the weight function $\weightfunc_{\mathcal{N}}$ is clear from context, we simply write $\weightfunc$.

With these two definitions, we generalise the noise model from~\autocite{rodatz2025faulttoleranceconstruction}, which is a special case where all atomic faults have weight $1$.

Throughout this work, we will primarily focus on one class of noise models:
\begin{definition}
 A noise model $\mathcal{N}: \mathcal{F} \to \mathbb{N}^+$ is an \textit{edge-flip noise model} if all atomic faults $\mathcal{F}$ act non-trivially on at most one edge.
\end{definition}

Edge-flip noise is a helpful restriction as it substantially simplifies reasoning about diagrams under noise.
Simultaneously, in~\cref{prop:weighted-edge-flip-universality}, we will argue that edge-flip noise is a sufficiently general noise model, as for any diagram under any noise model, we can find an equivalent diagram under edge-flip noise.

Furthermore, edge-flip noise allows us to depict the ZX diagram and the noise model in a single picture by annotating each edge with a triplet $(w_X, w_Y, w_Z)$ indicating the respective weight of the $X$, $Y$, and $Z$ edge-flip affecting that edge.
If a fault is not part of the atomic noise model, we indicate its weight as $-$.
As we will later primarily reason about $Z$ edge flips, if on a specific edge neither $X$ nor $Y$ edge flips are part of the noise model, we will annotate the edge with a single number $w$ as a shorthand for $(-, -, w)$.
If an edge allows no faults, we will omit the annotation entirely and draw the edge in purple and bold to show that it is idealised as fault-free.
\cref{fig:fault-examples-ZX} (c) shows an example noise model.
In practice, noise models will often be more uniform.

\subsection{Fault Equivalence}\label{subsec:fault-equivalence}
In a noise-free setting, we say that two diagrams are equivalent if they represent the same linear map.
However, in the presence of faults, we need a stronger notion of equivalence that also takes the behaviour of the circuits under noise into account.
For this purpose, the notion of \textit{fault equivalence} was first introduced by~\autocite{rodatzFloquetifyingStabiliserCodes2024} and later refined in~\autocite{rodatz2025faulttoleranceconstruction}.
For an in-depth motivation of fault equivalence and an overview of its applications, we refer to~\autocite{rodatz2025faulttoleranceconstruction,rodatzFloquetifyingStabiliserCodes2024}.

In this paper, we relax the notion of fault equivalence to an asymmetric relation called fault boundedness.
If two diagrams are mutually fault-bounded by each other, they are fault-equivalent.
\begin{definition}[$w$-fault boundedness, fault boundedness]
 Let $D_1,D_2$ be ZX diagrams with respective noise models $\mathcal{N}_1: \mathcal{F}_1 \to \mathbb{N}^+, \mathcal{N}_2: \mathcal{F}_2 \to \mathbb{N}^+$.
 We say that $D_1$ under $\mathcal{N}_1$ is \textit{$w$-fault-bounded} by $D_2$ under $\mathcal{N}_2$ for some $w \in \mathbb{N}^+$ if for any fault $F_1$ on $D_1$ with $\weightfunc_{\mathcal{N}_1}(F_1) < w$, we have either:
 \begin{itemize}
        \item $F_1$ is detectable, \textbf{or}
        \item there exists a fault $F_2$ on $D_2$ with $\weightfunc_{\mathcal{N}_2}(F_2) \leq \weightfunc_{\mathcal{N}_1}(F_1)$ such that $D_1^{F_1} = D_2^{F_2}$.
 \end{itemize}
 We write $D_1 \wFaultBnd{w} D_2$, leaving the noise models implicit when they are clear from context.
    $D_1$ under $\mathcal{N}_1$ is \textit{fault-bounded} by $D_2$ under $\mathcal{N}_2$, written $D_1 \FaultBnd D_2$, if $D_1 \wFaultBnd{w} D_2$ for all $w \in \mathbb{N}^+$.
\end{definition}

Using this, we can now define:
\begin{definition}[$w$-fault equivalence, fault equivalence]
 Let $D_1,D_2$ be ZX diagrams with respective noise models $\mathcal{N}_1, \mathcal{N}_2$.
 We say that $D_1$ under $\mathcal{N}_1$ is \textit{$w$-fault-equivalent} to $D_2$ under $\mathcal{N}_2$ for some $w \in \mathbb{N}^+$, written $D_1 \wFaultEq{w} D_2$, if $D_1 \wFaultBnd{w} D_2$ and $D_2 \wFaultBnd{w} D_1$.
    $D_1$ under $\mathcal{N}_1$ is \textit{fault-equivalent} to $D_2$ under $\mathcal{N}_2$, written $D_1 \FaultEq D_2$, if $D_1 \wFaultEq{w} D_2$ for all $w \in \mathbb{N}^+$.
\end{definition}

\begin{example}{Four-legged spider}{}
   One example of two fault-equivalent diagrams is the four-legged spider unfusion under edge-flip noise~\autocite{rodatzFloquetifyingStabiliserCodes2024}:
   \[\tikzfig{running-example/fault-equivalence}\]
   The unfused diagram introduces one detecting region (see~\cref{subsec:pauli-webs}).
   Therefore, a fault with an odd number of $X$ and $Y$ edge-flips on the internal edges is detectable.
   To check fault equivalence, we can iterate over all undetectable faults and check that they have corresponding faults on the other side.
   For more details, see~\autocite{rodatzFloquetifyingStabiliserCodes2024}.
   In this work, we will use this fault equivalence as a running example to show how these two diagrams can be deformed into one another using our proposed rule set.
\end{example}

This paper is mainly devoted to showing the completeness of a set of rewrite rules for fault equivalence.
For this, an essential property of $w$-fault boundedness, and therefore also of fault boundedness and fault equivalence, is that it is transitive:
\begin{proposition}
    \label{prop:w-fault-boundedness-transitive}
    $w$-fault boundedness is transitive, \ie for ZX diagrams $D_1,D_2,D_3$ under respective noise models $\mathcal{N}_1: \mathcal{F}_1 \rightarrow \mathbb{N}^+,\mathcal{N}_2: \mathcal{F}_2 \rightarrow \mathbb{N}^+,\mathcal{N}_3: \mathcal{F}_3 \rightarrow \mathbb{N}^+$ it holds that
    \[ D_1 \wFaultBnd{w_1} D_2 \quad\text{and}\quad D_2 \wFaultBnd{w_2} D_3 \qquad\Longrightarrow\qquad D_1 \wFaultBnd{\min(w_1,w_2)} D_3\,. \]
\end{proposition}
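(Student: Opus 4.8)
The plan is to unfold the definition of $w$-fault boundedness and chain the two hypotheses through a short double case analysis, keeping careful track of the weight bookkeeping. I would set $w = \min(w_1,w_2)$ and fix an arbitrary fault $F_1$ on $D_1$ with $\weightfunc_{\mathcal{N}_1}(F_1) < w$. The goal is then to verify the defining disjunction for $D_1 \wFaultBnd{w} D_3$ on this $F_1$: show that $F_1$ is either detectable on $D_1$, or matched by a fault on $D_3$ of weight at most $\weightfunc_{\mathcal{N}_1}(F_1)$.

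First I would apply the hypothesis $D_1 \wFaultBnd{w_1} D_2$. Since $\weightfunc_{\mathcal{N}_1}(F_1) < w \le w_1$, the definition applies and gives two cases. Either $F_1$ is detectable, in which case the first disjunct of the conclusion holds immediately and we are done; or there exists a fault $F_2$ on $D_2$ with $\weightfunc_{\mathcal{N}_2}(F_2) \le \weightfunc_{\mathcal{N}_1}(F_1)$ and $D_1^{F_1} = D_2^{F_2}$.

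In the latter case I would feed $F_2$ into the second hypothesis. The essential bookkeeping observation is that $\weightfunc_{\mathcal{N}_2}(F_2) \le \weightfunc_{\mathcal{N}_1}(F_1) < w \le w_2$, so $F_2$ lies strictly below the threshold $w_2$ and the definition of $D_2 \wFaultBnd{w_2} D_3$ applies. This splits once more: if $F_2$ is detectable, then $D_1^{F_1} = D_2^{F_2} = 0$, so $F_1$ is itself detectable on $D_1$; otherwise there is a fault $F_3$ on $D_3$ with $\weightfunc_{\mathcal{N}_3}(F_3) \le \weightfunc_{\mathcal{N}_2}(F_2)$ and $D_2^{F_2} = D_3^{F_3}$, whence $D_1^{F_1} = D_3^{F_3}$ and, composing the two weight inequalities, $\weightfunc_{\mathcal{N}_3}(F_3) \le \weightfunc_{\mathcal{N}_1}(F_1)$. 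In every branch the required disjunction holds, establishing $D_1 \wFaultBnd{w} D_3$.

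I do not expect a genuine obstacle here: the proof is a direct transitivity chain. The only two points requiring care are (i) that detectability propagates across the semantic equality of faulty diagrams, i.e.\ that $D_2^{F_2} = 0$ together with $D_1^{F_1} = D_2^{F_2}$ forces $D_1^{F_1} = 0$ (so the \say{detectable} branch of the second hypothesis collapses cleanly into the \say{detectable} branch of the conclusion), and (ii) that the weight inequalities compose so the intermediate fault $F_2$ stays below $w_2$. Point (ii) is precisely what forces the conclusion to use $\min(w_1,w_2)$ rather than either $w_i$ individually, and I would make sure the write-up highlights this as the reason for the bound.
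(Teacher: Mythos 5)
Your proposal is correct and follows essentially the same route as the paper's proof: unfold the definition, chain $F_1 \mapsto F_2 \mapsto F_3$ through the two hypotheses, and compose the weight inequalities, with $\min(w_1,w_2)$ ensuring the intermediate fault $F_2$ stays below the threshold $w_2$. If anything, your write-up is slightly more careful than the paper's, which starts from an undetectable $F_1$ and silently skips the branch where $F_2$ is detectable — a branch you correctly collapse via $D_1^{F_1} = D_2^{F_2} = 0$.
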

\begin{proof}
 We start with an undetectable fault $F_1 \in \mathcal{F}_1$ with $\weightfunc(F) < \min(w_1,w_2)$.
 As we assumed $D_1 \wFaultBnd{w_1} D_2$, there is some $F_2 \in \mathcal{F}_2$ with $D_1^{F_1} = D_2^{F_2}$ and $\weightfunc(F_2) \leq \weightfunc(F_1)$.
 As $D_2 \wFaultBnd{w_2} D_3$, we can similarly obtain some $F_3 \in \mathcal{F}_3$ with $D_2^{F_2} = D_3^{F_3}$ and $\weightfunc(F_3) \leq \weightfunc(F_2)$.

 But then via transitivity, $D_1^{F_1} = D_2^{F_2} = D_3^{F_3}$ and $\weightfunc(F_3) \leq \weightfunc(F_2) \leq \weightfunc(F_1)$, completing the claim.
\end{proof}

Additionally, we can observe that $w$-fault boundedness is preserved under sequential and parallel composition:
\begin{proposition}
 Fault boundedness is compositional, \ie for ZX diagrams $D_1,D_1',D_2,D_2'$ with respective noise models $\mathcal{N}_1,\mathcal{N}_1',\mathcal{N}_2,\mathcal{N}_2'$ it holds that
 \begin{align*}
 D_1 \wFaultBnd{w_1} D_2 \quad&\text{and}\quad D_1' \wFaultBnd{w_2} D_2' && \qquad&\Longrightarrow\qquad && D_1' \circ D_1 &\wFaultBnd{\min(w_1,w_2)} D_2' \circ D_2\,, \\
 D_1 \wFaultBnd{w_1} D_2 \quad&\text{and}\quad D_1' \wFaultBnd{w_2} D_2' && \qquad&\Longrightarrow\qquad && D_1 \otimes D_2 &\wFaultBnd{\min(w_1,w_2)} D_1' \otimes D_2'\,,
 \end{align*}
 where the composed diagrams have noise models uniquely composed of $\mathcal{F}_1,\mathcal{F}_1',\mathcal{F}_2,\mathcal{F}_2'$ under atomic fault set union (while padding faults with $I$ for edges in the other diagrams) and addition of atomic weights.
\end{proposition}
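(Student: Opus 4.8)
The plan is to prove the $w$-indexed statements, from which the un-indexed compositionality of $\FaultBnd$ follows immediately by quantifying over all $w \in \mathbb{N}^+$. The central observation is that the edge set of a composite diagram is partitioned between its two components and, by hypothesis, its noise model is the union of the two component noise models (each padded by identity on the other component's edges). Consequently any fault $F$ on the composite restricts uniquely to a fault on the first component and a fault on the second, and the min-sum definition of the induced weight (\cref{def:induced-weight}) makes this restriction well-behaved: writing $F = F_1 \cdot G_1$ with $F_1$ supported on the first component and $G_1$ on the second, any atomic decomposition of $F$ splits into atomic decompositions of $F_1$ and $G_1$, and concatenating optimal component decompositions realises $F$, so $\weightfunc(F) = \weightfunc(F_1) + \weightfunc(G_1)$.

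For sequential composition I would start from an undetectable fault $F$ on $D_1' \circ D_1$ with $\weightfunc(F) < \min(w_1,w_2)$ and split it as above into $F_1$ on $D_1$ and $G_1$ on $D_1'$. Since faults push through composition, $(D_1' \circ D_1)^F = D_1'^{G_1} \circ D_1^{F_1}$. A composite of linear maps vanishes whenever either factor does, so $(D_1'\circ D_1)^F \neq 0$ forces $D_1^{F_1} \neq 0$ and $D_1'^{G_1} \neq 0$; that is, $F_1$ is undetectable on $D_1$ and $G_1$ is undetectable on $D_1'$. Weight additivity gives $\weightfunc(F_1) \leq \weightfunc(F) < w_1$ and $\weightfunc(G_1) \leq \weightfunc(F) < w_2$, so the hypotheses $D_1 \wFaultBnd{w_1} D_2$ and $D_1' \wFaultBnd{w_2} D_2'$ supply faults $F_2$ on $D_2$ and $G_2$ on $D_2'$ with $D_2^{F_2} = D_1^{F_1}$, $D_2'^{G_2} = D_1'^{G_1}$, and $\weightfunc(F_2) \leq \weightfunc(F_1)$, $\weightfunc(G_2) \leq \weightfunc(G_1)$. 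Taking $F_2 \cdot G_2$ on $D_2' \circ D_2$ then yields $(D_2' \circ D_2)^{F_2 \cdot G_2} = D_2'^{G_2} \circ D_2^{F_2} = D_1'^{G_1} \circ D_1^{F_1} = (D_1' \circ D_1)^F$, while subadditivity of the weight bounds $\weightfunc(F_2 \cdot G_2) \leq \weightfunc(F_2) + \weightfunc(G_2) \leq \weightfunc(F_1) + \weightfunc(G_1) = \weightfunc(F)$, establishing $D_1' \circ D_1 \wFaultBnd{\min(w_1,w_2)} D_2' \circ D_2$.

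The parallel case runs identically after replacing $\circ$ with $\otimes$: the same edge partition and weight additivity hold, faults push through as $(D_1 \otimes D_1')^F = D_1^{F_1} \otimes D_1'^{G_1}$, and a tensor product of linear maps vanishes iff one of its factors does, so undetectability again descends to both components and the remainder of the argument is verbatim.

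I expect the only genuinely delicate point to be the weight additivity, and in particular the bookkeeping for edges that are shared between the two components --- the internal wires created when gluing $D_1$ to $D_1'$, on which an atomic fault could in principle be inherited from either noise model. I would resolve this by fixing, once and for all, the convention encoded by the \say{padding with $I$} clause in the statement: the composite's edge set is taken to be the disjoint union of the two component edge sets, with each atomic fault attributed to exactly one component. Under this convention restriction to a component's edges is a group homomorphism $\paulifaults{|E|} \to \paulifaults{|E_i|}$, so the decomposition $F = F_1 \cdot G_1$ is unique and the min-sum weight splits cleanly, which is precisely what both directions of the argument require. Everything else --- pushing Paulis through $\circ$ and $\otimes$, and the vanishing criteria for composites and tensors --- is then routine.
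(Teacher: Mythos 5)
Your proof takes essentially the same route as the paper's: decompose the composite fault into its two component faults supported on disjoint edge sets, invoke the two boundedness hypotheses, and recompose the resulting faults with additivity of the induced weight. You are in fact slightly more careful than the paper, which silently assumes both that undetectability of the composite fault descends to its components (your vanishing argument for $\circ$ and $\otimes$) and that the min-sum weight splits exactly across the disjoint edge partition --- points you justify explicitly.
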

\begin{proof}
 We prove the claim for sequential composition.
 The proof for parallel composition is analogous.
 Let $\mathcal{N}_1^\circ,\mathcal{N}_2^\circ$ be the noise models obtained for $D_1' \circ D_1$ and $D_2' \circ D_2$ respectively.
 These noise models are well defined, because the original noise models they are respectively sourced from are operating on disjoint sets of edges, and thus in disjoint spaces of $\paulifaults{|\cdot|}$.

 We start with considering an undetectable fault $F_1^\circ \in \mathcal{F}_1^\circ$ of weight less than $\min(w_1,w_2)$.
 $F_1^\circ$ is composed of $F_1, F_1'$ drawn from $\mathcal{F}_1,\mathcal{F}_1'$ respectively, and since $\mathcal{F}_1^\circ$ was formed under the addition of atomic weights, it must be that
    \[ \weightfunc(F_1^\circ) = \weightfunc(F_1) + \weightfunc(F_1') < \min(w_1,w_2)\,. \]
 Then, via the precondition of $w_1$- and $w_2$-fault boundedness, we can find semantically equivalent $F_2, F_2'$ drawn from $\mathcal{F}_2,\mathcal{F}_2'$ that have weights of at most those of $F_1, F_1'$.
 Their composite $F_2^\circ = F_2 F_2'$ is then a fault on $D_2' \circ D_2$ that can be drawn from $\mathcal{F}_2^\circ$ and it holds that
 \begin{gather*}
 (D_2' \circ D_2)^{F_2^\circ} \quad=\quad {D_2'}^{F_2'} \circ D_2^{F_2} \quad=\quad {D_1'}^{F_1'} \circ D_1^{F_1} \quad=\quad (D_1' \circ D_1)^{F_1^\circ}\,, \\
 \weightfunc(F_2^\circ) \quad=\quad \weightfunc(F_2) + \weightfunc(F_2') \quad\leq\quad \weightfunc(F_1) + \weightfunc(F_1') \quad=\quad \weightfunc(F_1^\circ)\,.
 \end{gather*}
 Thus, there is an equivalent $F_2^\circ$ for $F_1^\circ$ with lower or equal weight.
\end{proof}

We close this discussion on fault equivalence by considering the case that a noise model $\mathcal{N}$ contains no faults for a diagram $D$.
In this case, we call $D$ \textit{fully idealised}.
Intuitively, $D$ being fault-bounded by a second diagram $D'$ is equivalent to stating that the noise model $\mathcal{N}'$ of $D'$ describes \say{worse} behaviour under noise for $D'$ than $\mathcal{N}$ describes for $D$.
Then, the following result is unsurprising:
\begin{proposition}
    \label{prop:fully-idealised-equivalence}
    Let $D$ be a diagram and $\mathcal{N}$ a noise model such that $D$ is fully idealised.
    Then for any diagram $D'$ with $D = D'$ and any accompanying noise model $\mathcal{N}'$, it holds that $D \FaultBnd D'$.

    Furthermore, if $\mathcal{N}'$ is such that $D'$ is fully idealised as well, the symmetric $D \FaultEq D'$ holds.
\end{proposition}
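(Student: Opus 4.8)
The plan is to unwind the definitions and observe that being fully idealised collapses the space of faults on $D$ to a single trivial element. Since $D$ is fully idealised, its atomic fault set is empty, so the group of possible faults is $\langle \emptyset \rangle = \{I\}$. Adopting the convention that the induced weight function assigns $\weightfunc_{\mathcal{N}}(F) = \infty$ to any $F \notin \langle \mathcal{F} \rangle$ (the minimum over an empty set of decompositions), the only fault on $D$ with finite weight is the trivial fault $F = I$, and from \cref{def:induced-weight} its weight is $\weightfunc_{\mathcal{N}}(I) = 0$, witnessed by the empty product of atomic faults.

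For the first claim I would fix an arbitrary $w \in \mathbb{N}^+$ and verify $D \wFaultBnd{w} D'$ directly. The only fault $F_1$ on $D$ with $\weightfunc_{\mathcal{N}}(F_1) < w$ is $F_1 = I$, so it suffices to exhibit a matching fault on $D'$. I would take $F_2 = I$, which is a valid fault on any diagram and satisfies $\weightfunc_{\mathcal{N}'}(I) = 0 \leq 0 = \weightfunc_{\mathcal{N}}(F_1)$. Since applying the trivial fault leaves a diagram unchanged and $D = D'$ by hypothesis, we have $D^{F_1} = D = D' = D'^{F_2}$, discharging the second bullet of the definition; in particular no case analysis on detectability is required, as the explicit witness works regardless of whether $D$ is zero. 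As $w$ was arbitrary, this gives $D \FaultBnd D'$.

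The second claim then follows by symmetry: if $D'$ is fully idealised as well, the identical argument with the roles of $D$ and $D'$ exchanged (using $D' = D$) yields $D' \FaultBnd D$, and combining the two bounds gives $D \FaultEq D'$ by definition. I expect the only genuinely delicate point to be bookkeeping around weights rather than any real mathematical obstacle: one must note that $\weightfunc_{\mathcal{N}'}(F_2) \leq 0$ forces $F_2 = I$ (since every atomic weight in $\mathcal{N}'$ is strictly positive), so the trivial witness is in fact the only admissible one, and that the definition need only quantify over faults in $\langle \mathcal{F} \rangle$, which is precisely where finiteness of the weight confines us.
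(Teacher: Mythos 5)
Your proof is correct and takes essentially the same approach as the paper's: since $D$ is fully idealised, the only fault to consider is the trivial one, which is matched by the trivial fault on $D'$ (with weight $0 \leq 0$), and the equivalence claim follows by applying the bound in both directions. The paper's own proof is simply a terser statement of this triviality; your explicit bookkeeping with $\langle\emptyset\rangle = \{I\}$ and the weight comparison fills in the same argument in more detail.
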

\begin{proof}
   As $\mathcal{N}$ has no atomic faults, trivially, there exists a corresponding fault in $\mathcal{N}'$ for all faults in $\mathcal{N}$.

   The second statement follows from applying the first statement in both directions, giving fault equivalence of the two diagrams.
\end{proof}

\subsection{Pauli Webs}\label{subsec:pauli-webs}

Pauli webs are decorations for ZX diagrams that highlight some of the edges of the diagram in red and/or green.
They are useful tools for tracking stabilisers and detectors, and as we will formalise in \cref{subsec:inconsequential-faults}, they can be used to characterise and classify faults.

For this, we first observe that spiders are stabilised by Paulis, as they represent Clifford maps.
We can define:
\begin{definition}[Spider stabilisers]
    \label{def:stabilisers-of-spider}
 Let a ZX diagram $D$ consist of one spider $s$ with $k$ legs.
 Then a Pauli $P \in \paulifaults{k}$ is a \textit{spider stabiliser} of $s$ if $D^P = D$.
 In a larger diagram, a Pauli $P$ is a spider stabiliser of a spider $s$
 if it is a spider stabiliser of the subdiagram formed by $s$ alone and acts trivially on all other edges.
\end{definition}

\begin{figure}
 \centering
 \begin{minipage}{0.32\textwidth}
 \centering
 \tikzfig{02-preliminaries/spider-stabiliser-1}
 \caption*{(a)}
 \end{minipage}
 \begin{minipage}{0.32\textwidth}
 \centering
 \tikzfig{02-preliminaries/spider-stabiliser-2}
 \caption*{(b)}
 \end{minipage}
 \begin{minipage}{0.32\textwidth}
 \centering
 \tikzfig{02-preliminaries/pauli-web-example}
 \caption*{(c)}
 \end{minipage}
 \caption{(a) A spider stabiliser of a single spider. (b) The same spider stabiliser in a larger diagram. (c) A Pauli web on a diagram.}
    \label{fig:spider-stabiliser}
\end{figure}

For example,~\cref{fig:spider-stabiliser}~(a) shows a spider stabiliser of a single spider, while~\cref{fig:spider-stabiliser}~(b) shows the same spider stabiliser in a larger diagram.

Based on this, we define:
\begin{definition}[Pauli web]
 Let $D$ be a ZX diagram with edges $E$.
 An element $P \in \paulifaults{|E|}$ is a \textit{Pauli web} if for each spider $s$ of $D$, $P$ restricted to the neighbourhood of $s$ is a spider stabiliser of $s$.
\end{definition}
We can view a Pauli web $P = \bigotimes P_i$ as highlighting the edge $i \in 1,\dots,|E|$ of a diagram as red if $P_i = X$, green if $P_i = Z$, and red and green if $P_i = Y$.

It is important to note that not every spider stabiliser is a valid Pauli web in the context of the entire diagram.
Instead, we require that locally, restricted to the neighbourhood of each spider, the Pauli web acts like a spider stabiliser.
For example, \cref{fig:spider-stabiliser}~(c) shows a Pauli web that includes the spider stabiliser of \cref{fig:spider-stabiliser}~(b) and additionally has a non-trivial action on other edges.
In contrast, \cref{fig:spider-stabiliser}~(b) would not be a valid Pauli web as it does not act like a spider stabiliser on some of the spiders, such as the green spider at the top.

Using this, we can recover the usual definition of Pauli webs as in~\autocite{bombinUnifyingFlavorsFault2024,rodatzFloquetifyingStabiliserCodes2024}:
\begin{proposition}[Pauli Web]
 Let $D$ be a ZX diagram with edges $E$.
    $P \in \paulifaults{|E|}$ is a Pauli web if and only if:
 \begin{itemize}
        \item a spider with phase in $\{0,\pi\}$ satisfies:
 \begin{itemize}
                \item $P$ highlighting an \textit{even} number of its legs in its own colour \textit{and all or none} in the opposite colour
 \end{itemize}
        \item a spider with phase in $\{-\frac{\pi}{2},\frac{\pi}{2}\}$ satisfies:
 \begin{itemize}
                \item $P$ highlighting an \textit{even} number of its legs in its own colour \textit{and none} in the opposite colour, \textbf{or}
                \item $P$ highlighting an \textit{odd} number of its legs in its own colour \textit{and all} in the opposite colour.
 \end{itemize}
 \end{itemize}
\end{proposition}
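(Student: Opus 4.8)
The plan is to prove both directions of the biconditional by unpacking the definition of a Pauli web, which requires that $P$ restricted to the neighbourhood of each spider $s$ be a spider stabiliser of $s$. The key reduction is therefore purely local: fix a single spider $s$ of a given colour and phase, and characterise exactly which Pauli assignments to its legs stabilise it in the sense of \Cref{def:stabilisers-of-spider}. Once this local characterisation is established for each phase class, the global statement follows immediately, since being a Pauli web is by definition the conjunction of the local stabiliser conditions over all spiders.

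First I would fix conventions: work with a single $Z$-spider (the $X$-spider case following by the colour symmetry already noted for the axioms), and recall that highlighting a leg red encodes an $X$ action, green a $Z$ action, and red-and-green a $Y$ action. The crucial computational facts are how $X$ and $Z$ Paulis propagate through a $Z$-spider. A $Z$ on any single leg commutes through and is a stabiliser for the phase-$0$ and phase-$\pi$ spiders, while for a single leg an $X$ flips $\ket{0}^{\otimes n}\bra{0}^{\otimes m}$ and $\ket{1}^{\otimes n}\bra{1}^{\otimes m}$; copying $X$ through the spider shows that an $X$ on \emph{every} leg is a stabiliser (up to the phase factor), whereas an $X$ on a proper nonempty subset is not. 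This gives, for a phase in $\{0,\pi\}$: green legs must be even in number (each green leg contributes a $Z$, and a $Z$ acts as $\pm 1$ on the two terms, so an even count preserves the spider, parity governing the sign), and red legs must be all-or-none (the $X$-copy rule).

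Next I would treat the phase-$\pm\tfrac{\pi}{2}$ case, where the phase factor $e^{\pm i\pi/2} = \pm i$ interacts nontrivially with the Pauli action. Here a $Z$ still multiplies the $\ket{1}$-term by $-1$, and an all-legs $X$ swaps the two terms, but the swap now also conjugates the phase $\pm i \mapsto \mp i$, so a bare all-$X$ is no longer a stabiliser; it must be compensated by an odd number of $Z$'s (an odd green count) to restore the phase. Symmetrically, an even green count is only a stabiliser when no legs are red. This yields precisely the two disjunctive clauses in the statement: even-same-colour-and-none-opposite, or odd-same-colour-and-all-opposite. A clean way to organise this is to note that a single-leg $Y = iXZ$ and to track the accumulated scalar and term-swap as a $\mathbb{Z}_2 \times \mathbb{Z}_2$ datum (whether the two basis terms are swapped, and the relative sign/phase between them), then demand both coordinates return to the identity.

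I expect the main obstacle to be the phase-$\pm\tfrac{\pi}{2}$ bookkeeping, specifically getting the interaction between the $\pm i$ phase, the term-swap induced by all-legs $X$, and the sign contributed by the green legs exactly right so that the parity conditions come out as stated rather than off by one. The even/odd and all/none conditions must be checked to be not merely sufficient but necessary; necessity requires arguing that any other leg-assignment produces a scalar multiple $D^P = c\,D$ with $c \neq 1$ (or swaps the two rank-one terms without compensation), hence fails $D^P = D$. Handling the degenerate cases $m=0$ or $n=0$ (states and effects) and confirming the argument is insensitive to how legs are split between inputs and outputs — which is justified by \OCM{} — is routine but should be acknowledged. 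The $X$-spider half of each case is obtained verbatim by the colour swap, so no separate argument is needed.
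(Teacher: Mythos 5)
Your proposal is correct and takes essentially the same route as the paper: the paper's proof is a one-line observation that the listed conditions are exactly the local spider-stabiliser conditions of the definition, and your plan simply carries out the per-spider Pauli-propagation computation (including the $\pm\tfrac{\pi}{2}$ phase bookkeeping) that the paper leaves implicit. One slip to correct: a single $Z$ on one leg of a phase-$0$ or phase-$\pi$ $Z$-spider is \emph{not} a spider stabiliser (it flips the sign of the $\ket{1}$-term, i.e.\ adds $\pi$ to the phase); only even numbers of same-colour legs work, as your own parity count correctly concludes in the very next clause, so that sentence should be reworded rather than relied upon.
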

\begin{proof}
 The constraints on the neighbourhood of spiders guarantee exactly that the corresponding restrictions of $P$ to the legs of each spider are spider stabilisers.
 Therefore, $P$ is a Pauli web.
 Similarly, all Pauli webs must satisfy these constraints.
\end{proof}

As Pauli webs, locally restricted to a spider, correspond to spider stabilisers, we can `fire' spiders according to a Pauli web.
As we will care about the global phase change when firing a Pauli web, we have to be a little bit more careful about Pauli $Y$.
Thus, we introduce a Pauli $Y$ box, which is defined as $Y = iZX$.
We have:
\begin{definition}[Firing a diagram according to a Pauli web]
 Let $D$ be a ZX diagram and $P$ be a Pauli web on $D$.
 Then we say we \textit{fire all spiders in $D$ according to $P$} when we:
   \begin{enumerate}
      \item Surround all spiders with Pauli $Z$'s on edges highlighted in green.
      \item Surround all spiders with Pauli $X$'s on edges highlighted in red.
      \item On all the internal edges, merge the newly fired Paulis.
      \item Recombining Pauli $X$ and Pauli $Z$ spiders on the boundary into Pauli $Y$.
   \end{enumerate}
\end{definition}
As all the internal edges are connected to two spiders, any highlighted internal edge will have two Paulis of the same type introduced on it, which are removed in the second step.
Thus, after merging, the only Paulis left are $Z$ and $X$ Paulis on the boundary edges.
If we have a $Z$ and an $X$ left on the same edge, we recombine them into a $Y$.
For example, we have:
\[\tikzfig{02-preliminaries/firing-Pauli-web}\]

Firing a spider according to the Pauli web changes the underlying linear map at most by a global phase.
We can see this by considering that a Pauli web locally corresponds to stabilisers of the spiders, so the first two steps may at most change the global phase.
Similarly, merging Paulis does not change the semantics of the diagram, as Pauli $X$ and Pauli $Z$ are self-adjoint and thus square to the identity.
Finally, merging the Paulis on the boundary absorbs global phases of $i$ into the $Y$ box, leaving at most a phase of $1$ or $-1$.

After firing a diagram according to a Pauli web, having cancelled out all the Paulis on the internal edges, all we are left with are Paulis on the boundary edges.
We just reasoned that firing the web does not change the diagram's semantics, so we can freely introduce or remove these Paulis on the boundaries --- they stabilise the diagram:
\begin{definition}[Stabilising Pauli webs]
 A Pauli web on some diagram is called a stabilising Pauli web if it acts non-trivially on at least one boundary edge.
\end{definition}

Stabilising Pauli webs of a diagram $D$ are in direct correspondence with the stabilisers of the underlying linear map of $D$~\autocite{bombinUnifyingFlavorsFault2024, rodatzFloquetifyingStabiliserCodes2024}; the Paulis remaining on the boundary edges after firing a diagram according to a Pauli web exactly correspond to its stabilisers.

Using the observation that firing the spiders of a diagram according to a web at most changes the global phase up to a factor of $-1$, we define:
\begin{definition}[Sign of a Pauli web]
 Let $D$ be a ZX diagram and $P$ be a Pauli web on $D$.
 Then the sign of $P$ is the change in global phase we get when firing all the spiders of $D$ according to $P$.
\end{definition}

For stabilising Pauli webs, the sign of the Pauli web indicates whether the whole diagram is a $+1$ or $-1$ eigenstate of the corresponding stabiliser.
An essential nuance here is that the sign of a Pauli web is independent of the actual global phase of the diagram.
Two ZX diagrams that are equivalent up to a non-zero global phase must have the same sign for all stabilising Pauli webs.
On the flip side, if two ZX diagrams have the same stabilising Pauli webs but different signs for those webs, they are not equivalent, even up to global phase.
For example $\ket{0}$ and $\ket{1}$ both stabilise $Z$, but $\ket{0}$ is a $+1$ eigenstate of $Z$ while $\ket{1}$ is a $-1$ eigenstate of $Z$, so they are not equivalent up to global phase.
While a Pauli web is independent of the actual global phase, to establish its sign, we have to consider the \emph{change} in global phase as we fire the web.
Therefore, reasoning about the sign of a Pauli web is, along with \cref{prop:zero-diagram-disconnect}, one of the few instances where we must consider the global phase of a diagram, while everywhere else in the paper, we can ignore it.

There is a second type of Pauli webs for which this sign is even more essential:
\begin{definition}[Detecting region]
 A non-trivial Pauli web on some diagram is called a detecting region if it acts trivially on all boundary edges.
\end{definition}

For detecting regions, we can show:
\begin{proposition}
   \label{prop:detecting-regions}
   Let $D$ be a ZX diagram with some detecting region $R$ that has a negative sign.
   Then $D$ is equivalent to zero.
\end{proposition}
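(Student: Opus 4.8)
The plan is to fire the diagram $D$ according to the detecting region $R$ and observe that this operation returns the \emph{same} diagram $D$, changed only by the sign of $R$. Since that sign is assumed to be $-1$, this yields the equation $D = -D$, which forces the linear map represented by $D$ to be zero.

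In more detail, I would first recall the mechanics of firing $D$ according to a Pauli web: for each spider $s$ we introduce on the legs of $s$ the local spider stabiliser that $R$ prescribes, and then merge the Paulis meeting on internal edges. Because every internal edge is a leg of exactly two spiders, it receives the same Pauli twice (once from each endpoint), and these merge to the identity; hence no Pauli survives on any internal edge. The only Paulis that could remain sit on the boundary edges, each of which is a leg of a single spider. This is exactly where the hypothesis enters: by definition a detecting region $R$ acts trivially on all boundary edges, so no boundary Paulis are introduced in the first place. Consequently the diagram obtained after firing is, as a ZX diagram, literally $D$ again.

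Next I would track the scalar. By the discussion preceding the definition of the sign of a Pauli web, firing preserves the underlying linear map up to a global phase of $+1$ or $-1$, and this accumulated phase is by definition the sign of $R$. Equating the linear map of $D$ with that of the fired diagram therefore gives $D = \operatorname{sign}(R)\cdot D$. Substituting the assumption $\operatorname{sign}(R) = -1$ produces $D = -D$, i.e. $2D = 0$, and since we work over $\mathbb{C}$ this means $D = 0$, as claimed. Note that none of the weight or fault-boundedness machinery is needed here; the statement is purely about the semantics of Pauli-web firing.

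The argument is short once the firing machinery is in place, so the only real care needed is bookkeeping. The main point to verify is that firing according to $R$ genuinely reproduces the \emph{original} diagram and not merely a semantically equal one: this relies on the exact cancellation of the paired Paulis on every internal edge together with the triviality of $R$ on the boundary. Once that is established, the scalar $\operatorname{sign}(R) = -1$ is the sole difference between the two sides of the identity, and the conclusion $D = 0$ is immediate.
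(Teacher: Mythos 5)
Your proof is correct and follows exactly the paper's own argument: fire all spiders according to $R$, note that the paired Paulis on internal edges cancel while triviality on the boundary ensures nothing survives there, so the fired diagram is literally $D$ again up to the sign of $R$, giving $D = -D$ and hence $D = 0$. The only difference is that you spell out the bookkeeping (pairing on internal edges, boundary triviality) in more detail than the paper does, which is fine.
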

\begin{proof}
   To show that $D = 0$, we will instead show that $D$ is exactly equal to $-D$, not just up to global phase. 
   This can only be the case when $D = 0$.
   For this, we can fire all the spiders of $D$ according to $R$.
   This changes the underlying linear map by a global phase of $-1$.
   $R$ is a detecting region, all highlighted edges are internal and therefore now have two Paulis of the same type.
   These cancel out, meaning we are back to the original diagram.
   But then $D = -D$, so $D = 0$.
\end{proof}
But then, we can make a second observation:
\begin{proposition}
   \label{prop:faults-flip-pauli-webs}
   Let $D$ be a ZX diagram with edges $E$, $F \in \paulifaults{|E|}$ be a fault on $D$.
   Then for all Pauli webs on $P \in \paulifaults{|E|}$ there exists a corresponding Pauli web $P_F$ on $D^F$ that has the same action on the boundary.
   The sign of $P$ is the same as the sign of $P_F$ if and only if $F$ commutes with $P$.
\end{proposition}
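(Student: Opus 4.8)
The plan is to separate the statement into its two assertions. The existence of a corresponding web $P_F$ is essentially free, while the sign condition requires a careful commutation count, which is where the real content lies. For existence I would simply take $P_F = P$. The key point is that the fault $F$ acts only by inserting Pauli boxes on the edges of $D$; it creates and destroys no spiders and leaves the immediate neighbourhood of every spider intact. Since being a Pauli web is, by \cref{def:stabilisers-of-spider} and the subsequent definition, a condition imposed spider-by-spider---namely that the restriction of $P$ to the legs of each spider $s$ be a spider stabiliser of the single-spider subdiagram, which is insensitive to Paulis sitting elsewhere on the edges---the web conditions for $D^F$ coincide exactly with those for $D$. Hence $P$ is a Pauli web on $D^F$, and as an element of $\paulifaults{|E|}$ it is literally the same operator, so its action on the boundary agrees with that of $P$ on $D$.

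For the sign, I would fire the web $P$ on $D^F$ and compare the resulting global phase to that from firing $P$ on $D$. The firing proceeds spider by spider: on each spider $s$ the introduced local Paulis $P|_s$ are absorbed with an eigenvalue $\epsilon_s \in \{+1,-1\}$ depending only on $s$ and on $P|_s$. Since the fault changes neither the spiders nor their incident web Paulis, the product $\prod_s \epsilon_s = \mathrm{sign}(P)$ is identical in both diagrams. The only new contribution comes from edges carrying a fault Pauli $F_i$: the web Paulis introduced by firing are emitted adjacent to the spiders, and to bring them into their canonical position one must commute them past $F_i$. On an internal edge the two introduced copies of $P_i$ sandwich $F_i$ and merge as $P_i F_i P_i = (-1)^{c_i} F_i$, leaving a single factor $(-1)^{c_i}$, where $c_i = 1$ exactly when $P_i$ and $F_i$ anticommute; on a boundary edge the single introduced $P_i$ must likewise be transported past $F_i$ to reach its boundary position, contributing the same $(-1)^{c_i}$. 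Collecting all edges yields $\mathrm{sign}(P_F) = \mathrm{sign}(P)\cdot(-1)^{\sum_i c_i}$.

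The conclusion then follows from the elementary fact that two Paulis $F,P \in \paulifaults{|E|}$ commute if and only if they anticommute on an even number of tensor factors, i.e.\ $\sum_i c_i$ is even. Thus $(-1)^{\sum_i c_i} = +1$ precisely when $F$ commutes with $P$, which is exactly the asserted equivalence between $\mathrm{sign}(P_F) = \mathrm{sign}(P)$ and $[F,P]=0$.

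I expect the main obstacle to be the bookkeeping on boundary edges. The naive firing picture makes only the internal-edge commutations visible, and stopping there would incorrectly make the sign depend on internal anticommutations alone. The crucial observation is that a boundary fault contributes on exactly the same footing as an internal one: the web Pauli emitted at a spider must still be carried past the boundary fault to reach its canonical boundary position. A clean way to make this uniform, and to double-check the count, is to argue at the level of states: splitting $F$ into its internal and boundary parts, one handles the internal part by the firing argument above---so that $P|_\partial$ stabilises the internally faulted diagram with eigenvalue $\mathrm{sign}(P)(-1)^{(\text{internal count})}$---and then commutes this boundary stabiliser past the boundary fault, producing the matching $(-1)^{(\text{boundary count})}$ and hence the total $(-1)^{\sum_i c_i}$ over all edges.
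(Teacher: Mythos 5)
Your proposal is correct and follows essentially the same argument as the paper: both compare the firing of the web on $D$ and on $D^F$ and show that each faulty edge contributes a factor $(-1)^{c_i}$ determined by the local (anti)commutation, so the signs agree iff $F$ and $P$ commute globally. The only cosmetic difference is that the paper treats the fault Paulis as new two-legged $\pi$-spiders splitting the edges (so $P_F$ extends over both halves and the extra signs arise from firing these new spiders), whereas you keep $P_F = P$ and obtain the identical signs by commuting the fired web Paulis past the fault operators via $P_i F_i P_i = (-1)^{c_i} F_i$ --- the same computation in different packaging.
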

\begin{proof}
 Let $P$ be a Pauli web on $D$.
 Then we can construct a corresponding Pauli web on $D^F$.
 We can observe that $D^F$ has the same edges as $D$, just that some of the edges of $D$ are split into two by the faults.
 Therefore, we can construct a Pauli web $P_F$ that acts the same as $P$ on all edges, even the split ones.
 On all the original spiders of $D$, $P_F$ clearly satisfies the Pauli web constraints of locally corresponding to a spider stabiliser.
 For the new spiders introduced by the fault, it also satisfies these constraints as the faults are two-legged $\pi$-phased spiders and $P_F$ has the same action on both sides of that spider.
 Therefore, $P_F$ is a corresponding Pauli web to $P$ and clearly has the same action on the boundary as $P$.

 The sign of $P_F$ is obtained by firing all the spiders of $D^F$ according to $P_F$.
 For all the spiders that are already present in $D$, this firing has the same effect as firing the spiders of $D$ according to $P$.
 However, additionally, we have to fire the spiders newly introduced by $F$.
 For each edge, if $F$ commutes with $P$ on that edge, firing that spider does not change the global phase.
 However, if $F$ anticommutes with $P$, firing the fault on that edge according to $P_F$ flips the global phase.
 Therefore, if $F$ commutes with $P$, the global phase will be changed evenly many times, and therefore the signs of $P$ and $P_F$ are the same.
 However, if they anticommute, the sign of $P_F$ will be the opposite of the sign of $P$.
\end{proof}

For stabilising Pauli webs, anticommuting faults flip the sign of the corresponding stabilisers, thereby changing whether $D^F$ lives in the same eigenspace of the stabiliser as $D$.
For detecting regions, the commutative relationship between the fault and the regions tells us whether the fault is detectable.
If we assume $D$ to be non-zero, by \cref{prop:detecting-regions}, all detecting regions in $D$ must have a positive sign.
But then, any fault that anti-commutes with a detecting region will make the sign of that region in $D^F$ negative.
Therefore, by \cref{prop:detecting-regions}, $D^F$ must be zero, meaning $F$ is detected.
Thus, detecting regions are a crucial tool for reasoning about detectable faults.

    \section{The Rules and Strategy}\label{sec:the-rules-and-strategy}
The main focus of this paper is to provide a diagrammatic rewrite system for the ZX calculus that is sound and complete for fault equivalence, of which the latter is often significantly harder to prove.
Historically, showing completeness of a ZX rewrite system usually follows one of two general approaches: Either one reduces one calculus to another which is already known to be complete, \eg~\autocite{backens2017AsimplifiedStabiliserZXCalculus,wangKang2017ZXCalculusUniversalComplation,vilmart2019NearMinimalAxiomatisation}, or one provides a unique normal form for diagrams from the calculus reachable through its rewrite system, \eg~\autocite{backens2014ZXCalculusCompleteCliffordT,KissingerWetering2024Book}.
As fault equivalence is a novel idea introduced only recently, there are no rewrite systems known to be complete.
Therefore, we will show the completeness of the proposed rule set via a unique normal form.

\subsection{The Normal Form and Proof Sketch}
Completeness proofs via a unique normal form consist of showing that any diagram can be rewritten into a normal form that is unique for each equivalence class of diagrams, following some notion of equivalence.
Then, any two equivalent diagrams must have the same normal form, reachable through two potentially different sequences of rewrites.
All ZX rewrites are undirected, so we can invert one of those sequences to obtain a sequence that deforms one diagram into the other diagram, effectively showing the equivalence diagrammatically.

We now begin constructing our unique normal form.
For two diagrams to be fault-equivalent, they have to satisfy two conditions: They have to implement the same linear map, and the sets of potentially generated fault effects along with their minimum weight must be equivalent.
The first condition is exactly the \textit{semantic equivalence} between ZX diagrams, which is a core feature of the calculus, and many (normal) forms have surfaced for it~\autocite{KissingerWetering2024Book}.
If we are able to find a diagrammatic representation to address both fault equivalence conditions individually, we are able to reuse these forms for the semantic part of the diagram.

For this, we make use of \textit{fault gadgets}, a tool that can find a diagrammatic representation for arbitrary faults on ZX diagrams~\autocite{rodatz2025faulttoleranceconstruction}.
Thus, for a diagram $D$ and an associated noise model $\mathcal{N}$, we can find another diagram $D_\mathcal{N}$ describing both.
However, going beyond the work of~\autocite{rodatz2025faulttoleranceconstruction} in \cref{sec:separating-semantics-and-noise}, we show how fault gadgets can be fault-equivalently manipulated and moved.
This allows us to separate the fault gadgets from the remainder of the diagram, such that we obtain a fully idealised diagram with non-trivial semantics followed by a noisy diagram consisting of fault gadgets with trivial semantics:
\[ \tikzfig{03-rules/composite-normal-form-template} \]
Our composite normal form consists of the unique \textit{reduced AP form} for the idealised part of the diagram from~\autocite{KissingerWetering2024Book}, followed by a novel normal form on the noisy part of the diagram as specified in \cref{sec:normal-form}.

It then remains to show that the normal form for the diagrammatic description of noise is reachable through our proposed set of rewrites, and indeed unique for each fault equivalence class of diagrams.
We continue in this section with introducing our rewrites, and show reachability of the normal form in \cref{sec:normal-form}.
Then, if both parts of the composite normal form are unique, they yield an overall unique normal form and thus completeness of the rewrite system.

\subsection{The Rules}
\begin{figure}
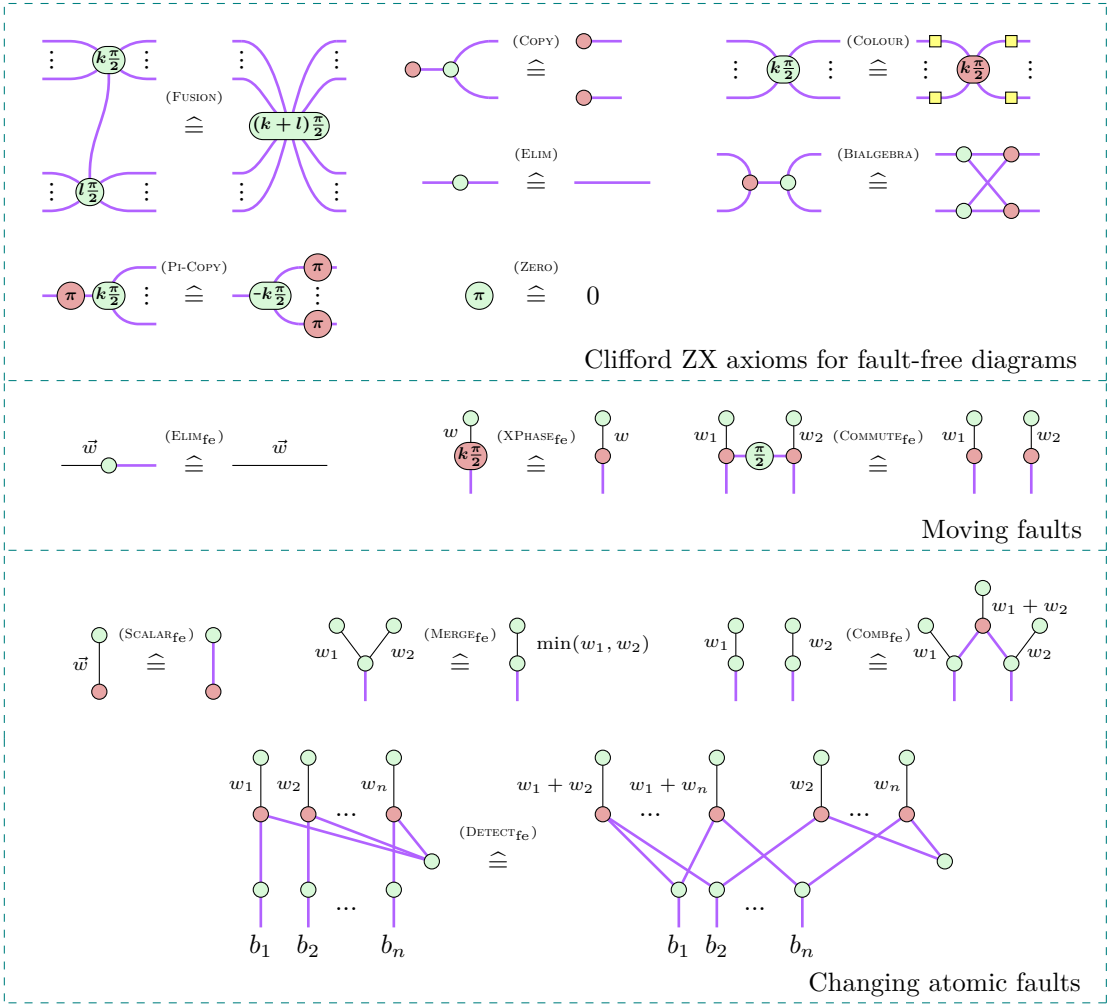

    \tikzfig{03-rules/complete-axioms}
    \caption{The list of rules which we propose to be complete for fault equivalence.}
    \label{fig:complete-axioms}
\end{figure}
We present our set of rules, proposed to form a complete calculus, in \cref{fig:complete-axioms}.
These rules can be divided into the following three classes:
\begin{enumerate}
    \item \textbf{Fault-Free Clifford}: All rules in this class are fully idealised variants of the basic ZX axiomatisation that we use.
        Among other things, they will be used to obtain the normal form for the semantic part of the separated diagram.
    \item \textbf{Moving Fault Gadgets}: We require additional rules to move fault gadgets throughout a diagram.
        Along with $\TextFEElim$, which enables applications of fully idealised rules next to noisy settings, we introduce $\TextFEXPhase$ and $\TextFECommute$, which allow us to ignore global phases introduced when moving and commuting fault gadgets, respectively.
    \item \textbf{Changing Atomic Faults}: To obtain our novel normal form, we will have to fault-equivalently change the atomic fault set and the weights associated with its members.
        Using only descriptions of $Z$ flips, rules from this class describe intuitive changes to the atomic fault set.

        The first rule $\TextFEScalar$ allows ignoring faults that have no observable effect.
        The second rule $\TextFEMerge$ removes equivalent faults, keeping the one with lower weight.
        The third rule $\TextFEComb$ introduces combinations of faults, again in a way that does not affect the induced weight function.
        Finally, the last rule $\TextFEDetect$ is essential to removing detectable atomic faults that are irrelevant for fault equivalence.
\end{enumerate}

It is essential that these rules are sound, \ie that they preserve fault equivalence, as otherwise we could accidentally equate diagrams that are not fault-equivalent.
Thus, we show:
\begin{restatable}{proposition}{soundnessprop}
    The rules displayed in \cref{fig:complete-axioms} are sound with respect to fault equivalence.
\end{restatable}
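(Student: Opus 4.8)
The plan is to prove soundness by establishing each of the rules in \cref{fig:complete-axioms} individually, since fault equivalence is an equivalence relation and soundness of each rule composes through the compositionality proposition established earlier. Because the rules split into three natural classes, I would organise the proof the same way. For the \textbf{Fault-Free Clifford} rules, soundness is essentially immediate: each such rule is a fully idealised variant of a standard ZX axiom, so by \cref{prop:fully-idealised-equivalence} semantic equality of the two sides (which holds by the completeness theorem for the Clifford ZX calculus) upgrades directly to fault equivalence, since a fully idealised diagram admits no atomic faults and thus trivially fault-bounds and is fault-bounded by any semantically equal diagram.

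For each remaining rule I would exhibit, for every undetectable fault $F$ below the relevant weight on one side, a witnessing fault $F'$ on the other side satisfying $D^{F} = D'^{F'}$ with $\weightfunc(F') \leq \weightfunc(F)$, and then repeat in the reverse direction to obtain $\FaultEq$. For the \textbf{Moving Fault Gadgets} rules ($\TextFEElim$, $\TextFEXPhase$, $\TextFECommute$), the key observation is that moving or commuting a fault gadget changes the diagram only by a global phase and a relabelling of which edge a fault sits on; I would argue that this induces a weight-preserving bijection between faults on the two sides whose effects agree up to the ignored global phase, so the two conditions of $w$-fault boundedness are met in both directions for every $w$. For the \textbf{Changing Atomic Faults} rules, I would treat each separately: $\TextFEScalar$ removes a fault with trivial effect, so the induced weight function on the generated group is unchanged; $\TextFEMerge$ identifies two faults with the same effect and keeps the smaller weight, which by \cref{def:induced-weight} leaves $\weightfunc$ unchanged on the generated group; $\TextFEComb$ adds a product of existing atomic faults with an appropriately chosen weight, again leaving the induced weight function invariant; and $\TextFEDetect$ removes a detectable atomic fault, which is sound precisely because detectable faults are excluded from both bullets of the $w$-fault boundedness definition.

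The main obstacle I anticipate is $\TextFEDetect$, and more generally making the weight bookkeeping watertight. Removing a detectable atomic fault is safe for faults that are themselves detectable, but I must ensure that removing it does not raise the induced weight of any \emph{undetectable} fault that previously used the removed generator in its minimal-weight factorisation; the argument needs that any such undetectable product can be refactored, after discarding the detectable generator, into a product of the surviving generators of no greater weight — essentially because a product of undetectable effect that factored through a detectable generator must have used it an even number of times or in combination that can be absorbed, using the commutation-with-detecting-region characterisation from \cref{prop:faults-flip-pauli-webs}. I would isolate this as the technical heart of the proof and handle the other rules as comparatively routine applications of \cref{def:induced-weight} and \cref{prop:fully-idealised-equivalence}. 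Finally, since each rule preserves $\FaultEq$ and $\FaultEq$ is transitive and compositional by the earlier propositions, any derivation built from these rules preserves fault equivalence, giving overall soundness.
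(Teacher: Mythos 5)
Your overall structure matches the paper's proof: a rule-by-rule case analysis in which the fully idealised Clifford rules are dispatched by \cref{prop:fully-idealised-equivalence}, and each remaining rule is handled by exhibiting a weight-non-increasing correspondence of faults between the two sides. Your treatment of $\TextFEElim$, $\TextFEXPhase$, $\TextFECommute$, $\TextFEScalar$, $\TextFEMerge$, and $\TextFEComb$ is essentially the paper's argument.

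The gap is in $\TextFEDetect$, which you correctly single out as the technical heart but then mischaracterise. The rule is \emph{not} pure deletion of a detectable atomic fault: its right-hand side simultaneously introduces, for each remaining atomic fault connected to the same detecting-region spider, a new atomic fault whose effect is the pairwise product with the removed fault and whose annotation is the corresponding weight sum (this is also how \cref{prop:detecting-region-popping} later uses the rule). Your proposed argument --- that any undetectable product which used the removed generator can be refactored into the \emph{surviving} generators at no greater weight --- fails for pure deletion. Concretely, take two atomic $Z$ flips $Z_1, Z_2$, each anticommuting with the same detecting region, with weights $w_1 = w_2 = 1$: the product $Z_1 Z_2$ is undetectable and has a nontrivial effect, yet after deleting $Z_1$ the surviving atomic faults generate only $\{I, Z_2\}$, so no fault on the right-hand side reproduces $D^{Z_1 Z_2}$, and fault equivalence is violated. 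The parity intuition ("used it an even number of times") cannot rescue this, since in $\paulifaults{n}$ every generator squares to the identity, so the problematic case is exactly the odd one. The paper's proof goes through precisely because the compensating combination faults are built into the rule: it first shows $Z_{\LHS,1}$ is detectable via the detecting region formed by the free-floating spider, and then observes that the sum-annotated atomic faults on the \RHS generate exactly the faulty diagrams of all undetectable combinations of $Z_{\LHS,1}$ with the other atomic faults, with matching induced weights. Without incorporating those combination faults into the rule (and into your weight bookkeeping), your proof of this case cannot be completed.
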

\begin{proof}
    See \cref{appendix:missing-proofs}.
\end{proof}
Since fault equivalence is compositional and transitive, we can freely apply these rules in a larger context, and any sequence of sound rewrites is still sound.

Finally, we are now able to state our main result:
\begin{theorem*}
    Let $D, D'$ be Clifford ZX diagrams with respective noise models $\mathcal N, \mathcal N'$ such that $D$ under $\mathcal N$ is fault-equivalent to $D'$ under $\mathcal N'$.
    Further, let $D_{\mathcal{N}}$ be the diagram describing both $D$ and $\mathcal{N}$ using fault gadgets, and similar for $D'_{\mathcal{N}'}$.

    Then, only using the rules in \cref{fig:complete-axioms}, $D_{\mathcal{N}}$ can be deformed into $D'_{\mathcal{N}'}$ and vice versa.
\end{theorem*}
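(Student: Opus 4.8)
The plan is to follow the unique normal form strategy outlined above: since every rule in \cref{fig:complete-axioms} is bidirectional, it suffices to show that both $D_{\mathcal{N}}$ and $D'_{\mathcal{N}'}$ can be rewritten into one and the same composite normal form using only these rules. Given such a common target, I would invert the rewrite sequence taking $D'_{\mathcal{N}'}$ to the normal form and concatenate it after the sequence taking $D_{\mathcal{N}}$ to the normal form, obtaining a derivation of $D'_{\mathcal{N}'}$ from $D_{\mathcal{N}}$, and vice versa. Since the proposed rules are sound, every step preserves fault equivalence, so the construction is legitimate.

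First I would separate semantics from noise. Using the fault-gadget manipulations developed in \cref{sec:separating-semantics-and-noise}, together with the Moving Fault Gadget rules $\TextFEElim$, $\TextFEXPhase$, and $\TextFECommute$, I would rewrite $D_{\mathcal{N}}$ into the composite shape depicted above: a fully idealised diagram carrying the entire non-trivial semantics, followed by a purely noisy diagram of fault gadgets with trivial semantics. The same is done for $D'_{\mathcal{N}'}$. This step relies on fault boundedness being compositional, so that normalising the two halves independently still yields a fault-equivalent whole.

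Next I would canonicalise each half separately. For the idealised semantic half, the Fault-Free Clifford rules are exactly the fully idealised ZX axioms, so by the completeness of the Clifford ZX calculus I can reduce it to the reduced AP form of \autocite{KissingerWetering2024Book}; since fault equivalence forces $D = D'$ as linear maps, both diagrams reach the identical reduced AP form. For the noisy half I would apply the Changing Atomic Faults rules: $\TextFEScalar$ to discard faults with no observable effect, $\TextFEMerge$ to collapse pairs of equivalent faults to the lighter one, $\TextFEComb$ to introduce the required combinations, and $\TextFEDetect$ to remove detectable atomic faults. The procedure of \cref{sec:normal-form} arranges these into a unique normal form on the fault gadgets.

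The main obstacle is the noisy half: establishing that the novel fault-gadget normal form is both reachable through the four Changing Atomic Faults rules and a complete invariant of the fault equivalence class. Unlike semantic equivalence, fault equivalence depends on the set of undetectable fault effects together with the induced minimum-weight function of \cref{def:induced-weight}, so the normal form must record exactly this data and nothing more. I would argue reachability by showing each rule performs a weight-preserving, fault-equivalent simplification that strictly reduces some well-founded measure on the atomic fault set, so that the process terminates, and I would argue uniqueness by proving that two fault-gadget diagrams in normal form that are fault-equivalent must already be equal, i.e. the normal form is a faithful encoding of the equivalence-class data. Given uniqueness of both halves, the composite normal forms of $D_{\mathcal{N}}$ and $D'_{\mathcal{N}'}$ coincide, which closes the argument.
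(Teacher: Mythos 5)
Your proposal is correct and follows essentially the same route as the paper: rewrite both $D_{\mathcal{N}}$ and $D'_{\mathcal{N}'}$ into the composite normal form (reduced AP form for the idealised semantic part, the fault-gadget normal form of \cref{sec:normal-form} for the noisy part), argue this form is unique per fault-equivalence class, and then exploit bidirectionality of the rules to invert one rewrite sequence. This matches the paper's proof of \cref{thm:completeness}, which combines \cref{cor:fault-normal-form-exists} (reachability) with \cref{prop:normal-form-unique-per-class} (uniqueness).
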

We revisit this theorem in \cref{sec:completeness} with \cref{thm:completeness}, where we will prove it as well.
Afterward, building on the completeness result, we will propose additional rules for obtaining a sound and complete calculus for $w$-fault equivalence, fault boundedness, and $w$-fault boundedness.
    \section{Separating Semantics and Noise}\label{sec:separating-semantics-and-noise}
To bring any diagram into its unique normal form, we follow two steps: First, push all faults to the boundary, then normalise and reduce them as much as possible.
This section focuses on the first step, pushing the faults outside the diagram.
We use fault gadgets~\autocite{rodatz2025faulttoleranceconstruction} to represent faults diagrammatically.
This lets us express any diagram under any noise model as fault-equivalent to one under edge-flip noise.
Extending the work of~\autocite{rodatz2025faulttoleranceconstruction}, we show that fault gadgets can furthermore be used to diagrammatically transform faults into equivalent ones.

\subsection{Fault Gadgets}\label{subsec:fault-gadgets}
To reason about faults in spacetime, we use \textit{fault gadgets}~\autocite{rodatz2025faulttoleranceconstruction}.
First, we define:
\begin{definition}[Pauli boxes]
    \label{def:pauli-boxes}
 The following four diagrams are the \textit{Pauli boxes}, associated with the annotated Pauli rotation:
    \[ \tikzfig{04-pushing-out/pauli-boxes} \]
\end{definition}

Using these, we define:
\begin{definition}
 Let $F$ be a fault of weight $w$ for a ZX diagram $D$ with edges $E$.
 A \textit{fault gadget} for $F$ is constructed by including fault-free Pauli boxes, called \textit{targets}, of the types and on the edges indicated by $F$.
 These boxes are connected via idealised edges to a red spider.
 Further, the red spider is connected via a regular edge to a green spider, called the \textit{spawning edge} of the fault gadget.
 The spawning edge is annotated with $w$.
\end{definition}
For example, the following diagram consists of four standalone wires, \ie it implements a four-qubit identity.
Suppose we now consider the fault $I \otimes Z \otimes X \otimes Y$ that has weight $5$.
The fault gadget corresponding to this fault is:
\[ \tikzfig{04-pushing-out/gadget-example} \]
The identity Pauli box is disconnected, so its one-legged spider can be merged into the red spider.
Therefore, for the remainder of this work, we choose not to draw the identity Pauli box at all.

We can now justify our focus on edge-flip noise:
\begin{proposition}
    \label{prop:weighted-edge-flip-universality}
 Let $D$ be a ZX diagram and $\mathcal{N}$ be any weighted noise model for $D$.
 Then there exists a ZX diagram $D_{\mathcal{N}}$ along with a weighted \textit{edge-flip} noise model $\mathcal{N}'$ such that $D = D_{\mathcal{N}}$ and $D$ under $\mathcal{N}$ is fault-equivalent to $D_{\mathcal{N}}$ under $\mathcal{N}'$.
\end{proposition}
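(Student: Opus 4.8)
The plan is to build $D_{\mathcal{N}}$ directly from the fault gadget construction and then exhibit an explicit weight- and effect-preserving bijection between the faults of the two noise models. First I would form $D_{\mathcal{N}}$ by attaching, for each atomic fault $F_i \in \mathcal{F}$, the fault gadget representing $F_i$ with its spawning edge annotated by $\mathcal{N}(F_i)$. Because every target Pauli box is idealised and the spawning-edge green spider contributes only a trivial identity-like factor when no fault is present, the attached gadgets do not alter the underlying linear map, so $D = D_{\mathcal{N}}$ follows from the fully idealised Clifford rules together with \OCM. The accompanying noise model $\mathcal{N}'$ is then defined to consist solely of the $Z$-flips on the spawning edges, assigning the flip $G_i$ on the $i$-th spawning edge the weight $\mathcal{N}(F_i)$. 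Since the targets carry no faults and each $G_i$ acts non-trivially on a single edge, $\mathcal{N}'$ is by construction an edge-flip noise model.

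The heart of the argument is the firing behaviour of a gadget. I would show that applying a single $Z$ to the $i$-th spawning edge is equivalent to applying the Pauli $F_i$ to the corresponding edges of $D$, \ie $D_{\mathcal{N}}^{G_i} = D^{F_i}$. Concretely, a $Z$ on the spawning edge turns the one-legged green spider into a $\pi$-phased spider; by the $\pi$-copy rule this $\pi$ is copied through the red distribution spider onto the idealised edges to each target, so every target box fires its Pauli onto the respective edge of $D$, producing exactly $F_i$. Composing this over the independent gadgets, a product of spawning-edge flips $\prod_{i} G_i$ produces the product of the associated Paulis $\prod_i F_i$, giving $D_{\mathcal{N}}^{\prod_i G_i} = D^{\prod_i F_i}$ for every choice of generators.

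With this correspondence established, I would set up the bijection $\phi\colon \langle \mathcal{F}\rangle \to \langle\{G_i\}\rangle$ determined by $F_i \mapsto G_i$. Because $\phi$ matches generators of equal atomic weight, the induced weight functions agree, $\weightfunc_{\mathcal{N}}(F) = \weightfunc_{\mathcal{N}'}(\phi(F))$, and because firing preserves effects we have $D^F = D_{\mathcal{N}}^{\phi(F)}$ for all $F$. Fault equivalence is then immediate: given any undetectable fault $F$ on $D$, the fault $\phi(F)$ on $D_{\mathcal{N}}$ has equal weight and the same effect, establishing $D \FaultBnd D_{\mathcal{N}}$; running $\phi^{-1}$ in the reverse direction gives $D_{\mathcal{N}} \FaultBnd D$. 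Detectable faults are handled vacuously, and since $D^F = D_{\mathcal{N}}^{\phi(F)}$, detectability is preserved by $\phi$ in any case. Together these two boundedness statements yield $D \FaultEq D_{\mathcal{N}}$.

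I expect the main obstacle to be the careful justification of the firing step $D_{\mathcal{N}}^{G_i} = D^{F_i}$ and its compositional version, since this is where the precise definition of the Pauli boxes and the $\pi$-copy mechanism through the red spider must be invoked, and where one must confirm that distinct gadgets do not interfere. The remaining bookkeeping about weights and the bijection $\phi$ is routine once this effect correspondence is in place.
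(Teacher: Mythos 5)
Your construction and overall strategy are exactly the paper's: attach one fault gadget per atomic fault $F_i$ with its spawning edge annotated by $\mathcal{N}(F_i)$, observe that the gadgets leave the underlying linear map untouched, and match faults on spawning edges with faults on $D$ via the firing mechanism. The paper is, if anything, less explicit than you are — it cites the fault-gadget literature for the firing step and says the faults "naturally correspond" — so the extra detail on the $\pi$-copy mechanism is welcome.

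However, packaging the correspondence as "the bijection $\phi\colon \langle \mathcal{F}\rangle \to \langle\{G_i\}\rangle$ determined by $F_i \mapsto G_i$" is wrong in general, and the step "running $\phi^{-1}$ in the reverse direction" inherits the problem. The spawning-edge flips act on pairwise distinct edges, so $\langle\{G_i\}\rangle$ is free of order $2^{|\mathcal{F}|}$, whereas $\langle \mathcal{F}\rangle$ is strictly smaller whenever the atomic faults satisfy relations: for $\mathcal{F} = \{X_e, Z_e, Y_e\}$ on a single edge, $\langle\mathcal{F}\rangle$ has order $4$. In that case no bijection exists, $\phi$ is not even well-defined (is $\phi(Y_e)$ equal to $G_3$ or to $G_1 G_2$?), and the asserted equality $\weightfunc_{\mathcal{N}}(F) = \weightfunc_{\mathcal{N}'}(\phi(F))$ is meaningless. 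The repair is short and is what the paper implicitly does: work with decompositions rather than with a group isomorphism. For a fault $F$ on $D$, pick a minimum-weight decomposition $F = \prod_{i\in S} F_i$ and pass to $\prod_{i\in S} G_i$, which has the same effect and weight exactly $\weightfunc_{\mathcal{N}}(F)$; conversely, every fault on $D_{\mathcal{N}}$ has a unique reduced decomposition $\prod_{i\in S} G_i$, and $\prod_{i\in S} F_i$ has the same effect and weight \emph{at most} $\sum_{i\in S}\mathcal{N}(F_i)$ — an inequality, not an equality, which is all that fault boundedness demands. With that adjustment your argument is correct and coincides with the paper's proof.
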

\begin{proof}
 Let $D$ be a ZX diagram and $\mathcal{N}: \mathcal{F} \to \mathbb{N}^+$ be a weighted noise model for $D$.
 To obtain $D_{\mathcal{N}}$ and $\mathcal{N}'$, first idealise all edges within $D$.
 Then add a fault gadget for each atomic fault $F \in \mathcal{F}$ with a weight annotation equal to $\mathcal{N}(F)$ on the spawning edge.

 Fault gadgets do not influence the linear map of the underlying diagram~\autocite{rodatz2025faulttoleranceconstruction}, so we directly have $D = D_{\mathcal{N}}$.
 Additionally, for every fault $F$ on $D$, there exists a corresponding fault in $D_{\mathcal{N}}$ on its spawning edge of the corresponding fault gadget.
 Thus, for all $F$ in $\mathcal{F}$, there exists a fault $F'$ on $D_{\mathcal{N}}$ with $wt_{\mathcal{N}'}(F') = wt_{\mathcal{N}}(F)$ such that $D^F = D_{\mathcal{N}}^{F'}$.
 Similarly, by construction, the only allowed faults in $D_{\mathcal{N}}$ are on spawning edges of fault gadgets and naturally correspond to a fault in $D$.
 Therefore, the two diagrams are fault-equivalent.
\end{proof}

For our completeness result, we only reason about diagrams in this form, meaning that prior to the diagrammatic reasoning, some straightforward pre-processing is necessary.

\begin{example}{Four-legged spider}{}
    Returning to our example of the four-legged spider, we can show what this construction looks like.
    To keep an interesting example and reduce clutter at the same time, we focus on the expanded diagram under just $X$ edge-flips.
    \[\tikzfig{running-example/create-dn}\]
\end{example}

\begin{remark}
 Going one step further, we can observe that, in fact, unweighted edge-flip noise, \ie assigning each atomic edge-flip a weight of one, is already sufficiently expressive.
 More formally, for any fault gadget with weight $w$, we can find an equivalent fault gadget under unweighted edge-flip noise:
    \[\tikzfig{04-pushing-out/head-addition-all-one}\]
 For more details, in particular how this is proven using just the rules presented in \cref{sec:the-rules-and-strategy}, see~\autocite[Lemma 4.5]{rueschCompletenessFaultTolerant2025}.

 Weighted edge-flip noise can be seen as syntactic sugar for unweighted edge-flip noise.
 However, as weighted edge-flip noise is easier to work with, we choose to focus on that.
\end{remark}

\subsection{Inconsequential Faults}\label{subsec:inconsequential-faults}

We can identify a special group of faults called inconsequential faults~\autocite{MagdalenadelaFuente2025XZYruby}, also referred to as trivial faults~\autocite{blackwell2025codedistancefloquetcodes}.
These are the faults in spacetime that do not change a diagram's underlying linear map, \ie the faults that have no consequence.

We define:
\begin{definition}[Inconsequential faults]
 Let $D$ be a non-zero ZX diagram with edges $E$.
 Then a Pauli $S \in \paulifaults{|E|}$ is an \textit{inconsequential fault} of $D$ if $D^S = D$.
\end{definition}

Faults that correspond to stabilisers of $D$ are inconsequential.
However, there are more inconsequential faults that are spread through spacetime.
For example, we can observe that the following faults all leave the diagram unchanged:
\[\tikzfig{04-pushing-out/example-inconsequential-faults}\]
The first fault is the trivial fault, which is clearly inconsequential.
The second fault corresponds to a spider stabiliser of the red $-\frac{\pi}{2}$-phased spider.
The third fault is a product of multiple spider stabilisers.
This leads us to a more general proposition:

\begin{proposition}
    \label{prop:generating-inconsequential-faults}
 The group of inconsequential faults is equivalent to the group generated by all spider stabilisers.
\end{proposition}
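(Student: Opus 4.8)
The plan is to prove the two inclusions separately, showing that the group of inconsequential faults $\mathcal{I}$ and the group $\mathcal{S}$ generated by all spider stabilisers coincide as subgroups of $\paulifaults{|E|}$.

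First I would establish $\mathcal{S} \subseteq \mathcal{I}$, which is the easy direction. By \cref{def:stabilisers-of-spider}, if $P$ is a spider stabiliser of a spider $s$ in $D$, then firing $s$ according to $P$ leaves the local subdiagram unchanged and $P$ acts trivially elsewhere, so $D^P = D$; hence each spider stabiliser is inconsequential. Since $\mathcal{I}$ is a group (if $D^{S_1} = D$ and $D^{S_2} = D$ then $D^{S_1 S_2} = D$, and inverses of Paulis are Paulis), it is closed under products, so the entire group generated by spider stabilisers lies in $\mathcal{I}$.

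The substantive direction is $\mathcal{I} \subseteq \mathcal{S}$: every inconsequential fault is a product of spider stabilisers. The approach I would take is to characterise inconsequential faults via the Pauli-web machinery already developed. A fault $S$ satisfies $D^S = D$ iff, intuitively, $S$ introduces no net observable change. The natural strategy is to argue that $S$ must be a Pauli web: if $S$ failed the spider-stabiliser condition at some spider $s$ (restricted to the neighbourhood of $s$), then locally $D^S$ would differ from $D$ in a way that cannot be cancelled, contradicting $D^S = D$. Concretely, I would use the fact (from \cref{prop:faults-flip-pauli-webs} and the surrounding discussion) that a fault's interaction with webs is governed by commutation, together with the completeness of stabiliser ZX diagrams, to pin down exactly which Paulis preserve the linear map. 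Once $S$ is known to be a Pauli web, I would then decompose it: a Pauli web, by its local-stabiliser definition, is precisely a choice of spider stabiliser at each spider that is globally consistent on internal edges, and firing the diagram according to the web realises $S$ as the product of those per-spider stabilisers. This exhibits $S \in \mathcal{S}$.

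The main obstacle will be the characterisation step $\mathcal{I} \subseteq \mathcal{S}$, specifically proving that $D^S = D$ forces $S$ to be a Pauli web rather than merely some Pauli preserving the overall map through cancellation that is not locally spider-by-spider. The delicate point is ruling out a ``global'' coincidence where a non-web fault happens to preserve the map; here I expect to lean on the non-zero assumption on $D$ and on detecting-region reasoning (\cref{prop:detecting-regions}), ensuring no hidden degeneracy, and on the stabiliser structure of the Clifford fragment to guarantee that local obstructions cannot cancel globally. Handling the sign bookkeeping (since $\paulifaults{|E|}$ works up to scalars, the $\pm 1$ phases from firing, as tracked in the sign of a web, must be reconciled with $D^S = D$ holding exactly) is the subtlety I would budget the most care for.
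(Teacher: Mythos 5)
Your easy direction ($\mathcal{S} \subseteq \mathcal{I}$) is fine and agrees with the paper. The substantive direction, however, fails at its very first step: it is simply not true that an inconsequential fault must be a Pauli web. Take $D$ to be two $2$-legged phase-$0$ green spiders joined by an internal edge $e$, with boundary edges $b_1$ and $b_2$ (so $D$ is the identity wire). The fault $S$ acting as $Z$ on $b_1$, $Y$ on $e$, and $X$ on $b_2$ is the product of the spider stabilisers $Z_{b_1}Z_e$ and $X_e X_{b_2}$, hence inconsequential (indeed the composite operator is $XYZ = iI$, and faults and diagrams are taken up to scalar); but restricted to the neighbourhood of the first spider it acts as $Z \otimes Y$, which is not a spider stabiliser of a $2$-legged green spider, so $S$ is not a Pauli web. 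The ``global coincidence'' you hoped to rule out is therefore not a degenerate case: it is the generic situation, arising whenever stabilisers of adjacent spiders contribute different Pauli types to a shared edge, and no appeal to $D \neq 0$ or to \cref{prop:detecting-regions} can exclude it. The true statement in this direction (used in \cref{prop:spacetime-equivalence}) is only that inconsequential faults \emph{commute} with all Pauli webs, which is far weaker than being one.

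Your second step is also flawed in a way worth understanding. Firing a diagram according to a web $P$ expresses the \emph{boundary restriction} of $P$ as a product of the local spider stabilisers --- on every highlighted internal edge the two endpoint contributions cancel pairwise --- so it exhibits the boundary action of $P$, not $P$ itself, as a product of spider stabilisers. Worse, a Pauli web viewed as a fault need not even be inconsequential: for three pairwise-connected phase-$0$ green spiders, each with one boundary leg, highlighting the three internal edges green gives a positive-sign detecting region, yet placing $Z$ on those three edges turns the underlying state $\ket{000}+\ket{111}$ into $\ket{000}-\ket{111}$. So the two groups your argument tries to identify --- webs and inconsequential faults --- are genuinely incomparable, and the approach cannot be repaired locally. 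The paper instead proceeds globally: it attaches to $D$ a fault gadget for $F$ whose spawning edge is left open, observes that a $Z$ flip on that new boundary edge is a stabiliser of the extended diagram $D'$, invokes the result of~\autocite{borghansZXcalculusQuantumStabilizer2019} that every stabiliser of a ZX diagram is generated by spider stabilisers, and then discards the generators supported inside the gadget to obtain spider stabilisers of $D$ multiplying to $F$. Some reduction to a known global generation result of this kind appears unavoidable, precisely because local obstructions can and do cancel globally.
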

\begin{proof}
 Clearly, any spider stabiliser leaves the diagram unchanged and therefore is an inconsequential fault.
 Furthermore, any product of inconsequential faults is again inconsequential.
 Therefore, it remains to be shown that any inconsequential fault can be generated by stabilisers of spiders.

 Let $D$ be a non-zero ZX diagram and let $F$ be an inconsequential fault of $D$.
 Then we can construct a diagram $D'$ which helps us find a generating set of spider stabilisers that generate $F$.
 Let $D'$ be $D$ where we additionally add a fault gadget for $F$, however, leaving the spawning edge open, making it a boundary edge.
 For example, when $F$ has two non-trivial targets, we get:
    \[\tikzfig{04-pushing-out/firing-assignment-building-d-prime}\]
 Then, we can observe that placing a single $Z$ flip on the newly created boundary edge and pushing it inwards has the same effect as applying $F$ to $D$ and must therefore be inconsequential.
 In other words, the Pauli that acts as $Z$ on the new boundary edge and trivially everywhere else is a stabiliser of $D'$.

 But then, by~\autocite{borghansZXcalculusQuantumStabilizer2019}, there must exist a generating set of spider stabilisers of $D'$ that generate this Pauli.
 We can study this set of spider stabilisers more closely to find a generating set of spider stabilisers of $D$ that generate $F$.

 First, we observe that to create the $Z$ flip on the spawning edge, we must use the stabiliser of the red spider of the fault gadget, introducing $Z$ flips on all the edges connected to the targets.
 To remove these flips, within the Pauli boxes, we must use the stabiliser of the green spider, introducing another flip on the target edge, either to the left or to the right of this spider.
 To remove these flips, the stabiliser of the Clifford conjugation must be used, introducing a flip on the target edge of the colour of the Pauli box.
    \[\tikzfig{04-pushing-out/firing-assigning-in-D-prime}\]
 But then, having multiplied these spider stabilisers, we now have a fault that is equivalent to $F$ on $D$.
 As we know that the overall multiplication is inconsequential on $D'$, the remaining stabilisers must cancel out $F$.
 But then, as all the remaining spiders have a direct correspondence in $D$, the corresponding stabilisers in $D$ must similarly multiply to $F$.
 Therefore, we have shown that there must exist a generating set of spider stabilisers in $D$ that generate $F$.

 We remark that the algorithm from~\autocite{borghansZXcalculusQuantumStabilizer2019} only works for diagrams in red-green form, \ie diagrams where red spiders are only connected to green spiders and vice versa.
 However, we can always bring a diagram into red-green form by introducing an identity spider of the opposite colour on all edges that connect two spiders of the same colour.
 The obtained spider stabilisers of this diagram can be directly translated back to spider stabilisers of the original diagram by ignoring all introduced identity spiders.
\end{proof}

Using these inconsequential faults, we can define an equivalence relation on faults:
\begin{definition}[Spacetime equivalence]
 Two faults $F_1, F_2$ on a non-zero ZX diagram $D$ are \textit{spacetime-equivalent} if there exists an inconsequential fault $S$ such that $F_1 S = F_2$.
\end{definition}

Spacetime equivalence of faults guarantees that two faults are the same up to some inconsequential fault, which does not change the diagram, \ie that the two faults change the diagram in the same way.
Therefore, when considering how to identify and, in particular, correct faults, it is sufficient to identify faults up to spacetime equivalence.
We can formalise this:
\begin{proposition}
    \label{prop:spacetime-equivalence}
    Let $D$ be a non-zero ZX diagram and $F_1, F_2$ be faults on $D$.
    $F_1$ and $F_2$ are spacetime-equivalent if and only if they anticommute with the same Pauli webs.
\end{proposition}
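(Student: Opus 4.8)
The plan is to linearise the whole statement over $\mathbb{F}_2$ and reduce it to a single symplectic-duality identity, which I then establish by a dimension count. Identify $\paulifaults{|E|}$ with $\mathbb{F}_2^{2|E|}$, on which commutation is the nondegenerate symplectic form $\omega$ with $\omega(A,B)=0$ exactly when $A,B$ commute. Since $\omega$ is bilinear, $F_1$ and $F_2$ anticommute with the same Pauli webs precisely when $\omega(F_1,P)=\omega(F_2,P)$ for every web $P$, i.e. when $\omega(F_1F_2,P)=0$ for every web $P$; and, working modulo phase, $F_1$ and $F_2$ are spacetime equivalent precisely when $F_1F_2$ is inconsequential. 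Writing $\mathcal S$ for the subspace of inconsequential faults and $\mathcal W$ for the subspace of Pauli webs, the proposition therefore collapses to the single claim $\mathcal S=\mathcal W^{\perp}$, where $\perp$ is taken with respect to $\omega$. By \cref{prop:generating-inconsequential-faults}, $\mathcal S$ is exactly the span of the (global) spider stabilisers, which is the description I will use.

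The first inclusion $\mathcal S\subseteq\mathcal W^{\perp}$ is the easy half: it suffices to show each generating spider stabiliser commutes with every web. A spider's stabiliser group is a genuine stabiliser group of its (pure Clifford) state and hence abelian. A generating spider stabiliser $\iota_s(g)$ is supported on the legs $L_s$ of one spider $s$, where it restricts to some $g\in\mathrm{Stab}(s)$; a web $P$ restricts on $L_s$ to some $h\in\mathrm{Stab}(s)$; and since $\iota_s(g)$ vanishes off $L_s$ we get $\omega(\iota_s(g),P)=\omega(g,h)=0$ by commutativity of $\mathrm{Stab}(s)$. Hence all of $\mathcal S$ commutes with all of $\mathcal W$.

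For equality I will use that, $\omega$ being nondegenerate, $\mathcal S\subseteq\mathcal W^{\perp}$ upgrades to $\mathcal S=\mathcal W^{\perp}$ as soon as $\dim\mathcal S+\dim\mathcal W=2|E|$. I compute both dimensions relative to the subspace $\mathcal W_0\subseteq\mathcal W$ of webs acting trivially on the boundary (the detecting regions together with the trivial web). On one side, restricting a web to the boundary has kernel $\mathcal W_0$ and, by the correspondence between stabilising webs and stabilisers of $D$, image equal to the full stabiliser group of $D$; since $D$ is a nonzero, hence pure, Clifford stabiliser state on $|E_{\mathrm{bd}}|$ boundary qubits, that group has dimension $|E_{\mathrm{bd}}|$, giving $\dim\mathcal W=\dim\mathcal W_0+|E_{\mathrm{bd}}|$. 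On the other side, $\mathcal S$ is the image of the gluing map $\gamma:\bigoplus_s\mathrm{Stab}(s)\to\mathbb{F}_2^{2|E|}$, $(g_s)\mapsto\sum_s\iota_s(g_s)$, with $\dim\mathrm{Stab}(s)=\deg(s)$; a tuple lies in $\ker\gamma$ exactly when the leg-values agree across every internal edge and vanish on every boundary edge, i.e. exactly when they glue to a boundary-trivial web, so $\ker\gamma\cong\mathcal W_0$ and $\dim\mathcal S=\sum_s\deg(s)-\dim\mathcal W_0$. Adding, and using $\sum_s\deg(s)=2|E_{\mathrm{int}}|+|E_{\mathrm{bd}}|$ together with $|E|=|E_{\mathrm{int}}|+|E_{\mathrm{bd}}|$, the $\dim\mathcal W_0$ terms cancel and the total is $2|E|$, as required.

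The main obstacle is the dimension count in the last paragraph: making the identification $\ker\gamma\cong\mathcal W_0$ precise, so that boundary-trivial webs are counted once (as relations among the spider stabilisers) rather than double-counted, and pinning down the stabiliser dimension of $D$, for which I must invoke that a nonzero Clifford diagram is a pure stabiliser state with a full-rank stabiliser group on its $|E_{\mathrm{bd}}|$ boundary qubits. Everything else — bilinearity of $\omega$, abelianness of each $\mathrm{Stab}(s)$, and $\dim\mathrm{Stab}(s)=\deg(s)$ — is routine.
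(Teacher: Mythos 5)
Your proof is correct, but it takes a genuinely different route from the paper's. The paper argues both directions operationally via \cref{prop:faults-flip-pauli-webs}: an inconsequential fault must commute with every detecting region (else it would send the diagram to zero) and with every stabilising web (else it would change the eigenspace the state lives in), which gives the forward direction; conversely, a fault commuting with all webs flips no detector sign and no stabiliser sign, so by stabiliser theory it fixes the diagram and is therefore inconsequential. You instead compress the entire statement into the symplectic-duality identity $\mathcal S=\mathcal W^{\perp}$ over $\mathbb{F}_2$, prove the inclusion $\mathcal S\subseteq\mathcal W^{\perp}$ \emph{locally} --- each spider stabiliser commutes with each web because both restrict to the abelian stabiliser group of that spider, using \cref{prop:generating-inconsequential-faults} --- and then upgrade inclusion to equality by a dimension count: rank--nullity on the gluing map with $\ker\gamma\cong\mathcal W_0$, plus the facts that webs surject onto the stabiliser group of $D$ (the correspondence the paper cites) and that a non-zero Clifford state on the boundary has a full-rank stabiliser group. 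This trades the paper's \say{same stabiliser signs implies same state} step for a quantitative argument, and as a by-product yields explicit formulas such as $\dim\mathcal S=\sum_s\deg(s)-\dim\mathcal W_0$ and $\dim\mathcal W=\dim\mathcal W_0+|E_{\mathrm{bd}}|$, which make the trade-off between detecting regions and inconsequential faults visible. Two caveats: your count implicitly assumes every internal edge joins two distinct spiders and every boundary edge meets exactly one spider; a bare input-to-output wire, for instance, is one edge carrying two boundary qubits, so $\dim\mathcal W=\dim\mathcal W_0+|E_{\mathrm{bd}}|$ fails as stated (the final identity $\dim\mathcal S+\dim\mathcal W=2|E|$ survives once $|E_{\mathrm{bd}}|$ is replaced by the number of boundary endpoints), and self-loops are likewise excluded. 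Also, like the paper, you need $D$ non-zero --- this is implicit in the definition of inconsequential faults and is exactly where purity of the stabiliser state enters your count --- so it is worth stating up front.
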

\begin{proof}
    If $F_1$ and $F_2$ are spacetime-equivalent, there exists an inconsequential fault $S$ such that $F_1 S = F_2$.
    As $F_1 S$ and $F_2$ are the same, we know that they anticommute with the same Pauli webs.
    We now argue that $S$ commutes with all Pauli webs and therefore, $F_1$ and $F_2$ must anticommute with the same Pauli webs.
    By \cref{prop:faults-flip-pauli-webs}, we know that faults flip the sign of Pauli webs when the fault anticommutes with the Pauli web.
    When they flip the sign of a detecting region, they make the diagram zero and are therefore detected.
    As inconsequential faults do not change the diagram, they do not make it go to zero.
    Therefore, they have to commute with all detection regions.
    For undetectable faults, when they flip the sign of a stabilising Pauli web, they change the eigenspace that $D^F$ lives in with respect to the corresponding stabiliser.
    But as inconsequential faults do not change the diagram, they can also not change the eigenspace that $D^F$ lives in.
    Therefore, they have to commute with all stabilising Pauli webs as well.
    But then, as $F_1 S$ and $F_2$ anticommute with the same Pauli webs and $S$ commutes with all Pauli webs, $F_1$ and $F_2$ must anticommute with the same Pauli webs.

    Conversely, let $F_1$ and $F_2$ be faults that anticommute with exactly the same Pauli webs.
    But then $F_1 F_2$ must commute with all Pauli webs, as $F_1$ and $F_2$ either both commute or both anticommute with any given web.
    Thus, $D^{F_1 F_2} = D$ as $F_1 F_2$ does not flip the sign of any detecting region or stabiliser.
    Thus, $S = F_1 F_2$ must be an inconsequential fault.
    But then $F_1 S = F_1 F_1 F_2 = F_2$, so $F_1$ and $F_2$ are spacetime-equivalent.
\end{proof}

But then we have:
\begin{proposition}
    \label{prop:spacetime-equivalence-undetectable}
    Two undetectable faults $F_1, F_2$ on a non-zero diagram $D$ are spacetime-equivalent if and only if $D^{F_1} = D^{F_2}$.
\end{proposition}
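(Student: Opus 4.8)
The plan is to derive this as a fairly direct corollary of \cref{prop:spacetime-equivalence}, which already characterises spacetime equivalence in terms of anticommutation with Pauli webs. The key observation is that Pauli webs come in two flavours: detecting regions (trivial on the boundary) and stabilising Pauli webs (non-trivial on at least one boundary edge). For undetectable faults, the detecting regions carry no information, so the relevant data is entirely the commutation relationship with the stabilising Pauli webs. I would first recall that \cref{prop:spacetime-equivalence} tells us $F_1$ and $F_2$ are spacetime equivalent if and only if they anticommute with exactly the same Pauli webs, and then show that for undetectable faults this condition is equivalent to $D^{F_1} = D^{F_2}$.

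For the forward direction, suppose $F_1$ and $F_2$ are spacetime equivalent. Then by definition there is an inconsequential fault $S$ with $F_1 S = F_2$, and since $D^S = D$ we compute directly that $D^{F_2} = D^{F_1 S} = D^{F_1}$, because applying $S$ leaves the diagram unchanged. This direction needs essentially no new ideas beyond unpacking the definition of inconsequential fault.

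The converse is where the work lies. Assuming $D^{F_1} = D^{F_2}$ on a non-zero diagram $D$, I want to conclude spacetime equivalence, and by \cref{prop:spacetime-equivalence} it suffices to show $F_1$ and $F_2$ anticommute with the same Pauli webs. Here I would split into the two types of webs. For a detecting region $R$: since $F_1, F_2$ are both undetectable and $D$ is non-zero, neither can anticommute with $R$ (by \cref{prop:faults-flip-pauli-webs} and \cref{prop:detecting-regions}, an anticommuting fault would flip the sign of the detecting region and force the diagram to zero, contradicting undetectability), so both commute with $R$ and trivially agree on it. For a stabilising Pauli web $P$: by \cref{prop:faults-flip-pauli-webs} the sign of the corresponding web $P_{F_i}$ on $D^{F_i}$ differs from the sign of $P$ exactly when $F_i$ anticommutes with $P$, and this sign records which eigenspace of the associated stabiliser the diagram lives in. Since $D^{F_1} = D^{F_2}$ are literally the same diagram, they live in the same eigenspace for every stabiliser, so the induced signs agree, forcing $F_1$ and $F_2$ to have the same commutation relationship with $P$.

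The main obstacle I anticipate is making the stabilising-web argument fully rigorous, specifically tying the abstract statement \say{$D^{F_1} = D^{F_2}$} to the concrete claim that the two faulty diagrams lie in the same eigenspace of each stabiliser, and handling the correspondence between Pauli webs on $D$ and on $D^{F_i}$ carefully. One must be careful that the eigenvalue of the stabiliser is a genuine invariant of the (non-zero) diagram as a state, which is exactly what lets equal diagrams be compared web-by-web; this is where the non-zero hypothesis on $D$ is essential, since a zero diagram carries no well-defined eigenspace information. Once the two web types are dispatched as above, \cref{prop:spacetime-equivalence} closes the argument immediately.
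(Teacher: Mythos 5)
Your proof is correct and follows essentially the same route as the paper's: both reduce the statement to \cref{prop:spacetime-equivalence} and then argue via commutation with stabilising Pauli webs, using the fact that a non-zero stabiliser state is determined by which eigenspaces of the stabilisers it occupies. If anything, your write-up is more explicit than the paper's, which presents a one-directional chain (spacetime equivalence $\Rightarrow$ same webs $\Rightarrow$ same eigenspaces $\Rightarrow$ equal diagrams) and leaves the converse and the detecting-region case implicit, whereas you handle the forward direction directly from the definition of inconsequential faults and spell out both web types in the converse.
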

\begin{proof}
    By \cref{prop:spacetime-equivalence}, $F_1$ and $F_2$ are spacetime-equivalent if and only if they anticommute with the same Pauli webs.
    Therefore, they must anticommute with the same stabilising Pauli webs.
    But then $D^{F_1}$ and $D^{F_2}$ must live in the same eigenspaces of all the stabilisers of $D$ and therefore, by stabiliser theory, they must be the same.
\end{proof}

For detectable faults, we recall that a fault is detectable if it causes the diagram to go to zero.
Therefore, we cannot strengthen the statement from \cref{prop:spacetime-equivalence-undetectable} to include detectable faults, since two detectable faults may not be spacetime-equivalent, yet both make the diagram go to zero, \ie satisfy $D^{F_1} = D^{F_2}$.
However, as spacetime-equivalent faults must anticommute with the same Pauli webs, including all the detecting regions, each spacetime equivalence class of faults is either detectable or undetectable.
This leaves us with the following adaptation of \cref{prop:spacetime-equivalence-undetectable}:
\begin{corollary}
    \label{corr:spacetime-equivalence}
    If $F_1, F_2$ are two spacetime-equivalent faults on a diagram non-zero $D$, it must be that $D^{F_1} = D^{F_2}$.
\end{corollary}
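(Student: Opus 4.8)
The plan is to derive \cref{corr:spacetime-equivalence} as an immediate consequence of the two preceding results, \cref{prop:spacetime-equivalence} and \cref{prop:spacetime-equivalence-undetectable}, by the case analysis foreshadowed in the paragraph just above the statement. The key observation to extract is that spacetime equivalence is a property closed on the detectable/undetectable dichotomy: since by \cref{prop:spacetime-equivalence} two faults are spacetime equivalent precisely when they anticommute with the same Pauli webs, and the detecting regions are among those Pauli webs, any two spacetime-equivalent faults agree on whether they anticommute with each detecting region. Hence either both faults anticommute with some detecting region (and are both detectable) or neither does (and both are undetectable).

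First I would dispose of the detectable case. If $F_1$ and $F_2$ are spacetime equivalent and both detectable, then by definition of detectability $D^{F_1} = 0 = D^{F_2}$, so the desired equality holds trivially. Here I only need the fact, established in the discussion of \cref{prop:faults-flip-pauli-webs} and \cref{prop:detecting-regions}, that a fault anticommuting with a detecting region forces the faulty diagram to vanish, together with the closure property above guaranteeing that the two faults fall into the same category.

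For the undetectable case, both $F_1$ and $F_2$ leave the diagram non-zero, so they are undetectable faults on a non-zero diagram $D$; being spacetime equivalent by hypothesis, \cref{prop:spacetime-equivalence-undetectable} applies directly and yields $D^{F_1} = D^{F_2}$. Combining the two cases gives the claim for all spacetime-equivalent pairs.

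I do not anticipate a genuine obstacle here, as the corollary is deliberately a packaging of earlier machinery; the only point requiring care is making the case split rigorous, namely justifying that spacetime equivalence cannot mix a detectable fault with an undetectable one. That justification is exactly the Pauli-web characterisation from \cref{prop:spacetime-equivalence} restricted to detecting regions, so the argument is short. The subtlety worth flagging explicitly is that the converse of the corollary fails for detectable faults — two detectable faults can satisfy $D^{F_1} = D^{F_2} = 0$ without being spacetime equivalent — which is precisely why \cref{corr:spacetime-equivalence} is phrased as a one-directional implication rather than the biconditional of \cref{prop:spacetime-equivalence-undetectable}.
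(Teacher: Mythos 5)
Your proof is correct and follows essentially the same route as the paper: both split on the detectable/undetectable dichotomy, justify that spacetime equivalence cannot mix the two cases via the Pauli-web characterisation of \cref{prop:spacetime-equivalence} applied to detecting regions, and then invoke \cref{prop:spacetime-equivalence-undetectable} for the undetectable case and the vanishing of both faulty diagrams for the detectable case. The paper's own proof is just a terser statement of this, with the closure-under-detectability argument given in the paragraph preceding the corollary rather than inside the proof itself.
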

\begin{proof}
    If $F_1$ and $F_2$ are undetectable, this follows from \cref{prop:spacetime-equivalence-undetectable}.
    If they are detectable, they both make the diagram go to zero.
    Therefore, the statement holds in all cases.
\end{proof}
Fault equivalence, as defined, is only concerned with the faulty diagrams $D^{F_1}, D^{F_2}$, so through \cref{corr:spacetime-equivalence} it is safe to handle faults up to spacetime equivalence.

\subsection{Manipulating Fault Gadgets}\label{subsec:manipulating-fault-gadgets}
Beyond being capable of representing faults, fault gadgets can also be used to reason about faults.
Since we just argued that for fault equivalence we only care about faults up to spacetime equivalence, we need to show that we can transform spacetime-equivalent fault gadgets into one another.
First, we show:
\begin{restatable}[Adding spider stabilisers to fault gadgets]{proposition}{addingspiderstabilisers}
    \label{prop:adding-spider-stabilisers}
    Let $D$ be a fault-free ZX diagram with a single fault gadget for a fault $F$.
    Then, for any spider stabiliser $S$ of a spider $s$ in $D$, we can use our fault-equivalent rules to add the elements of $S$ as targets to the fault gadget.
\end{restatable}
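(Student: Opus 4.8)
The plan is to reduce the claim to the generating spider stabilisers of $s$ and then realise each generator as extra targets by routing the firing of the distribution spider \emph{through} $s$ with the fully idealised Clifford rules, which $\TextFEElim$ licenses to be applied next to the noisy gadget. The reason such a rewrite cannot break fault equivalence is already in hand: by \cref{prop:generating-inconsequential-faults} a spider stabiliser $S$ is inconsequential, so adding its elements as targets turns the gadget for $F$ into a gadget for the spacetime-equivalent fault $FS$, and by \cref{corr:spacetime-equivalence} these fire to the same faulty diagram $D^{F} = D^{FS}$. Since we only add idealised targets and never touch the spawning edge, its weight annotation, and hence the induced weight function, is unchanged. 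The actual content of the proposition is therefore to exhibit this as a concrete sequence of the rewrites in \cref{fig:complete-axioms}, not merely to argue soundness.

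First I would note that it suffices to handle the generators of the stabiliser group of $s$ given by the Pauli-web characterisation: an own-colour Pauli on a pair of legs and an opposite-colour Pauli on all legs for a $\{0,\pi\}$-phase spider, together with the mixed own/opposite generators for a $\{\pm\tfrac{\pi}{2}\}$-phase spider. A general $S$ is a product of these, and adding targets for a product decomposes into adding targets for each factor, where parallel Pauli boxes that land on a common edge are combined with the idealised fusion rule (and a $Y$ target is produced as an adjacent $X$ and $Z$ box). Using the idealised colour-change rule I may further assume $s$ is green.

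For the opposite-colour generator, the essential move is the copy rule: a green spider copies $X$, so connecting a single idealised $X$-box between the distribution spider and $s$ and firing the gadget sends an $X$ into $s$ that is copied onto all $k$ legs, which is exactly firing $S = X^{\otimes k}$. The idealised copy and fusion rules then convert between this single-feed configuration and the configuration carrying an explicit $X$ box on every leg of $s$ attached to the distribution spider, \ie the elements of $S$ added as targets. For the own-colour generator $Z \otimes Z$ on a chosen pair of legs I would instead use that $Z$ flows freely through a green spider, so the two $Z$ targets can be introduced and slid onto those legs; the $\pm\tfrac{\pi}{2}$ generators combine the two mechanisms.

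The main obstacle I expect is the phase bookkeeping rather than the routing. Firing a stabiliser can introduce a global $-1$, and commuting Pauli boxes past the distribution spider and past one another produces further stray phases; the derivation has to absorb \emph{exactly} these, for which I would use $\TextFEXPhase$ and $\TextFECommute$, which are designed to discard the phases arising when moving and commuting fault gadgets. I would also have to verify the construction is uniform over both colours and all four phases, and that each step leaves the weighted spawning edge --- and therefore the fault-equivalence-relevant weight --- untouched; checking this invariant, which is the same one underlying the soundness of $\TextFEComb$, is the delicate point that ties the purely diagrammatic routing back to fault equivalence.
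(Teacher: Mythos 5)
Your proposal is correct and takes essentially the same route as the paper: the paper also reduces to the phase-dependent generators of the spider-stabiliser group (an own-colour $ZZ$ pair on two legs of a green spider, all-$X$ for phases $\{0,\pi\}$, and all-$X$ plus a single $Y$ for $\pm\frac{\pi}{2}$, with the other colour handled symmetrically), realises each generator's targets by explicit idealised copy/fusion-style derivations applied next to the gadget, and finishes by merging coincident targets using the machinery of \cref{prop:changing-fault-gadgets-1}. The only difference is explicitness: the concrete rewrite sequences you defer to (introducing the $Z$-target pair, the single-feed-to-all-legs conversion, and the phase bookkeeping via $\TextFEXPhase$ and $\TextFECommute$) are exactly what the paper's displayed diagrammatic derivations supply.
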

\begin{proof}
    Let $s$ be a green spider in $D$ with stabiliser $S$ introducing $Z$ flips on two of its legs and acting trivially on all other edges.
    Then, we can first observe:
    \begin{equation}
        \label{eq:stab-fe-own-color}
        \tikzfig{04-pushing-out/stab-fe-own-color}
    \end{equation}

    Using this, we have:
    \[\tikzfig{04-pushing-out/stab-fe-add-target}\]

    All other stabilisers can be added similarly.
    See \cref{appendix:missing-proofs}.
\end{proof}

But then, using this, we can go one step further:
\begin{proposition}
    \label{prop:changing-fault-gadgets-1}
    Let $D$ be a fault-free ZX diagram with spacetime-equivalent faults $F_1, F_2$.
    Then, using our fault-equivalent rules, we can rewrite the fault gadget for $F_1$ into a fault gadget for $F_2$.
\end{proposition}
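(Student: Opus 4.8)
The plan is to reduce the claim to the already-established result on adding spider stabilisers to a fault gadget, combined with the characterisation of inconsequential faults from \cref{prop:generating-inconsequential-faults}. First I would invoke the hypothesis of spacetime equivalence: by definition there is an inconsequential fault $S$ with $F_1 S = F_2$. Since \cref{prop:generating-inconsequential-faults} tells us that the inconsequential faults are exactly the group generated by spider stabilisers, I can factor $S = S_1 S_2 \cdots S_k$, where each $S_i$ is a spider stabiliser of some spider $s_i$ in $D$.

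Next I would apply \TextFEStab repeatedly. Starting from the fault gadget for $F_1$, one application adds the targets of $S_1$, turning it into a single fault gadget whose represented fault is the edge-wise Pauli product $F_1 S_1$; wherever a new target lands on an edge that already carries a target, the two Pauli boxes are multiplied together, and any box that thereby becomes the identity is discarded. Crucially, the result is again a single fault gadget attached to the fault-free diagram $D$, so the hypotheses of \TextFEStab are re-established and I can add $S_2$, then $S_3$, and so on. After all $k$ applications the gadget represents $F_1 S_1 \cdots S_k = F_1 S = F_2$, which is precisely the fault gadget for $F_2$ (equality in $\paulifaults{|E|}$, so the ordering of the $S_i$ and any intermediate scalars are immaterial).

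The main obstacle is the bookkeeping when targets overlap. Adding a stabiliser introduces Pauli boxes on edges that may already be occupied, so I must check that the rules genuinely multiply the two Paulis on such an edge, cancel a coincident pair to the identity, and absorb the spurious global phases that appear while sliding and commuting the boxes --- these phases are exactly what \TextFEXPhase and \TextFECommute are designed to ignore. Once this is verified for a single step, the full derivation follows by chaining the $k$ applications: each step is a sound fault-equivalent rewrite, and transitivity of fault boundedness (\cref{prop:w-fault-boundedness-transitive}) guarantees that their composite is a valid derivation of the fault gadget for $F_2$ from that of $F_1$ entirely within the proposed rule set.
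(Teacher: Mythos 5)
Your outline follows the paper's proof exactly in structure: take the inconsequential fault $S$ with $F_1 S = F_2$, factor it into spider stabilisers $S_1,\dots,S_k$ via \cref{prop:generating-inconsequential-faults}, and add their targets to the gadget one stabiliser at a time using \TextFEStab. However, there is a genuine gap precisely where you write that you ``must check'' the bookkeeping for overlapping targets: that check \emph{is} the bulk of the proof, and your proposal neither carries it out nor points to the right tools for it. \TextFEStab, as stated, only \emph{adds} targets to the gadget; it does not multiply them, so after your $k$ applications you are left with a gadget carrying several stacked Pauli boxes on some edges, not yet the gadget for $F_2$. The paper discharges this obligation with three diagrammatic derivations that your proposal does not supply: (i) two targets of the same type on one edge cancel, proved by decomposing each Pauli box as $C$ -- green spider -- $C^\dagger$ (\cref{eq:pauli-box-clifford-decomposition}) and applying fusion; (ii) a $Y$ target splits into an $X$ and a $Z$ target; and (iii) an $X$ and a $Z$ target commute past each other, which is obtained by the non-obvious trick of merging them into a $Y$ target, transposing that $Y$ target, and splitting it again in the other order. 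Your claim that the spurious phases ``are exactly what \TextFEXPhase{} and \TextFECommute{} are designed to ignore'' is therefore not sufficient: those rules absorb global phases arising when gadgets are moved or commuted, but they cannot by themselves produce Pauli multiplication of boxes, cancellation of coincident pairs, or the $X$/$Z$ commutation. Without (i)--(iii), the final step asserting that the gadget ``represents $F_1 S_1 \cdots S_k = F_2$'' does not follow within the rule set; supplying those three derivations would make your argument coincide with the paper's.
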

\begin{proof}
    As $F_1$ and $F_2$ are spacetime-equivalent, by definition, there exists some inconsequential fault $S$ such that $F_1 S = F_2$.
    But then, by~\cref{prop:generating-inconsequential-faults}, we know that there exists a set of stabilisers $S_1, \dots, S_m$ of spiders in $D$ such that $S_1\dots S_m = S$.
    Then, we can iteratively apply the previous proposition to add the targets of $S_1, \dots, S_m$ to the fault gadget for $F_1$.
    We know that $F_1 S_1 \dots S_m$ multiplies to $F_2$.
    Thus, all we have to show is that if multiple targets are on the same edge, we can merge them into at most a single target, obeying the multiplication rules of Pauli matrices.

    First, we show that if we have two targets of the same type on the same edge, they cancel each other out.
    To start, we see that every non-trivial Pauli box consists of a Clifford diagram $C$, a green spider, and the adjoint $C^\dagger$ of $C$:
    \begin{equation}
        \label{eq:pauli-box-clifford-decomposition}
        \tikzfig{04-pushing-out/pauli-boxes-cliffords}
    \end{equation}

    Using this, we can observe:
    \[\tikzfig{04-pushing-out/targets-self-inverse}\]

    Next, we show that a $Y$ target can be rewritten into an $X$ and a $Z$ target:
    \[\vcenter{\hbox{\scalebox{0.95}{\tikzfig{04-pushing-out/y-decomposition}}}}\]

    Finally, we show that we can transpose $Y$ targets:
    \[\tikzfig{04-pushing-out/target-transpose-y}\]
    Using this, we can commute $X$ and $Z$ by merging them into a $Y$ target, transposing the $Y$ target, and then decomposing it again in the other direction.

    Using these three rules, we can merge any number of targets on the same edge into at most a single target, obeying the multiplication rules of Pauli matrices.
    If any of the targets are $Y$ targets, we decompose them into their $X$ and $Z$ components.
    Then, we commute all $X$ targets to one side of the edge and all $Z$ targets to the other side.
    Finally, we cancel out pairs of $X$ and $Z$ targets, leaving at most one $X$ and one $Z$ target on the edge.
    If we have both, we can replace them with a single $Y$ target.
\end{proof}

\begin{example}{Four-legged spider}{}
    For example, we can use a spider stabiliser of the top, right spider to move the following fault gadget around the diagram:
    \[\tikzfig{running-example/move-single-gadget}\]
\end{example}

Thus, for diagrams with a single fault gadget, we can manipulate the fault gadget to represent any spacetime-equivalent fault.
However, we can go one step further.
For this, we first observe:
\begin{proposition}
    \label{prop:commuting-targets}
    Let $D$ be a fault-free ZX diagram where the only allowed faults are on the spawning edges of fault gadgets.
    Then, we can freely commute the targets of different fault gadgets on the same edge past each other.
\end{proposition}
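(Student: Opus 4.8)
The plan is to reduce the statement to commuting a single pair of targets sharing one edge, and then to treat the two data that fault equivalence tracks—the underlying linear map and the induced fault effects—separately. The decisive structural point is that the two targets belong to \emph{distinct} fault gadgets, so their one-legged spiders connect to different distribution spiders. This is exactly what distinguishes the present claim from the intra-gadget commutation inside the proof of \cref{prop:changing-fault-gadgets-1}: there the targets shared a distribution spider and could be merged into a single $Y$ target, transposed, and decomposed, whereas here no such fusion is available and we must instead swap the two boxes in place while accounting for the phase this creates.

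First I would check that reordering the targets preserves the underlying linear map. Since each fault gadget is fault-free and, by \cref{prop:weighted-edge-flip-universality} and the cited results of \autocite{rodatz2025faulttoleranceconstruction}, leaves the semantics of $D$ untouched, the contribution of each gadget is trivial regardless of the order of its boxes along the shared edge; hence the swap is a semantic equivalence and is derivable with the Fault-Free Clifford rules (in particular fault-free \TextFusion). It then remains to verify the fault side. Every admissible fault lies on spawning edges: activating the first gadget deposits some Pauli $P_1$ on the shared edge and activating the second deposits $P_2$. If $P_1$ and $P_2$ are of the same Pauli type they commute exactly, so the boxes can be swapped with no phase using the Fault-Free Clifford rules. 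If they are of anticommuting types, swapping them introduces a sign; this is precisely the phase that the \TextFECommute rule is designed to discard, so the swap remains a valid fault-equivalent rewrite. (If one prefers \TextFECommute stated only for $X$ and $Z$ boxes, one first rewrites any $Y$ target into an $X$ and a $Z$ target via the identity established in \cref{prop:changing-fault-gadgets-1}, after which only $X$/$Z$ pairs remain to be commuted.) Iterating over all pairs of targets on the edge then yields the full statement, and compositionality of fault boundedness lets us apply this locally inside any larger diagram.

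The main obstacle is the phase bookkeeping in the anticommuting case. A priori one might worry that exchanging two controlled-Pauli boxes attached to different distribution spiders produces not a discardable scalar but a \emph{correlated} phase between the two gadgets, which could alter the effect of the fault that activates both gadgets simultaneously and thereby spoil fault equivalence. The resolution, which is what justifies \TextFECommute, is that the obstruction is exactly the scalar distinguishing $P_1 P_2$ from $P_2 P_1$ as matrices, and fault effects are valued in $\paulifaults{\cdot}$, where $P_1 P_2 = P_2 P_1$. Consequently the sign is quotiented away: the fault activating both gadgets has the same effect on every faulty diagram $D^F$ in either ordering, and at the same weight, so the two orderings are genuinely fault-equivalent.
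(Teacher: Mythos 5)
Your proposal is correct and takes essentially the same route as the paper: reduce to pairwise commutation on a shared edge, decompose $Y$ targets into $X$ and $Z$ components, commute same-type boxes as a genuine semantic (fault-free) rewrite, and commute anticommuting $X$/$Z$ boxes of different gadgets via $\TextFECommute$. One caveat: your opening claim that the swap is always ``a semantic equivalence derivable with the Fault-Free Clifford rules (in particular fault-free $\TextFusion$)'' is false in the anticommuting case --- the two orderings of controlled-$P_1$ and controlled-$P_2$, viewed as a subdiagram with open control wires, are genuinely different linear maps, which is exactly why $\TextFECommute$ must be a separate axiom rather than a derivable consequence --- but your subsequent case analysis (and your third paragraph, which correctly locates the sign in the scalar quotient of $\paulifaults{\cdot}$) supersedes that misstatement, so the proof stands.
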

\begin{proof}
    Intuitively, we can commute fault gadgets past each other, as we only care about diagrams up to a global phase.
    For completeness, we have to show that we can do this with the proposed rule set as well.
    For targets of the same type, we have:
    \[\tikzfig{04-pushing-out/pp-commute}\]
    For an $X$ and a $Z$ target, we have:
    \[\tikzfig{04-pushing-out/zx-commute}\]
    We can now commute arbitrary targets on the same edge by first decomposing all targets into their $X$ and $Z$ components and then commuting the components using the two rules above.
\end{proof}

But then, we have:
\begin{proposition}
    \label{prop:changing-fault-gadgets-2}
    Let $D$ be a fault-free ZX diagram where the only allowed faults are on the spawning edges of fault gadgets, and let $F_1, F_2$ be spacetime-equivalent faults on $D$.
    Then, using our fault-equivalent rules, we can rewrite the fault gadget for $F_1$ into a fault gadget for $F_2$.
\end{proposition}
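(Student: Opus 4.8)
The plan is to reduce the statement to the single-gadget case already established in \cref{prop:changing-fault-gadgets-1}, and to argue that the presence of the additional fault gadgets never obstructs any of the rewrites used there. As in the proof of \cref{prop:changing-fault-gadgets-1}, spacetime equivalence of $F_1$ and $F_2$ yields, via \cref{prop:generating-inconsequential-faults}, an inconsequential fault $S = S_1 \dots S_m$ with $F_1 S = F_2$, where each $S_i$ is a spider stabiliser of some spider in $D$. The goal is then to add the targets of $S_1, \dots, S_m$ to the fault gadget for $F_1$ and to merge the resulting targets on shared edges, exactly reproducing the single-gadget argument.

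First I would inspect each atomic step of that argument --- adding a spider stabiliser as a target (the proposition on adding spider stabilisers to fault gadgets), cancelling two same-type targets, decomposing a $Y$ target, transposing it, and commuting an $X$ target past a $Z$ target --- and observe that every one of them is a \emph{local} rewrite involving only the fault gadget for $F_1$ together with a single spider $s$ of $D$ and the edges incident to it. The only way one of the other fault gadgets could interfere is if one of its targets happens to lie on an edge where we must introduce, cancel, transpose, or commute a target of the $F_1$ gadget.

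Here the key new ingredient is \cref{prop:commuting-targets}: targets of distinct fault gadgets sharing an edge commute freely. Hence, before carrying out any step of the single-gadget procedure on a given edge, I would first commute every target belonging to a foreign fault gadget off that edge (equivalently, commute the $F_1$ target past them), perform the local rewrite in isolation, and leave the foreign targets untouched. Since commuting introduces no new obstruction and does not change which fault any other gadget represents, the entire sequence of rewrites from \cref{prop:changing-fault-gadgets-1} goes through verbatim, now localised to the gadget for $F_1$. Iterating over $S_1, \dots, S_m$ and then merging all accumulated targets on each edge transforms the gadget for $F_1$ into the gadget for $F_1 S = F_2$, with the remaining fault gadgets left intact.

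The main obstacle is the bookkeeping of shared edges: one must ensure that interleaving foreign targets never forces a rewrite valid only for a single isolated gadget to be applied in a context where two gadgets' targets are genuinely entangled on one edge. I expect \cref{prop:commuting-targets} to resolve this cleanly, since it guarantees the foreign targets can always be separated out beforehand; the remaining work is just to confirm, step by step, that each local rewrite in the proof of \cref{prop:changing-fault-gadgets-1} touches only its own gadget and the relevant spider, so that after commuting away the interfering targets the surrounding context is indistinguishable from the single-gadget setting.
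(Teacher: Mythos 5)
Your proposal is correct and follows essentially the same route as the paper: it reuses the single-gadget procedure of \cref{prop:changing-fault-gadgets-1} (adding the stabiliser targets generating the inconsequential fault and merging targets edge by edge), and invokes \cref{prop:commuting-targets} to commute targets of other fault gadgets out of the way so that the merging steps can be performed as in the single-gadget setting. The paper's proof is just a terser statement of exactly this argument, so no gap or divergence to report.
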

\begin{proof}
    Similar to~\cref{prop:changing-fault-gadgets-1}, we can add the targets of a generating set of stabilisers to the fault gadget for $F_1$.
    To merge all the targets, we can apply \cref{prop:commuting-targets} to make sure that all targets on the same edge are next to each other.
    Finally, we can merge all targets on the same edge as in the previous proposition.
\end{proof}

\subsection{Pushing Faults Out}\label{subsec:pushing-out-faults}
Our goal is to push all the faults to the boundary.
For this, we provide a subset of edge-flips that together generate all spacetime equivalence classes of faults.
This allows us to express any fault as a spacetime equivalent fault that only acts on those edges.
We then choose these faults to mostly act on the boundary, with a few remaining in the interior of the diagram.
Afterwards, in \cref{subsec:removing-flip-operators}, we remove these remaining internal faults as well.

For this, we first define:
\begin{definition}[Flip operator collection]
    Given a ZX diagram $D$ with edges $E$ and a basis of the diagram's Pauli webs $P_1,\dots,P_m \in \paulifaults{|E|}$, a flip operator collection is a set
    \[\mathcal{F}_{flipop} \subseteq \paulifaults{|E|} \]
    such that for each $P_i$, there exists a unique fault $F_{P_i} \in \mathcal{F}_{flipop}$ that anticommutes with $P_i$ and commutes with all other $P_j$ of the basis.
\end{definition}

\begin{example}{Four-legged spider}{}
    \label{ex:flip-operators}
    Our running example has a Pauli web basis consisting of one detecting region and three stabilising webs.
    There are a range of valid flip operator collections, and we show this in the choice of the region flip operator.
    Recall that flip operators are faults, so we can visualize them using fault gadgets.
    Together with the three stabiliser flip operators $F_1,F_2,F_3$, one of the three region flip operators $F_4, F_4'$ or $F_4''$ would form a valid collection:
    \vspace{-4pt} \[ \tikzfig{running-example/flip-operators} \vspace{-4pt} \]
    For simplicity, we continue our example using the operator $F_4$.
\end{example}

A flip operator collection can be understood as providing a way to flip any Pauli web of the diagram, meaning that we can create any anticommutation relationship with the diagram's Pauli webs by taking the product of multiple flip operators.
As the spacetime equivalence class of a fault is uniquely defined through the Pauli webs with which it commutes by \cref{prop:spacetime-equivalence}, the group generated by a flip operator collection generates all spacetime equivalence classes of faults.
We have:

\begin{proposition}
    \label{prop:flip-op-rewrite}
    Let $D$ be a non-zero ZX diagram with some noise model $\mathcal{N}$ and flip operator collection $\mathcal{F}_{flipop}$ for $D$.
    Then, we can fault-equivalently rewrite $D_{\mathcal{N}}$ into a diagram $D_{\mathcal{N'}}$ such that all atomic faults in $\mathcal{N'}$ are generated by $\langle \mathcal{F}_{flipop} \rangle$.
\end{proposition}
\begin{proof}
 Recall that $D_{\mathcal{N}}$ is the diagram obtained by \cref{prop:weighted-edge-flip-universality} to create a diagram under edge-flip noise that is fault-equivalent to $D$ under $\mathcal{N}$.
 Further recall that by \cref{prop:spacetime-equivalence} two faults are spacetime-equivalent if and only if they have the same commutation relationship with all Pauli webs of $D$.
 Thus, if some fault $F$ has some anticommutation relationship with the Pauli web basis that the flip operators are defined with respect to, we can construct a fault $F_{flipop}$ that has the same anticommutation relationship with that Pauli web basis.
 By the linearity of Pauli web commutation and since the Pauli webs, with respect to which the flip operators are defined, form a basis, $F$ and $F_{flipop}$ must have the same anticommutation relationship with all Pauli webs.

 As we can replicate flips of those basis elements using $\mathcal{F}_{flipop}$, we can, for any atomic fault $F \in \mathcal{N}$, construct a spacetime-equivalent fault $F' \in \langle \mathcal{F}_{flipop} \rangle$.
 But by \cref{prop:changing-fault-gadgets-2}, we can rewrite fault gadgets for spacetime-equivalent faults into one another, so the gadget for $F$ can be rewritten into the gadget for $F'$.
\end{proof}

For our normal form, we identify specific flip operator collections that allow us to push all faults to the boundary.
For this, we first observe:
\begin{proposition}
    \label{prop:pauli-web-basis}
    Let $D$ be a ZX diagram with $n$ boundary edges.
    Then there exists a basis of all Pauli webs $S_1, \dots, S_n, R_1, \dots, R_d$, such that $S_1, \dots, S_n$ are stabilising Pauli webs and $R_1,\dots, R_d$ are detecting regions.
\end{proposition}
\begin{proof}
    Let $S_1,\dots,S_{n'}, R_1,\dots,R_{d'}$ be a basis of Pauli webs for $D$.

    We first observe that $n' \geq n$.
    As $D$ has $n$ boundary edges, it has $n$ independent stabilisers.
    By~\autocite{borghansZXcalculusQuantumStabilizer2019}, we know that for each stabiliser there must exist a corresponding Pauli web with the same action on the boundary, and the independence of the stabilisers translates onto independence of the Pauli webs.
    This means that from the assumed basis $S_1,\dots,S_{n'},R_1,\dots,R_{d'}$ at least $n$ of the basis elements must be stabilising Pauli webs.

    If $n' > n$, we can find a new basis where we only have $n$ stabilising Pauli webs.
    Since $D$ only has $n$ independent stabilisers, there must be some linear combination of Pauli webs $S_{i_1}',\dots,S_{i_j}'$ that has no action on the boundary, \ie it is a detecting region.
    But then we can replace one of the elements, \eg $S_{i_1}'$, by the linear combination.
    Clearly, this is still a basis for all Pauli webs, as the original element may be recovered by combining with the remaining elements $S_{i_2}',\dots,S_{i_j}'$.
    However, the new basis contains an additional detecting region and one fewer stabilising web.
    We can do this until we have no more stabilising Pauli webs whose action on the boundary is linearly dependent, which will be the case when $n' = n$.
    Therefore, the updated basis of detecting regions will have exactly $n$ stabilising Pauli webs.
\end{proof}

Next, we can state:
\begin{proposition}
    \label{prop:flip-operators-construction}
    Let $D$ be a ZX diagram with $n$ boundary edges.
    Then there exists a basis of Pauli webs $S_1, \dots, S_n, R_1, \dots, R_d$ along with a flip operator collection such that
    \begin{itemize}
        \item each flip operator consists of exactly one edge flip
        \item the flip operators for the stabilising Pauli webs act on only boundary edges
    \end{itemize}
\end{proposition}
\begin{proof}
    By~\cref{prop:pauli-web-basis}, we know there exists a basis of Pauli webs $S_1, \dots, S_n, R_1, \dots, R_d$ such that exactly $n$ of the basis elements are stabilising Pauli webs.
    We now construct a new basis $R_1',\dots,R_d',S_1',\dots,S_n'$ along with a flip operator collection that satisfies our requirements.
    We do this incrementally for each basis element, starting with the detecting regions.

    Starting with $i = 1, \dots, d$, we pick one of the edges $R'_i$ highlights and consider a flip operator $F_{R'_i}$ that anticommutes with the action of $R'_i$ on that edge.
    This means that $F_{R'_i}$ anticommutes with at least $R'_i$, but there may be other basis elements it anticommutes with.
    To ensure that $F_{R'_i}$ only anticommutes with $R'_i$, for all $j \neq i$ we update $R'_j \leftarrow R'_j R'_i$ if $R'_j$ anticommutes with $F_{R'_i}$ and similarly for the $S'_j$'s.
    The elements still form a basis for all Pauli webs, as the original element may be recovered by combination with $R'_i$.
    So $F_{R'_i}$ is a flip operator for $R'_i$, and the flip operators for the basis elements with $j < i$ remain valid since, through previous iterations, we ensured that they do not interfere with $R'_i$.

    Once we have chosen the flip operators for the detecting regions, we can continue with the flip operators for the stabilising Pauli webs, proceeding in the same fashion.
    However, we can now pick flip operators on the boundary, as each stabilising web non-trivially highlights at least one boundary edge by definition.
    Furthermore, changing the basis as above will never prevent us from picking a flip operator on the boundary by forming a detecting region, since stabilising basis elements correspond to the $n$ independent stabilisers of $D$.

    Therefore, we have constructed a basis of all Pauli webs $R'_1, \dots, R'_d,S'_1, \dots, S'_n$ along with a flip operator collection consisting of single edge-flips, where the flip operators for the $S'_i$'s act on the boundary of $D$.
\end{proof}

Leveraging \cref{prop:flip-op-rewrite} and \cref{prop:flip-operators-construction}, we can rewrite any diagram $D_{\mathcal{N}}$ into a spacetime-equivalent one where all faults only act on flip operators.
Furthermore, the only flip operators that remain inside the diagram are flip operators for the detecting regions.

\begin{example}{Four-legged spider}{}
    We can now push all the fault gadgets of our diagram to only act on the flip operators from \cref{ex:flip-operators}:
    \[\tikzfig{running-example/move-all-gadgets}\]
\end{example}

\subsection{Removing Internal Flip Operators}\label{subsec:removing-flip-operators}
We have now taken major steps towards fully separating the fault gadgets from the bulk of the diagram.
However, fault gadgets may still have targets in the bulk of the diagram: the flip operators for the detecting regions.
Therefore, as a last step to fully isolate the fault gadgets, we now have to remove these flip operators from the diagram.

First, we observe:
\begin{proposition}
    \label{prop:zero-diagram-disconnect}
 Let $D$ be a Clifford ZX diagram with boundary edges $B$, and let $b \in B$ be a boundary edge such that capping $b$ with a green $\pi$-phase makes the composite diagram zero, \ie:
    \[ \tikzfig{04-pushing-out/zero-diagram-precondition} \]
 Then we can rewrite $D$ fault-equivalently as:
 \begin{equation}
        \label{eq:zero-diagram-disconnect-proposition}
 \tikzfig{04-pushing-out/zero-diagram-proposition}
 \end{equation}
\end{proposition}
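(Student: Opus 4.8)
The plan is to read the hypothesis as a semantic stabiliser statement and then realise the disconnection through idealised rewrites, matching the residual faults across the cut. The green $\pi$-cap on $b$ is the effect $\langle -|_b$, so the precondition $\langle -|_b D = 0$ says that $D$ lives entirely in the $+1$-eigenspace of $X$ on edge $b$; \ie $X_b$, the single-edge Pauli acting as $X$ on $b$ and trivially elsewhere, is a stabilising Pauli web of $D$. For a stabiliser state a weight-one stabiliser forces its qubit to be unentangled and pinned to the fixed eigenstate, so as linear maps $D = D' \otimes |{+}\rangle_b$, where $|{+}\rangle_b$ is the phase-free green state and $D' = \langle +|_b D$ is $D$ capped on $b$ by a phase-free green effect. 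This is exactly the right-hand diagram of \cref{eq:zero-diagram-disconnect-proposition}: severing $b$ from the bulk and re-attaching it as a fresh $|{+}\rangle_b$.

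Having identified the target, I would obtain it by a purely idealised derivation. The identity $D = D' \otimes |{+}\rangle_b$ is an equality of Clifford linear maps, so by the completeness theorem for the Clifford ZX calculus it is derivable using only the fault-free Clifford rules of \cref{fig:complete-axioms}; concretely this amounts to inserting the resolution of the identity on $b$ and discarding the $|{-}\rangle\langle{-}|$ branch, which vanishes by the precondition. Because all of these moves act solely on the idealised skeleton of $D_{\mathcal N}$, the rule $\TextFEElim$ licenses their application in the presence of the surrounding fault gadgets, and since fully idealised rewrites are sound for fault equivalence (by the soundness proposition) the whole derivation preserves $\FaultEq$.

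It then remains to account for faults incident to the severed region, since the one place the two diagrams genuinely differ is the connectivity around $b$. Any fault that the left-hand side carries on $b$ (or on the internal edges folded into the cap) must be reproduced on the right-hand side where that wire is now isolated, and conversely. Here the stabiliser $X_b$ does the work: its $X$-action on $b$ is inconsequential (\cref{prop:generating-inconsequential-faults}), so every fault touching $b$ is spacetime equivalent to one whose $b$-component is $I$ or $Z$, and such a $Z$ is matched verbatim by a $Z$ on the fresh $|{+}\rangle_b$ wire because $Z|{+}\rangle = |{-}\rangle$ reproduces the factored-out branch. By \cref{prop:spacetime-equivalence-undetectable} and \cref{corr:spacetime-equivalence}, matching which stabilising webs a fault anticommutes with already fixes the faulty diagram, while faults that the cut renders detectable satisfy the detectable branch of fault boundedness and need no witness.

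The main obstacle is the weight bookkeeping in this last step, not the semantics. To close $D \FaultBnd \text{RHS}$ and $\text{RHS} \FaultBnd D$ simultaneously I must define the transported noise model on the fresh $|{+}\rangle_b$ wire so that each reassigned atomic fault keeps its weight, and then check that the $X_b$-freedom used to reduce the $b$-component never forces a witness of strictly larger weight. The genuinely delicate point is ruling out that a $b$-fault is cheaper to reproduce on the cap side, where it would instead send the diagram to zero and hence be detectable; this is precisely what the stabiliser $X_b$, extracted from the precondition, is designed to settle.
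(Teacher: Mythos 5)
Your first two paragraphs are, in essence, the paper's own proof: since both sides of \cref{eq:zero-diagram-disconnect-proposition} are fully idealised, \cref{prop:fully-idealised-equivalence} reduces the claim to semantic equality, and the identity $D = \bra{+}_b D \otimes \ket{+}_b$ follows from the precondition by exactly your resolution-of-identity argument --- the paper verifies it by checking the action on the two $X$-basis inputs on $b$ (taking care that these checks are scalar accurate) and then invokes Clifford completeness to conclude that the fault-free rules suffice for the rewrite.

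The problem is your third and fourth paragraphs, which introduce a proof obligation that does not exist and then leave it open. The proposition is a statement about two \emph{fully idealised} diagrams: no noise model acts on either side, so there are no faults incident to $b$ (or to the cut) to transport, no weights to book-keep, and no ``delicate point'' about whether a $b$-fault is cheaper to reproduce on the cap side. By \cref{prop:fully-idealised-equivalence}, semantic equality alone already gives $\FaultEq$; your closing paragraph presents the weight-matching as the main unresolved difficulty of the proof, when in fact it is vacuous here. The interaction with actual fault gadgets is handled entirely downstream, in \cref{prop:disconnect-spiders-summary}: there the proposition is applied to an idealised sub-diagram in which the (idealised) edges connecting the fault gadgets to the merged Pauli box play the role of boundary edges in $B$, so the gadgets themselves are never inside the region being rewritten and no fault ever needs to cross the cut. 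If you strike your last two paragraphs, what remains is a correct proof along the paper's lines; as written, the proposal claims its own argument is incomplete where no completion is needed.
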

\begin{proof}
 As the two diagrams are both idealised as fault-free, by \cref{prop:fully-idealised-equivalence} we only need to show that the two diagrams are semantically equivalent.
 Then, as our rewrite rules include the complete set of Clifford ZX rewrites for fault-free diagrams, we know that we can rewrite one into the other.

 To show semantic equivalence, we employ the fact that two diagrams are semantically equivalent if and only if they have the same action on all elements of a basis~\autocite{coecke2018picturing}.
 For our purposes, we choose the $X$-basis, represented through green $\{0,\pi\}$-phase spiders.
 We have:
    \[ \tikzfig{04-pushing-out/disconnect-+} \]
 and
    \[ \tikzfig{04-pushing-out/disconnect--} \]
 So the two diagrams have the same action on the $X$-basis elements.

 We remark that for these equalities to compose to the equality we desire, it is crucial that any reasoning done here is scalar accurate.
 Therefore, we have rescaled the LHS diagram accordingly.
 As the only equalities we used were OCM and the spider fusion rule, which are both scalar accurate, the equalities are also scalar accurate.
 But now, as the rescaled diagram is exactly equivalent to the RHS, we know that the original two diagrams are semantically equivalent up to a global scalar.
\end{proof}

But then we can finally state:
\begin{proposition}
    \label{prop:disconnect-spiders-summary}
 Let $D$ be a non-zero ZX diagram with some noise model $\mathcal{N}$.
 Then, we can fault-equivalently rewrite $D_{\mathcal{N}}$ into a diagram $D_{\mathcal{N}'}$ such that all faults drawn from $\mathcal{N}'$ are represented by fault gadgets whose targets only act on the boundary or free-floating spiders.
\end{proposition}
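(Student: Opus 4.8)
The plan is to build directly on \cref{prop:flip-op-plus-boundary}, which already rewrites $D_{\mathcal{N}}$ fault-equivalently so that every fault gadget has targets only on flip operators of a chosen collection $\mathcal{F}_{flipop}$ and on boundary edges. The boundary targets are already in the desired form, so the sole remaining obstruction is the flip-operator targets that still sit in the bulk. By \cref{prop:independent-detecting-regions-flip-ops} we may take $\mathcal{F}_{flipop}$ to consist of single edge flips, so each such bulk target is a single $X$ or $Z$ Pauli box inserted on some internal edge $e_i$. The goal is to peel these off the bulk one at a time, re-attaching each to a fresh free-floating spider.

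First I would isolate a single flip-operator target on an edge $e_i$; if several targets from different gadgets share that edge, \cref{prop:commuting-targets} lets me commute them apart so the chosen one can be treated in isolation. I then cut the idealised control edge $b$ connecting this target's Pauli box to its distribution spider, exposing $b$ as a boundary edge of the subdiagram consisting of the bulk together with this one target. Since the bulk is fully idealised and targets are fault-free by definition, this subdiagram carries no faults, so \cref{prop:fully-idealised-equivalence} and compositionality of fault equivalence let me rewrite it in isolation and reconnect at $b$ afterwards.

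The key observation is that this subdiagram satisfies the hypothesis of \cref{prop:zero-diagram-disconnect}. Take a $Z$ flip operator (the $X$ case is the colour inverse): its target is a green spider carrying the two halves of $e_i$ together with the control leg $b$. Capping $b$ with a green $\pi$ merges into that spider and turns it into a $\pi$-phased green spider on $e_i$, \ie it applies a $Z$ edge flip to the fault-free diagram. But each flip operator anticommutes with exactly one detecting region by construction, so this edge flip is detectable and the capped diagram is therefore zero. Hence \cref{prop:zero-diagram-disconnect} applies: disconnecting $b$ restores $e_i$ to a plain wire (the severed control, capped with $\ket{+}$, collapses the Pauli box to an identity) and leaves a free-floating spider attached to the distribution spider in place of the bulk target. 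The flip operator now acts on a free-floating spider rather than on the bulk.

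Iterating over all flip-operator targets removes every one of them from the bulk, yielding a diagram $D_{\mathcal{N}'}$ whose fault-gadget targets act only on the boundary or on free-floating spiders, as claimed. I expect the main obstacle to be the bookkeeping in the key step: verifying precisely that capping the control with a green $\pi$ reproduces the edge-flip fault (so that detectability forces the zero diagram), that capping with $\ket{+}$ after the disconnect genuinely restores $e_i$, and that the colour-inverted $X$-flip case and the several-gadgets-per-edge case are both handled correctly.
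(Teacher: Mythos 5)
Your overall route (start from \cref{prop:flip-op-plus-boundary}, then use \cref{prop:zero-diagram-disconnect} to peel the flip operators off the bulk) is the paper's route, but your per-target disconnection step has a genuine gap: it fails whenever more than one fault gadget targets the same flip-operator edge, which is the generic situation --- after \cref{prop:flip-op-plus-boundary}, \emph{every} gadget whose fault anticommutes with the detecting region $R_i$ carries the same flip operator $F_{R_i}$ as a target (in the running example, three gadgets share one flip-operator edge). \cref{prop:commuting-targets} only reorders the Pauli boxes along the edge $e_i$; it does not remove them from it. Consequently, the ``subdiagram consisting of the bulk together with this one target'' must exclude the other boxes sitting on $e_i$, meaning the edge is cut open where they sit and those cut ends become boundary edges of the subdiagram. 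On this cut subdiagram the Pauli web highlighting $e_i$ is no longer a detecting region (it acts non-trivially on the new boundary), so capping the control $b$ with a green $\pi$ does \emph{not} yield the zero diagram, and the hypothesis of \cref{prop:zero-diagram-disconnect} is not met; detectability of $p_i$ is a property of the closed bulk, not of the bulk with $e_i$ cut open. Keeping the other targets inside the subdiagram with their controls open does not rescue the argument either: plugging $\ket{-}$ into an odd number of those controls composes an even total number of flips on $e_i$, returning the non-zero bulk, so the capped diagram is again non-zero.

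The missing idea is the paper's merging step, which must happen \emph{before} any disconnection: fuse all same-type Pauli boxes on the flip-operator edge into a single box whose green control spider has one leg per gadget (spider fusion inside the boxes, as in \eqref{eq:pauli-box-clifford-decomposition}), then regard this multi-legged control as a single control leg feeding a splitting green spider that belongs to the environment $D_{env}$. Capping that single control with $\pi$ fires the merged box exactly once, applying $p_i$ to the closed bulk, which is zero by detectability; \cref{prop:zero-diagram-disconnect} then applies, and the splitting spider becomes the free-floating detecting-region spider shared by all gadgets that used this flip operator. So there is one disconnection per flip operator, not one per target. Your argument is correct as written only in the special case that a single gadget targets the given flip-operator edge.
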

\begin{proof}
    First, we use~\cref{prop:flip-operators-construction} to find a flip operator collection for $D$.
    Then, we use~\cref{prop:flip-op-rewrite} to bring the diagram into a form where all faults only act on these flip operators and the boundary.
    Finally, we use~\cref{prop:zero-diagram-disconnect} to remove all flip operators that act on internal edges.

    We recall that each flip operator obtained by~\cref{prop:flip-operators-construction} only acts on a single edge.
    Then, in preparation for the last step, we see that we can merge all targets belonging to the flip operator on such an edge:
    \[\tikzfig{04-pushing-out/poping-out-1}\]
    So we effectively recover the single edge flip operator as a Pauli box that is controlled by the fault gadgets.

    But then, as the internal flip operators were constructed specifically for anticommuting with detecting regions, we know that this edge with the new Pauli box on it is in the context of a detecting region, and thus a $\pi$ phase as a control on the box would make the overall diagram zero.
    Thus, by~\cref{prop:zero-diagram-disconnect}, where we symbolize the surrounding diagram as $D_{env}$, we can disconnect a free-floating spider from the remainder of the diagram:
    \[\tikzfig{04-pushing-out/poping-out-2}\]
    After we iteratively apply this to remove all internal flip operators, we have fault-equivalently rewritten the diagram into a form where all faults only act on the boundary or free-floating spiders.
\end{proof}

We can observe that if an odd number of faults occur that are connected to a free-floating spider, the overall diagram becomes zero.
As such, the free-floating spiders are taking the role of the detecting regions, which are being detached from the original diagram.
Faults, now represented as fault gadgets, are connected exactly to those free-floating spiders that represent the detecting regions they violate.
Therefore, we can read off the detector-error model~\autocite{Gidney2021stimfaststabilizer} directly from the diagram; each fault gadget represents an atomic fault, each free-floating spider a detecting region, and the connections between them represent which faults violate which detecting regions.

\begin{example}{Four-legged spider}{}
    In our example, we only have one detecting region, corresponding to one flip operator.
    Disconnecting this operator, we get:
    \[\tikzfig{running-example/pop-out-detecting-region}\]
    The only part of the diagram with a non-trivial semantics is the fault-free subdiagram highlighted by the dotted box.
    Everything else consists of fault gadgets and detecting regions which have a trivial semantics, however, non-trivial behaviour under noise.
\end{example}

We have now managed to completely separate the semantic part of the diagram, which is idealised as fault-free, from the faults, which are only represented through fault gadgets connected to the free-floating detecting regions and the boundary edges using only fault-equivalent rewrites.
As we have previously shown that these rewrites are sound, \ie that they preserve fault equivalence, we are guaranteed that the resulting diagram must be fault-equivalent to the diagram we started with.
    \section{Normal form}\label{sec:normal-form}
So far, we have rewritten our diagram such that all fault gadgets only act on the boundary or on detecting region spiders.
We thus achieved a separation between the part of the diagram describing the underlying semantics and the part of the diagram describing the noise model through fault gadgets.
By rewriting both parts into respective unique normal forms, we can achieve an overall unique normal form for the entire diagram.

The first part, describing the underlying semantics, is a fully idealised Clifford ZX diagram.
Every Clifford ZX diagram has a unique normal form, namely the reduced AP form~\autocite{KissingerWetering2024Book}.
We allow all rules from~\autocite{backens2017AsimplifiedStabiliserZXCalculus} in their fully idealised form, and as they provide a complete calculus, the reduced AP form is reachable through our axiomatisation.
Thus, we can reuse this form in the fully idealised setting for the semantic part of the diagram.

It remains to show that the fault gadget part of the diagram has a unique normal form.
For this, we consider how the pushed-out fault gadgets of two fault-equivalent diagrams may differ, which reduces to four challenges:
\begin{enumerate}
    \item There are many possible choices of flip operators on the boundary.
    To show that two diagrams are fault-equivalent, we have to ensure that we can choose the same flip operators on both diagrams.
    \item The fault-equivalent diagrams may have different detectable faults.
    Fault equivalence only requires their undetectable faults to be the same.
    \item Two fault-equivalent diagrams may have different atomic faults, \ie generating sets for all possible faults.
    Fault equivalence only cares about the group of all possible undetectable faults.
    \item In a single diagram, multiple faults may have the same effect.
    For fault equivalence, we only care about the lowest-weight fault in each spacetime equivalence class.
    \item Fault gadgets naturally have an order on the boundary edges, \ie every diagram features a sequence of fault gadgets.
        For fault equivalence, we only care about the set of all undetectable faults, \ie an unordered collection.
\end{enumerate}
We will tackle these challenges individually and fault-equivalently until reaching a unique normal form for fault gadget descriptions of noise models.

For the first challenge, we observe that across the two fault-equivalent diagrams under comparison, we must be careful to choose the same flip operator collection for the boundary.
Depending on the Pauli web basis we started with, the construction from \cref{prop:flip-operators-construction} may produce different flip operator collections.
We now show:
\begin{proposition}
    \label{prop:flip-op-stabiliser-order}
    Let $D$ be a ZX diagram with $n$ boundary edges.
    There exists a total order for stabiliser flip operator collections valid for $D$ such that:
    \begin{itemize}
        \item there exists a uniquely smallest edge-flip collection \wrt that order, and
        \item for any two diagrams $D_1 = D_2$, the uniquely smallest edge-flip collections \wrt that order are the same.
    \end{itemize}
\end{proposition}
\begin{proof}
    We start by defining a total order of stabiliser flip operator collections.
    Since $\{I,X,Y,Z\}$ can be totally ordered lexicographically, and a ZX diagram has a canonical well-order of its boundary edges, we receive an induced total order on Pauli products.
    Further, we can lift this total order to sets of Pauli products of a fixed size: For two collections of the same fixed size, determine the smallest operator in each and compare it across the collections, progressing to compare the second-smallest operator iff the smallest operators are equal, etc.

    The total number of possible flip operator collections for a diagram $D$ is finite, so there exists a uniquely smallest collection for $D$, and in particular, there exists one consisting just of edge-flips.
    Furthermore, two diagrams $D_1,D_2$ satisfy $D_1 = D_2$ iff their stabilisers are the same, so \cref{prop:flip-operators-construction} yields a single flip operator collection that works for both diagrams.
    But then, from that shared start point, the uniquely smallest edge-flip collection must be the same.
\end{proof}
We remark that, even though it is convenient for defining a normal form, in practice, it is not required to use the smallest flip operator collection.
Once a stabiliser flip operator collection is constructed for one diagram, it can be reused for any other semantically equivalent diagram.

The second challenge is that two fault-equivalent diagrams might have different detectable faults.
To illustrate, in our running example, the unfused spider has a detecting region and therefore has detectable faults, whereas the fused spider has no detectable faults.
We show that we can remove detectable faults entirely by transforming the noise model:
\begin{proposition}
    \label{prop:detecting-region-popping}
 Let $D$ be a non-zero ZX diagram with a noise model $\mathcal{N}$.
 Then, using the fault-equivalent axioms, we can fault-equivalently rewrite $D_{\mathcal{N}}$ into $D'$ where all fault gadgets represent undetectable faults, \ie no fault gadget is connected to a detecting region spider.
\end{proposition}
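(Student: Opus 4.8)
The plan is to eliminate detecting region spiders one at a time, each time using a flip operator from the collection constructed in \cref{prop:independent-detecting-regions-flip-ops} together with the detection rule $\TextFEDetect$. After the pushing-out procedure of \cref{prop:disconnect-spiders-summary}, every detecting region is represented by a free-floating spider, and each such spider is connected precisely to those fault gadgets whose faults anticommute with the corresponding region. The key observation is that a fault gadget connected to a free-floating $\pi$-detecting spider represents a \emph{detectable} fault exactly when it is attached to an odd number of such spiders; so to remove a detecting region we must disconnect every fault gadget attached to it while preserving fault equivalence.

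First I would focus on a single free-floating detecting spider $r$. By \cref{prop:changing-fault-gadgets-2}, I can change each fault gadget attached to $r$ by multiplying it with a chosen flip operator for $r$ (a fault anticommuting with exactly that region), which toggles whether that gadget connects to $r$. The strategy is to pick one distinguished fault gadget $g_0$ attached to $r$ of minimal weight among all gadgets touching $r$, and then for every \emph{other} gadget $g_i$ attached to $r$, compose $g_i$ with $g_0$ using $\TextFEComb$ so that the combined fault no longer anticommutes with $r$ (the two anticommutations cancel). This leaves only $g_0$ attached to $r$, at the cost of introducing combined gadgets whose weights are the sums of the original weights, which is exactly what $\TextFEComb$ licenses without disturbing the induced weight function. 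Crucially, since fault equivalence only constrains undetectable faults (challenge 1 in \cref{sec:normal-form}), introducing these higher-weight combinations is harmless.

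With $r$ now attached to the single gadget $g_0$, the detectable fault $g_0$ is the unique atomic fault anticommuting with $r$, and it can be discarded via $\TextFEDetect$: the detection rule precisely removes an atomic fault that is detected by the diagram, together with the now-isolated detecting spider. I would verify that $g_0$ is genuinely detectable here, which follows because its flip operator anticommutes with $r$ and therefore, by \cref{prop:faults-flip-pauli-webs} and \cref{prop:detecting-regions}, sends the diagram to zero. After applying $\TextFEDetect$, the spider $r$ is no longer connected to any fault gadget and can be removed as a free-floating component with trivial semantics.

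The main obstacle I anticipate is the bookkeeping when gadgets are attached to \emph{several} detecting spiders simultaneously, since toggling a gadget's connection to $r$ via a flip operator may alter its connections to other detecting regions. This is where the structure of the flip operator collection does the essential work: by \cref{prop:independent-detecting-regions-flip-ops} the flip operators are chosen so that the $i$-th one anticommutes with exactly one region $R_i$ and commutes with all others, so combining a gadget with $g_0$ affects only the connection to $r$ and leaves all other detecting-region connections intact. I would therefore process the detecting regions in the order fixed by that collection, arguing inductively that clearing region $R_i$ does not reconnect any already-cleared region $R_j$ with $j<i$. Once all detecting regions have been processed, no fault gadget is connected to any detecting spider, so every remaining fault gadget represents an undetectable fault, establishing the claim.
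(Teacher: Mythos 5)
Your proof does not go through as written, and the central obstruction is your reading of $\TextFEComb$. That rule \emph{introduces} the combination of two fault gadgets while \emph{keeping both originals}: its right-hand side contains the gadgets of weight $w_1$, $w_2$ \emph{and} a new gadget of weight $w_1+w_2$ (this is exactly what its soundness proof in \cref{appendix:missing-proofs} verifies). So after you combine every $g_i$ with $g_0$, all of $g_0,g_1,\dots,g_k$ are still attached to $r$; the step \say{this leaves only $g_0$ attached to $r$} is false. Nor can you repair it by simply deleting each $g_i$ afterwards: although $g_i$ itself is detectable (so needs no correspondence), the \emph{undetectable} product $g_ig_j$ has weight $w_i+w_j$ in the original model, while in your pruned model it could only be generated as $(g_0g_i)(g_0g_j)$ with induced weight $2w_0+w_i+w_j$; this strictly increases the weight of an undetectable fault (weights live in $\mathbb{N}^+$), breaking fault equivalence, and no rule in \cref{fig:complete-axioms} licenses such a deletion. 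Your opening observation is also unsound: \cref{prop:changing-fault-gadgets-2} only rewrites a gadget into one for a \emph{spacetime equivalent} fault, and multiplying by a flip operator is precisely not such an operation --- it changes which detecting regions the fault anticommutes with, i.e.\ its detectability --- so one cannot \say{toggle} connections to $r$ this way. (A minor further slip: a gadget is detectable when it is attached to \emph{at least one} detecting spider, not to an odd number of them.)

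The paper's proof avoids all of this because the removal and the introduction of combinations happen in a \emph{single} sound rewrite: one application of $\TextFEDetect$ deletes one gadget attached to the detecting spider $s$ while simultaneously introducing its pairwise combinations with \emph{all gadgets still attached to $s$}, and keeps those other gadgets attached. Iterating this at $s$ removes the attached gadgets one at a time; after the last application, $s$ is a free-floating scalar (equal to $2$) and can be dropped, and every undetectable pairwise product $g_ig_j$ is present as its own gadget with the correct weight $w_i+w_j$. Repeating this over every detecting spider gives the claim. In particular, no flip operators, no ordering of the regions, and no minimal-weight choice of $g_0$ are needed: $\TextFEDetect$ acts locally at each spider, and any combination gadgets still attached to other detecting spiders are simply handled when those spiders are processed.
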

\begin{proof}
 As shown in \cref{sec:separating-semantics-and-noise}, we can fault-equivalently rewrite $D_{\mathcal{N}}$ such that all fault gadgets only act on the boundary of the diagram and detecting region spiders.
 For a single detecting region spider $s$, $\FEDetect$ can be applied to reduce the number of fault gadgets connected to $s$ by one, at the expense of introducing pairwise combinations with all gadgets still connected to $s$.
 This rewrite may itself be iterated until no fault gadget connects to $s$, leaving it as a free-floating scalar diagram equal to the scalar $2$.
 But as we handle diagrams up to a non-zero scalar, we can simply remove $s$.
 We apply this rewrite until no such spiders remain, leaving fault gadgets that can only act on the boundary.
 These fault gadgets represent faults that must be undetectable.
\end{proof}

\begin{example}{Four-legged spider}{}
 Applying this procedure to our running example, we get:
    \[\tikzfig{running-example/removing-detecting-region}\]
 We make use of the fact that Pauli boxes can be merged, as shown in the proof of~\cref{prop:disconnect-spiders-summary}, to write Pauli boxes with multiple output wires, instead of multiple Pauli boxes.
\end{example}

Note that when there are two detecting region spiders $s,s'$ that are connected through a common fault gadget, the number of combinations introduced through the above process is multiplicative in the number of combinations involving that gadget for $s$ and $s'$ individually.
In the worst case, the total number of combinations may grow exponentially with the number of detecting regions.
However, this is not unexpected: Checking fault equivalence is NP-hard~\autocite{rodatz2025faulttoleranceconstruction}.
Therefore, any unique normal form is expected to require exponentially many steps to be obtained.

The third challenge is that we are not yet guaranteed to have described \textit{all} possible undetectable faults.
In particular, fault-equivalent diagrams may have different, equivalent generating sets of atomic faults.
To overcome this, in our normal form, we simply include all possible faults.
Therefore, if two sets of atomic faults generate the same faults, we must end up with the same fault gadgets.

We start locally by producing combinations of two fault gadgets:
\begin{proposition}
    \label{prop:combining-fault-gadgets}
 Given two fault gadgets, we can fault-equivalently introduce their combination into the diagram using the fault-equivalent axioms.
\end{proposition}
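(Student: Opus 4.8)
The workhorse here is the $\TextFEComb$ axiom, which introduces the combination of two fault gadgets sharing a common spawning configuration. The plan is to bring the two given gadgets into the shape required by $\TextFEComb$, apply it to spawn a third gadget carrying the product fault, and then clean that gadget up so that it is a genuine fault gadget in normal form. Let the two gadgets represent faults $F_1, F_2$ and carry weight annotations $w_1, w_2$ on their spawning edges. Their combination should be a fault gadget for $F_1 F_2$ whose spawning edge is annotated with $w_1 + w_2$. Soundness of the rule set (already established) is what guarantees this is harmless: adding $F_1 F_2$ as a new atomic fault of weight $w_1 + w_2$ leaves the induced weight function of \cref{def:induced-weight} unchanged, because any product exploiting the new atomic fault can be replaced by the same product using $F_1$ and $F_2$ separately at identical total weight, so no minimum is ever lowered, and of course no existing weight is raised.

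Concretely, I would first move the two spawning edges next to one another — they are just weight-annotated green spiders and may be freely repositioned — so that the local pattern of $\TextFEComb$ matches. Applying $\TextFEComb$ then produces a third gadget whose target set is the (multiset) union of the targets of the two originals, with spawning weight $w_1 + w_2$, while leaving the two originals in place. The output is not yet a bona fide fault gadget whenever the two original target sets overlap: on an edge carrying targets from both $F_1$ and $F_2$, the new gadget has two stacked targets that must be fused into the single target dictated by Pauli multiplication. To perform this fusion I reuse, verbatim, the target-merging procedure developed in the proof of \cref{prop:changing-fault-gadgets-1}: decompose every $Y$ target into an $X$ and a $Z$ target, commute like-typed targets together (using \cref{prop:commuting-targets}, since here the targets originate from different gadgets), cancel equal targets via the self-inverse rule, and recombine a leftover $X$ and $Z$ back into a $Y$. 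This leaves exactly one target per edge obeying the multiplication rules of the Pauli group, so the result is precisely a fault gadget for $F_1 F_2$.

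The main obstacle is the bookkeeping around overlapping targets together with the weight annotation. I expect the delicate points to be (i) verifying that $\TextFEComb$ genuinely yields the \emph{sum} $w_1 + w_2$ on the new spawning edge rather than, say, the minimum, since it is exactly this additive behaviour that makes the introduced atomic fault weight-consistent with jointly firing $F_1$ and $F_2$; and (ii) confirming that every target-fusion step is weight-neutral. The latter holds because all fusion moves act entirely within the idealised, fault-free portion of the gadget (the targets and their Clifford conjugations from \cref{eq:pauli-box-clifford-decomposition}) and never touch a spawning edge, so they cannot alter the noise model. Once both points are checked, transitivity and compositionality of fault equivalence let the whole construction be carried out inside an arbitrary surrounding diagram, completing the proof.
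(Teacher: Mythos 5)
Your strategy matches the paper's: its proof is a purely diagrammatic derivation that applies $\TextFEComb$ to the spawning edges of the two gadgets and then rewrites the result into a genuine fault gadget for $F_1F_2$, and both of your delicate points check out --- the soundness proof of $\TextFEComb$ in \cref{appendix:missing-proofs} confirms that the new atomic fault carries exactly the sum $w_1+w_2$, and the target manipulations are weight-neutral since they act only on idealised edges.

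There is, however, a gap in the middle of your argument. $\TextFEComb$ is a local axiom acting on two $Z$-flip-annotated edges; applied to the two spawning edges it does \emph{not} directly produce a gadget whose targets are the union of the targets of $F_1$ and $F_2$. What it produces is a new gadget whose two $Z$-type targets sit \emph{on the spawning edges} of the two original gadgets --- a gadget targeting gadgets. The substance of the paper's three-step derivation is precisely the flattening of this nested structure: the new gadget's $Z$-targets must be pushed through the green spawning spider and the red distribution spider of each original gadget (copy-type rewrites, i.e.\ $\TextCopy$ and $\TextFusion$ in idealised form together with $\TextFEElim$), which duplicates each original gadget's targets onto the new gadget. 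Only after this flattening does the new gadget have the target multiset you claim, and only then does your merging procedure (commutation via \cref{prop:commuting-targets} and the fusion rules from \cref{prop:changing-fault-gadgets-1}) become applicable. This step cannot be skipped: a gadget whose targets act on spawning edges is not a fault gadget for $F_1F_2$ in the sense of the definition (targets must sit on the edges indicated by the fault), and the downstream normal-form machinery --- in particular \cref{prop:merge-gadgets}, which requires gadgets with literally identical targets --- would not apply to the unflattened object.
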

\begin{proof}
 We have:
\[\tikzfig{05-normal-form/comb-rule-proof-1}\]
\[\tikzfig{05-normal-form/comb-rule-proof-2}\]
\[\tikzfig{05-normal-form/comb-rule-proof-3}\]
\end{proof}
To ensure that there is at least one fault gadget describing each undetectable fault, we apply \cref{prop:combining-fault-gadgets} iteratively.
We have:
\begin{proposition}
    \label{prop:unfold-all-gadget-combinations}
 A diagram $D_\mathcal{N}$ describing its noise model using fault gadgets can be fault-equivalently rewritten into a diagram $D_{\mathcal{N}'}$ describing a noise model $\mathcal{N}': \mathcal{F}' \rightarrow \mathbb{N}^+$, such that every undetectable fault $F \in \langle \mathcal{F}' \rangle$ is described by at least one fault gadget annotated with $\weightfunc(F)$.
\end{proposition}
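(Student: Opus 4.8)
The plan is to introduce, for every undetectable fault in the group $\langle \mathcal{F} \rangle$ generated by the current atomic faults, a fault gadget carrying the correct minimum weight $\weightfunc(F)$, using only the combination procedure of \cref{prop:combining-fault-gadgets}. The key structural observation is that $\langle \mathcal{F} \rangle$ is a subgroup of the finite Pauli group $\paulifaults{|E|}$, so there are only finitely many faults to account for and the whole process terminates. Since the new atomic set $\mathcal{F}'$ will consist of the old atomic faults together with products thereof, we will have $\langle \mathcal{F}' \rangle = \langle \mathcal{F} \rangle$, so it suffices to realise a gadget for each undetectable element of this fixed group.

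Concretely, I would fix for each undetectable $F \in \langle \mathcal{F} \rangle$ an optimal decomposition into atomic faults, i.e. a subset $\widetilde{\mathcal{F}} = \{F_{i_1},\dots,F_{i_k}\} \subseteq \mathcal{F}$ with $\prod_j F_{i_j} = F$ and $\sum_j \mathcal{N}(F_{i_j}) = \weightfunc(F)$, which exists by \cref{def:induced-weight}. Because elements of $\paulifaults{|E|}$ square to the identity, such a decomposition uses each atomic fault at most once, so each $F_{i_j}$ already has its own gadget. I would then build a gadget for $F$ by combining these one at a time: using \cref{prop:commuting-targets} to bring two gadgets adjacent and \cref{prop:combining-fault-gadgets} to introduce their combination, I obtain a gadget for $F_{i_1}F_{i_2}$, then combine it with the gadget for $F_{i_3}$, and so on until a gadget for $F$ is present. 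Each application of $\TextFEComb$ introduces the combination annotated with the sum of the two combined weights, so the resulting gadget for $F$ carries weight $\sum_j \mathcal{N}(F_{i_j}) = \weightfunc(F)$, exactly as required. Note that \cref{prop:combining-fault-gadgets} \emph{adds} the combination without consuming its inputs, so every atomic gadget remains available to be reused for the next target fault.

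I would then check that the induced weight function is preserved throughout. Each combination step adds a new atomic fault whose annotation equals the sum of the weights of two faults already realising its product, hence it creates no cheaper decomposition of any fault; by the soundness of $\TextFEComb$ the induced weight function of $\mathcal{N}'$ agrees with that of $\mathcal{N}$ on all of $\langle \mathcal{F} \rangle$, so the value $\weightfunc(F)$ is unambiguous. Combinations of undetectable faults are again undetectable, since a product of operators that each commute with every detecting region again commutes with every detecting region; thus every gadget introduced represents an undetectable fault.

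The main obstacle I expect is the bookkeeping around minimality of the weight annotation: the blunt strategy of combining all pairs produces, for each fault, gadgets whose weights are sums along arbitrary decompositions, which need not equal $\weightfunc(F)$. The resolution is that the statement only demands \emph{at least one} gadget per fault with the minimal weight, and routing the construction of each $F$ through a fixed optimal decomposition guarantees precisely this; the extra, non-minimal gadgets arising as intermediate products are harmless at this stage and are discarded in the next step by the merging procedure. A secondary point to verify is that processing the faults in any order is legitimate, which follows because every atomic gadget is present from the outset and combination is non-destructive, so each optimal decomposition can always be assembled.
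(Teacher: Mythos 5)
Your proposal is correct and takes essentially the same route as the paper: for each undetectable fault, replicate its minimum-weight decomposition by combining the corresponding atomic gadgets two at a time via \cref{prop:combining-fault-gadgets}, using \cref{prop:commuting-targets} to bring gadgets adjacent, so that the resulting gadget carries the annotation $\weightfunc(F)$. Your additional checks (termination, $\langle \mathcal{F}' \rangle = \langle \mathcal{F} \rangle$, preservation of the induced weight function, and non-destructiveness of the combination step) are details the paper leaves implicit, not a different argument.
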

\begin{proof}
 All undetectable faults in $\langle \mathcal{F}' \rangle$ can, by definition, be algebraically formed by multiplying elements in $\mathcal{F}'$.
 We can replicate any particular combination diagrammatically using fault gadgets by combining gadgets two at a time using \cref{prop:combining-fault-gadgets}.

 In particular, for a given undetectable fault $F$, we can include the combination that determines the weight $\weightfunc(F)$, \ie the combination that has the lowest sum of atomic weights.
 Replicating such a combination with fault gadgets leads to this sum as the annotation, yielding the claim.
\end{proof}
The resulting diagram enumerates all undetectable faults of $\mathcal{N}'$, and since the rewrite is fault-equivalent, it also enumerates all undetectable faults of the original noise model $\mathcal{N}$.
We have thus, in addition to the first two, resolved the third challenge.

Once again, this step substantially increases the number of fault gadgets, to the point that there is little value in showing the result for our running example.
We will return to the example after the next step, which reduces the number of fault gadgets.

For the fourth challenge, we realise that two fault gadgets that have the same targets describe faults that trivially have the same effect.
When checking fault equivalence, one of these faults is redundant; fault equivalence only requires that for any undetectable fault on one side, there exists an equivalent fault of at most the same weight on the other side.
Therefore, if one diagram has two equivalent fault gadgets, we can fault-equivalently remove the one with higher weight.
We have:
\begin{proposition}
    \label{prop:merge-gadgets}
    If two fault gadgets have the same targets, we can fault-equivalently remove the one with higher weight.
\end{proposition}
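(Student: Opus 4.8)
The goal of Proposition~\ref{prop:merge-gadgets} is to show that when two fault gadgets carry identical targets, the heavier-weighted one is redundant and can be removed fault-equivalently. The plan is to invoke the $\TextFEMerge$ rule, which was introduced precisely for this purpose, but since that rule is stated in terms of $Z$ flips on single edges, the main work is to reduce the general case of two gadgets with arbitrary matching targets to a form where $\TextFEMerge$ directly applies. First I would use \cref{prop:commuting-targets} to freely reorder and bring the targets of the two gadgets into direct alignment, so that corresponding targets on each edge sit next to each other; since the two gadgets have \emph{the same} targets by hypothesis, their combination (formed via \cref{prop:combining-fault-gadgets}) produces a fault gadget whose every target cancels pairwise, leaving a gadget with trivial targets, \ie the identity fault.

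With the gadgets aligned, I would argue the core reduction: soundness guarantees that two gadgets with identical targets represent the same fault $F$, hence faults that are trivially spacetime equivalent (they differ by the identity, which is inconsequential). By \cref{corr:spacetime-equivalence} the two faulty diagrams agree, $D^{F_1} = D^{F_2}$, so in the fault-equivalence bookkeeping only the lower weight matters. Diagrammatically, I would show that the combination of the two identical gadgets yields a gadget with no nontrivial targets whose spawning edge carries the sum of the two weights; but a gadget with trivial targets acts as the identity fault and so can be discarded via $\TextFEScalar$. This leaves the two original gadgets, and an application of $\TextFEMerge$ (after normalising the single-edge $Z$-flip presentation, exactly as the rule demands) removes the one of higher weight while retaining the lower-weight copy, which suffices since the induced weight function (\cref{def:induced-weight}) takes the minimum over generating combinations.

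The main obstacle I anticipate is bridging the gap between the abstract statement (\say{same targets}) and the concrete syntactic shape required by $\TextFEMerge$, which is phrased for $Z$ flips on a single spawning edge. Two gadgets with multi-edge, mixed-Pauli targets are not literally in that shape, so I would need to first reduce them: using the Pauli-box Clifford decomposition from \eqref{eq:pauli-box-clifford-decomposition} and the target-merging machinery from the proof of \cref{prop:changing-fault-gadgets-1}, any two gadgets with matching targets can be conjugated into a presentation where the relevant comparison happens on a single $Z$-flip edge feeding both spawning spiders. Once that normalisation is in place, $\TextFEMerge$ applies verbatim. The remaining care is purely in the scalar and global-phase bookkeeping introduced when commuting and combining gadgets, which is handled by $\TextFECommute$ and $\TextFEXPhase$ as elsewhere in \cref{sec:separating-semantics-and-noise}, so no genuinely new difficulty arises beyond this syntactic reduction.
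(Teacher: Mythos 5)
Your proposal is correct and is essentially the paper's proof, which consists of a single diagrammatic derivation: fuse the two gadgets' identical Pauli boxes into shared boxes with two control wires each (so that both spawning edges control the same targets), then apply $\TextFEMerge$ to the two spawning edges, keeping the minimum weight. Note, though, that your middle step is a no-op: \cref{prop:combining-fault-gadgets} only \emph{introduces} the combination gadget alongside the two originals, so forming the trivial-target combination and discarding it with $\TextFEScalar$ returns you exactly to the starting diagram and removes neither original gadget --- all of the content lies in the final normalisation-plus-$\TextFEMerge$ step. Also, the cross-gadget merging needed there is the fusion of identical boxes into a single multi-controlled box (as used in \cref{prop:disconnect-spiders-summary}), not the same-gadget target cancellation machinery of \cref{prop:changing-fault-gadgets-1}, which you cite for this purpose.
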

\begin{proof}
    \[ \tikzfig{05-normal-form/merge-rule-proof} \]
\end{proof}
Incidentally, the fault gadgets we receive after \cref{prop:disconnect-spiders-summary} using the flip operators are already normalised in the context of that diagram: Two fault gadgets have the same targets if and only if their faults have the same effect.
Their effect can be fully described by spacetime-equivalence classes through \cref{prop:spacetime-equivalence-undetectable}, which in turn may be fully described by anticommutation with Pauli webs by \cref{prop:spacetime-equivalence}, and thus also by the flip operators.
By \cref{prop:commuting-targets}, we can commute fault gadgets as required to apply \cref{prop:merge-gadgets}, and repeated application resolves the fourth challenge.

Finally, fault gadgets impose an ordering of the faults, whereas the noise model they represent contains the atomic faults in an unordered collection.
However, we can always impose the lexicographical order of fault gadgets given by their representation as Pauli products to provide a unique ordering at any given point, moving fault gadgets using commutation.
Thus, we have addressed all five challenges.

\begin{example}{Four-legged spider}{}
    For our running example, the flip operators we used form the smallest flip operator collection when ordering the boundary edges clockwise, starting with the top left.
    Performing the full rewrite outlined in this section, we discover seven different equivalence classes of $X$-type faults:
    \[\tikzfig{running-example/final-form}\]
\end{example}

We now define:
\begin{definition}[Fault normal form]
    \label{def:fault-normal-form}
    A diagram $D_\mathcal{N}$ describing its noise model $\mathcal{N}: \mathcal{F} \rightarrow \mathbb{N}^+$ using fault gadgets is in \textit{fault normal form} if:
    \begin{enumerate}
        \item $D_\mathcal{N}$ has a separation of the semantics of $D$ and fault gadgets following \cref{sec:separating-semantics-and-noise}, \textbf{and}
        \item the semantic part $D$ inside $D_\mathcal{N}$ is in its unique normal form following~\autocite{KissingerWetering2024Book}, \textbf{and}
        \item the fault gadgets for all faults in $\mathcal{F}$ are ordered lexicographically, \textbf{and}
        \item for every spacetime equivalence class of faults in $\langle \mathcal{F} \rangle$, exactly one member is in $\mathcal{F}$, \textbf{and}
        \item every fault in $\mathcal{F}$ is represented by a fault gadget constructed from the uniquely smallest edge-flip operator collection obtainable for $D$, \textbf{and}
        \item every fault gadget is annotated with the lowest weight attainable in the spacetime equivalence class of the fault it describes.
    \end{enumerate}
\end{definition}
Intuitively, the fault normal form is the diagram that is obtained by constructing the smallest flip operator collection for the base diagram, repeated application of \cref{prop:detecting-region-popping}, \cref{prop:unfold-all-gadget-combinations}, \cref{prop:merge-gadgets}, and applying a unique ordering to the gadgets.
By construction, the fault normal form features none of the issues outlined at the beginning of this section.
Furthermore, we can infer:
\begin{proposition}
    \label{cor:fault-normal-form-exists}
 Every diagram $D_\mathcal{N}$ describing some noise model using fault gadgets has a fault normal form that is obtained using only the rules from \cref{fig:complete-axioms}.
\end{proposition}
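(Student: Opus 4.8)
The plan is to observe that \cref{def:fault-normal-form} is a conjunction of five conditions, and that each of the preceding propositions in \cref{sec:separating-semantics-and-noise} and \cref{sec:normal-form} was engineered to establish exactly one of them; the proof therefore amounts to sequencing these results so that each later step preserves the invariants secured by the earlier ones. Since every proposition I would cite has already been shown to be a fault-equivalent rewrite derivable from \cref{fig:complete-axioms}, and since fault equivalence is transitive (\cref{prop:w-fault-boundedness-transitive}) and compositional, the resulting chain of rewrites is itself sound and uses only the permitted rules.

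Concretely, I would proceed as follows. First, apply the separation machinery culminating in \cref{prop:disconnect-spiders-summary} to rewrite $D_\mathcal{N}$ into a diagram whose fault gadgets act only on the boundary or on free-floating detecting-region spiders, thereby isolating a fully idealised semantic subdiagram $D$; this secures condition~1. Second, using the fully idealised Clifford rules together with $\FEElim$ (which licenses idealised rewrites adjacent to the noisy part), bring $D$ into its unique reduced AP form; since these rules form a complete Clifford calculus, this form is reachable and condition~2 holds. Third, apply \cref{prop:detecting-region-popping} repeatedly to detach all fault gadgets from detecting-region spiders, leaving a noise model of purely undetectable faults. Fourth, apply \cref{prop:unfold-all-gadget-combinations} to guarantee that every undetectable fault is described by at least one gadget annotated with its induced weight. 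Fifth, apply \cref{prop:reduce-equivalent-faults} to collapse each spacetime equivalence class to a single gadget carrying its lexicographically smallest representative and the minimal weight attained in the class, establishing conditions~3 and~4. Finally, use the commutation of \cref{prop:commuting-targets} to permute the gadgets into lexicographic order, giving condition~5.

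The main thing to check --- and the only real obstacle beyond bookkeeping --- is that these steps do not interfere. The ordering matters: the semantic normalisation of step two must not disturb the gadget structure, which holds because the gadgets attach only at the boundary and the idealised rules act within the fully idealised bulk via $\FEElim$; conversely, the gadget-side steps three through six act only on fault gadgets and their targets, and so leave the reduced AP form untouched. One must also confirm that the unfolding of \cref{prop:unfold-all-gadget-combinations} and the subsequent reduction of \cref{prop:reduce-equivalent-faults} terminate, which they do because the group $\langle\mathcal{F}\rangle$ of undetectable faults is finite, so there are finitely many spacetime equivalence classes to enumerate and collapse. With termination and non-interference in hand, the final reordering yields a diagram satisfying all five clauses of \cref{def:fault-normal-form}, and since each constituent rewrite was drawn from \cref{fig:complete-axioms}, the entire derivation uses only those rules, completing the argument.
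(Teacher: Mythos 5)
Your proposal is correct and follows essentially the same route as the paper: the paper's proof simply states that the result follows from the construction outlined in the section, namely separation (\cref{prop:disconnect-spiders-summary}), idealised Clifford normalisation, then \cref{prop:detecting-region-popping}, \cref{prop:unfold-all-gadget-combinations}, \cref{prop:reduce-equivalent-faults}, and lexicographic reordering --- exactly the sequence you spell out. Your added checks of non-interference and termination are sensible elaborations of the same argument rather than a different approach.
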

\begin{proof}
 This follows directly from the construction outlined in this section.
\end{proof}
    \section{Completeness}\label{sec:completeness}
We have shown that any diagram can be brought into a form that consists of two parts: a completely fault-free Clifford ZX diagram followed by an ordered layer of undetectable faults that have a unique representative for each possible fault.
We call this form the \textit{fault normal form} following \cref{def:fault-normal-form}.
The final step toward completeness is to show that any two fault-equivalent diagrams have the same fault normal form, and can thus be rewritten into each other.

\subsection{Completeness for Fault Equivalence}
We first state:
\begin{proposition}
    \label{prop:normal-form-unique-per-class}
    Let $D, D'$ be Clifford ZX diagrams with respective noise models $\mathcal N,\mathcal N'$ such that $D$ under $\mathcal N$ is fault-equivalent to $D'$ under $\mathcal N'$.
    Then ${D}_{\mathcal N}$ and ${D'}_{\mathcal N'}$ have the same fault normal form.
\end{proposition}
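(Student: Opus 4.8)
The plan is to show that the fault normal form is genuinely a canonical form: it depends only on the fault equivalence class, so two fault-equivalent diagrams must land on the same form. The normal form splits into a semantic part and a fault-gadget part, so I would argue each part separately and then invoke the uniqueness of each.

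First I would address the semantic part. Fault-equivalent diagrams are in particular semantically equivalent, \ie $D = D'$ as linear maps, since fault equivalence includes the case of the trivial (weight-zero) fault with $D^{I} = {D'}^{I}$. By the completeness of the Clifford ZX calculus and the uniqueness of the reduced AP form of~\autocite{KissingerWetering2024Book}, the idealised semantic subdiagrams of ${D}_{\mathcal N}$ and ${D'}_{\mathcal N'}$ reduce to the same reduced AP form. This settles conditions (1) and (2) of \cref{def:fault-normal-form}.

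The heart of the argument is the fault-gadget part, \ie conditions (3)--(5). Here I would argue that the set of fault gadgets in the normal form is exactly determined by a well-defined invariant of the fault equivalence class: namely, for each undetectable spacetime equivalence class of faults, the pair consisting of its lexicographically smallest representative and the minimum weight attained over that class. I would show that fault equivalence forces these data to coincide for the two diagrams. Concretely, take any undetectable fault $F$ on $D$ with representative $F^\star$ and minimum class weight $w^\star = \weightfunc_{\mathcal N}(F^\star)$. Since $D \FaultEq D'$, for every $w > w^\star$ there is an undetectable fault on $D'$ of weight at most $w^\star$ producing the same faulty diagram $D^{F} = {D'}^{F'}$; by \cref{prop:spacetime-equivalence-undetectable} this means $F'$ lies in the same spacetime equivalence class (now viewed on $D'$, which is legitimate since $D=D'$ have identical detecting regions and stabilising Pauli webs). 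Applying the reverse direction of fault equivalence gives the matching inequality, so the minimum class weights agree, $\weightfunc_{\mathcal N}(F^\star) = \weightfunc_{\mathcal N'}(F^\star)$. Since the spacetime equivalence classes themselves are defined purely in terms of the shared diagram $D=D'$ (via anticommutation with the same Pauli webs, \cref{prop:spacetime-equivalence}), both diagrams have the \emph{same} set of undetectable classes, with the same lexicographically smallest representatives and the same minimal weights. By \cref{cor:fault-normal-form-exists}, the normal form of each diagram contains exactly one fault gadget per such class, annotated with this minimal weight, and ordered lexicographically; hence the two fault-gadget layers coincide verbatim.

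The main obstacle I expect is making precise that the spacetime equivalence classes of the two diagrams can be identified, and that the fault-equivalence inequalities on weights transfer cleanly across this identification. The subtlety is that $\mathcal N$ and $\mathcal N'$ may have completely different atomic fault sets, so the correspondence cannot be read off from the noise models directly; it must go through the faulty diagrams $D^{F} = {D'}^{F'}$ and the fact that $D = D'$ have identical Pauli webs. Once this identification is fixed, I must check the two weight inequalities (one from each direction of $\FaultEq$) combine to give equality of minimal weights for every class, and confirm that detectable classes play no role since \cref{prop:detecting-region-popping} has already eliminated them from both normal forms. With classes, representatives, and weights all shown to be class invariants determined by the shared linear map, the five conditions of \cref{def:fault-normal-form} pin down a single diagram, completing the proof.
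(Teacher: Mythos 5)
Your proposal is correct and takes essentially the same approach as the paper's proof: the semantic parts agree by completeness of the fault-free Clifford calculus and uniqueness of the reduced AP form, the undetectable spacetime equivalence classes and their minimal weights are forced to coincide by the two directions of fault equivalence, and the lexicographic ordering then pins down a single diagram. Your write-up is merely more explicit than the paper's about identifying spacetime equivalence classes across the two diagrams (via equality of faulty diagrams and the shared semantic part), a point the paper passes over implicitly.
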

\begin{proof}
    As fault equivalence implies semantic equivalence~\autocite{rodatz2025faulttoleranceconstruction}, the normalised fault-free part of the two diagrams must implement the same linear map and therefore, by~\autocite{backensZXcalculusCompleteStabilizer2014} must be the same.

    For the fault gadgets, we first argue that the two diagrams in fault normal form must have the same gadgets, labelled by the same weights.
    As the two diagrams are fault-equivalent, they allow for the same group of undetectable faults.
    Therefore, they must have exactly the same equivalence classes of faults, and since their smallest edge-flip operator collections are the same, they must have exactly the same fault gadgets.
    The fault gadgets are labelled by the minimum weight of all faults in an equivalence class.
    Fault equivalence requires those minimum representatives between two diagrams to have the same weight, as otherwise, there would exist an undetectable fault on one diagram that does not have a corresponding fault on the other diagram of at most the same weight, violating the assumption that the two diagrams are fault-equivalent.
    But then, the fault gadgets of the two diagrams must not only be the same but also labelled with the same weight.
    Finally, as all fault gadgets must be different --- we have previously eliminated duplicates --- the lexicographical ordering of gadgets on the boundary is unique and must therefore be the same for both diagrams, and thus the diagrams must be the same.
\end{proof}

Finally, it remains to show our main result:
\begin{theorem}
 \label{thm:completeness}
 Let $D, D'$ be Clifford ZX diagrams with respective noise models $\mathcal N,\mathcal N'$ such that $D$ under $\mathcal N$ is fault-equivalent to $D'$ under $\mathcal N'$.
 Then, only using the rules in \cref{fig:complete-axioms}, ${D}_{\mathcal N}$ can be rewritten into ${D'}_{\mathcal N'}$.
\end{theorem}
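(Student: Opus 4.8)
The plan is to prove this by the standard normal-form assembly, relying on the two main ingredients established earlier: reachability of the fault normal form and its uniqueness per fault-equivalence class. First I would invoke \cref{cor:fault-normal-form-exists} to rewrite $D_{\mathcal N}$ into its fault normal form, call it $N$, using only the rules of \cref{fig:complete-axioms}; the same proposition simultaneously provides a rewrite sequence bringing $D'_{\mathcal N'}$ into its fault normal form $N'$, again using only these rules.

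Next, since $D$ under $\mathcal N$ is fault-equivalent to $D'$ under $\mathcal N'$ by hypothesis, \cref{prop:normal-form-unique-per-class} yields $N = N'$. Thus both diagrams have been rewritten into one and the same diagram, and it only remains to stitch the two derivations together into a single rewrite from $D_{\mathcal N}$ to $D'_{\mathcal N'}$.

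The crucial structural observation here --- the one that underlies the entire normal-form strategy, as noted in \cref{sec:the-rules-and-strategy} --- is that every rule in \cref{fig:complete-axioms} is bidirectional, so any rewrite sequence may be run in reverse. Applying this to the sequence taking $D'_{\mathcal N'}$ to $N' = N$, I obtain a sequence $N \rightsquigarrow D'_{\mathcal N'}$. Concatenating the forward sequence $D_{\mathcal N} \rightsquigarrow N$ with this reversed sequence then gives the desired derivation of $D'_{\mathcal N'}$ from $D_{\mathcal N}$ using only the allowed rules, completing the argument.

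I expect the genuine mathematical difficulty to lie entirely in the two preceding propositions rather than in this final assembly. \cref{cor:fault-normal-form-exists} packages the substantial constructive work of \cref{sec:separating-semantics-and-noise} and \cref{sec:normal-form} (separating semantics from noise, pushing faults to the boundary, removing flip operators, unfolding all gadget combinations, and reducing equivalent faults), while \cref{prop:normal-form-unique-per-class} is where the definition of fault equivalence is actually consumed to force the two normal forms --- including their weight annotations and lexicographic orderings --- to coincide. The only point requiring care in the present step is to confirm that bidirectionality genuinely holds for every rule class of \cref{fig:complete-axioms}, in particular that the fault-gadget-moving and atomic-fault-changing rules are invertible within the calculus, so that the reversal of the second derivation is legitimate.
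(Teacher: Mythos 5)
Your proposal is correct and follows essentially the same route as the paper's own proof: rewrite both diagrams into their fault normal forms via \cref{cor:fault-normal-form-exists}, conclude the forms coincide via \cref{prop:normal-form-unique-per-class}, and reverse the second rewrite sequence (justified by the bidirectionality of the rules, as the paper notes in \cref{sec:the-rules-and-strategy}) to obtain the full derivation.
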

\begin{proof}
    Via \cref{cor:fault-normal-form-exists}, we can rewrite both ${D}_{\mathcal N}$ and ${D'}_{\mathcal N'}$ into their fault normal form.
    By \cref{prop:normal-form-unique-per-class}, the normal form must be the same.
    We can thus first rewrite $D_{\mathcal{N}}$ into normal form, and then rewrite the normal form into $D'_{\mathcal{N}'}$ by reversing the sequence of rewrites used to bring $D'_{\mathcal{N}'}$ into normal form.
    Therefore, we have diagrammatically shown their fault equivalence.
\end{proof}

As we have previously shown that our rules are sound, we have now shown that the rule set from \cref{fig:complete-axioms} is sound and complete.

\begin{example}{Four-legged spider}{}
 For our running example, we now rewrite the two diagrams into one another:
    \[\tikzfig{running-example/full-rewrite}\]
\end{example}

\subsection{Completeness for $w$-Fault Equivalence and Fault Boundedness}
Adding some additional rules, we can easily extend the completeness result to other relationships of noisy ZX diagrams.
We first show how adding one more rule gives us a complete rule set for $w$-fault boundedness.
Then we show that a different rule gives completeness for fault boundedness.
Finally, we show that together, these two rules extend the completeness to $w$-fault boundedness.

\begin{figure}
    \centering
    \begin{minipage}{0.49\textwidth}
        \centering
        \tikzfig{06-completeness/complete-w-equivalence-axiom}
        \caption*{(a)}
    \end{minipage}
    \begin{minipage}{0.49\textwidth}
        \centering
        \tikzfig{06-completeness/complete-boundedness-axiom}
        \caption*{(b)}
    \end{minipage}
    \caption{Axioms that, in addition to \cref{fig:complete-axioms}, are required for a calculus that is complete for (a) $w$-fault equivalence and (b) fault boundedness. Both are required for a calculus that is complete for $w$-fault boundedness.}
    \label{fig:fault-equivalence-complete-additional-axioms}
\end{figure}

The two rules that are necessary to extend the completeness results are shown in \cref{fig:fault-equivalence-complete-additional-axioms}.
The first rule is necessary to have completeness for $w$-fault equivalence, and the second rule is necessary for fault boundedness.
For the second rule, we note that by $w = -$ we mean that no faults are allowed on the corresponding edge, \ie that it is idealised as fault-free.
An idealised edge is always fault-bounded by a non-idealised one, in that, as there are no allowed faults, any fault on the idealised edge trivially has a correspondence on the non-idealised one.
Therefore, in that case, the rewrite can introduce a fault with any weight.

Similar to the rules complete for fault equivalence, we must show that these additional rewrite rules are sound:
\begin{restatable}{proposition}{wsoundnessprop}
    The rule \cref{fig:fault-equivalence-complete-additional-axioms} (a) is sound for $w$-fault equivalence, and the rule \cref{fig:fault-equivalence-complete-additional-axioms} (b) is sound for fault boundedness.
    Both rules are sound for $w$-fault boundedness.
\end{restatable}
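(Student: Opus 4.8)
The plan is to establish each rule's soundness in isolation and then read off the $w$-fault-boundedness claims as consequences. Rule~\cref{fig:fault-equivalence-complete-additional-axioms}~(a) adds or removes a single fault gadget whose weight annotation is at least $w$; write $\mathcal{N}\colon\mathcal{F}\to\mathbb{N}^+$ and $\mathcal{N}'\colon\mathcal{F}'\to\mathbb{N}^+$ for the two noise models, differing only in this one heavy atomic fault $F_\star$ with $\mathcal{N}(F_\star)\geq w$ (so $\mathcal{F}' = \mathcal{F}\setminus\{F_\star\}$). I would show $D_{\mathcal{N}}\wFaultEq{w}D_{\mathcal{N}'}$ directly. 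For the direction into the smaller model, take any undetectable fault $F$ on $D_{\mathcal{N}}$ with $\weightfunc_{\mathcal{N}}(F)<w$. By \cref{def:induced-weight} some minimal-weight product of atomic faults realises $F$, and this product cannot contain $F_\star$, since including it would already force the weight to be at least $\mathcal{N}(F_\star)\geq w > \weightfunc_{\mathcal{N}}(F)$. Hence the same product witnesses $F\in\langle\mathcal{F}'\rangle$ with $\weightfunc_{\mathcal{N}'}(F)=\weightfunc_{\mathcal{N}}(F)$, and since the underlying diagram is unchanged the faulty diagrams coincide, supplying $F$ itself as its own witness. The reverse direction is immediate because $\mathcal{F}'\subseteq\mathcal{F}$, so every fault of the smaller model is a fault of the larger one of no greater weight.

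Rule~\cref{fig:fault-equivalence-complete-additional-axioms}~(b) introduces a fault gadget of arbitrary weight on an edge that was previously idealised (\ie annotated with $-$, meaning no faults). Here I would prove only the one-directional bound $D\FaultBnd D'$, where $D'$ is the diagram after introduction. Since the new edge carried no atomic faults in $D$, the atomic fault set of $D$ is a subset of that of $D'$, and the two diagrams are semantically identical. Thus any fault $F$ on $D$ is already a fault on $D'$ with $\weightfunc_{\mathcal{N}'}(F)\leq\weightfunc_{\mathcal{N}}(F)$ and $D^F = {D'}^{F}$, so every undetectable fault on $D$ has itself as a witness on $D'$. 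As this holds for every weight bound, $D\FaultBnd D'$ follows. The reverse fails in general, since the newly introduced fault need not have any counterpart in $D$; this is exactly why the rule is sound for fault boundedness but not for fault equivalence.

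For the final sentence, soundness for $w$-fault boundedness is then immediate: rule~(a) yields $D_{\mathcal{N}}\wFaultEq{w}D_{\mathcal{N}'}$, which unpacks to $D_{\mathcal{N}}\wFaultBnd{w}D_{\mathcal{N}'}$ in both directions; rule~(b) yields $D\FaultBnd D'$, which by definition gives $D\wFaultBnd{w}D'$ for every $w$, in particular the relevant one. Compositionality together with transitivity (\cref{prop:w-fault-boundedness-transitive}) then let these rules be applied soundly inside larger diagrams. I expect the only delicate point to be the weight bookkeeping in rule~(a): one must argue carefully that restricting the atomic fault set leaves $\weightfunc$ unchanged on every fault of weight below $w$ --- equivalently, that no minimal decomposition of such a fault uses the removed heavy generator --- and note that detectability depends only on the shared underlying diagram, so the two models agree on precisely the same low-weight undetectable faults. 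For rule~(b) the only care needed is to resist asserting the reverse inclusion, which genuinely does not hold.
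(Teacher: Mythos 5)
Your treatment of rule~\cref{fig:fault-equivalence-complete-additional-axioms}~(a) is correct and, if anything, does more than the paper's: the paper proves the rule locally --- its \LHS is a single edge whose only non-trivial atomic fault is the $Z$ flip of weight at least $w$, so $w$-fault equivalence reduces to matching the trivial fault, while the fully idealised \RHS is trivially fault-bounded by the \LHS --- and then relies on compositionality to apply the rule in context. You instead argue at the level of an entire noise model, with the key observation that a minimal decomposition of a fault of weight below $w$ can never use the heavy generator $F_\star$ because atomic weights are strictly positive; this is exactly the bookkeeping the paper delegates to its compositionality proposition. Your reduction of the $w$-fault-boundedness claims to the two individual soundness claims is also precisely the paper's opening remark.

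The gap is in rule~(b). Its side condition is not only $w = \texttt{-}$: it reads ``$w = \texttt{-}$ \emph{or} $w \geq w'$'', so the rule also rewrites an edge with \emph{finite} weight annotation $w$ into one with smaller finite weight $w'$, not merely un-idealises an edge. You prove only the $w = \texttt{-}$ case. The paper's proof splits into both cases: for $w = \texttt{-}$ it argues exactly as you do (a fully idealised \LHS is trivially fault-bounded by any semantically equal diagram), but for $w \geq w'$ it must additionally check that the unique non-trivial faults $Z_{\LHS}$ and $Z_{\RHS}$ have the same faulty diagram and that $\weightfunc(Z_{\LHS}) = w \geq w' = \weightfunc(Z_{\RHS})$, which is what the correspondence required by fault boundedness demands. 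This second case is not cosmetic: the completeness argument for fault boundedness uses rule~(b) to change the weight annotations of gadgets that already exist on both diagrams (finite to finite), which the rule as you have construed it cannot do. The fix is short --- the same one-to-one correspondence argument you use elsewhere, with the inequality $w \geq w'$ supplying the required weight bound --- but as written your proof establishes soundness of only part of the rule.
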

\begin{proof}
    See \cref{appendix:missing-proofs}.
\end{proof}
We note that every fault-equivalent rule by definition also implies weaker relationships like $w$-fault equivalence.
In particular, the previously introduced rewrite rules from \cref{fig:complete-axioms} are sound \wrt such weaker relationships.

Again, since $w$-fault boundedness is compositional and transitive, and so are each of its derived properties, we receive a usable rewrite system when including either of or both of the rules.
So we can now turn to completeness.

First, we show:
\begin{theorem}
 Let $D, D'$ be Clifford ZX diagrams with respective noise models $\mathcal N: \mathcal F \to \mathbb{N}^+, \, \mathcal N': \mathcal F' \to \mathbb{N}^+$ such that $D$ under $\mathcal N$ is $w$-fault-equivalent to $D'$ under $\mathcal N'$.
 Then, only using the rules in \cref{fig:complete-axioms} and \cref{fig:fault-equivalence-complete-additional-axioms} (a), ${D}_{\mathcal N}$ can be rewritten into ${D'}_{\mathcal N'}$.
\end{theorem}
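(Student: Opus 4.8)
The plan is to mirror the completeness proof for fault equivalence (\cref{thm:completeness}) by introducing a \emph{$w$-fault normal form} and showing it is unique per $w$-fault-equivalence class. The $w$-fault normal form is obtained from the ordinary fault normal form of \cref{def:fault-normal-form} by additionally discarding every fault gadget whose weight annotation is at least $w$. The intuition is that $w$-fault equivalence, by definition, imposes no constraint on faults of weight $\geq w$, so these fault gadgets carry no information that must be preserved and may be removed; the new rule in \cref{fig:fault-equivalence-complete-additional-axioms}~(a) is precisely what licenses this removal soundly.

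First I would establish reachability. By \cref{cor:fault-normal-form-exists}, any $D_{\mathcal N}$ can be rewritten into its ordinary fault normal form using only the rules of \cref{fig:complete-axioms}. In that form every undetectable spacetime-equivalence class is represented by exactly one lexicographically smallest gadget annotated with the class's minimum weight. I would then apply the new rule repeatedly to delete each gadget whose annotation is at least $w$, yielding the $w$-fault normal form. Since fault gadgets commute by \cref{prop:commuting-targets}, deleting the high-weight gadgets does not disturb the remaining ones, so the result still respects the lexicographic ordering and the weight-minimality among the surviving low-weight gadgets.

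Next I would prove uniqueness, adapting \cref{prop:normal-form-unique-per-class}. The semantic parts agree: applying the $w$-fault-boundedness condition to the trivial fault $I$, which is undetectable and has weight $0 < w$, forces $D = D'$, so by Clifford completeness~\autocite{backensZXcalculusCompleteStabilizer2014} their reduced AP forms coincide. For the fault-gadget parts, the key observation is that a surviving gadget records exactly an undetectable spacetime-equivalence class of minimum weight strictly below $w$, together with that weight. Since $D \wFaultEq{w} D'$ combines both boundedness directions, every such class of $D_{\mathcal N}$ must occur in $D'_{\mathcal N'}$ with weight at most its own, and symmetrically; hence the surviving classes and their minimum weights coincide exactly. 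With identical surviving gadgets and the same lexicographic ordering, the two $w$-fault normal forms are identical, and the rewrite-through-the-common-normal-form argument of \cref{thm:completeness} then concludes.

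The main obstacle I expect is justifying that discarding weight-$\geq w$ gadgets is both sound and leaves the surviving weights intact. For soundness one must confirm that the new rule only ever equates diagrams agreeing on all faults of weight below $w$. For weight-preservation one must check that the minimum weight of any class whose minimum lies below $w$ is witnessed by a product of atomic faults each of weight below $w$; this is immediate because the weights are positive integers summing to less than $w$, so removing the high-weight atomic faults and all their combinations cannot raise any surviving minimum weight. Establishing these two facts cleanly is the crux, while the remaining bookkeeping is a direct transcription of the unweighted argument.
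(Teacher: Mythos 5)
Your proposal is correct and follows essentially the same route as the paper: bring both diagrams into the fault normal form of \cref{def:fault-normal-form} using \cref{cor:fault-normal-form-exists}, use the rule of \cref{fig:fault-equivalence-complete-additional-axioms}~(a) (with the fault-free Clifford rules) to delete every gadget annotated with weight at least $w$, observe that $w$-fault equivalence forces the surviving gadgets and their weight annotations to coincide on both sides, and conclude by rewriting through the common form. Your additional care --- deriving semantic equality from the trivial fault of weight $0 < w$ and checking that deleting high-weight gadgets cannot raise any surviving minimum weight --- merely makes explicit what the paper's terser argument leaves implicit.
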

\begin{proof}
 Using the rules from \cref{fig:complete-axioms}, we can fault-equivalently bring both diagrams into their unique fault normal form.
 As $D$ and $D'$ are only $w$-fault-equivalent, there might be fault equivalence classes that are present on one diagram but not the other, or where the fault gadgets have different values; however, only for classes where the minimum attained weight is at least $w$.
 For all equivalence classes with minimum attained weight less than $w$, the two diagrams must be the same.
 We can now apply the rule from \cref{fig:fault-equivalence-complete-additional-axioms}~(a) together with fault-free Clifford rules to remove all other fault gadgets annotated with weight $w$ or more.
 The only remaining fault gadgets must be the same on both sides, and therefore, the two modified normal forms must be the same.
 But then, only using our diagrammatic rewrite rules, we can rewrite one diagram into the other.
\end{proof}

Similarly, we can show: 
\begin{theorem}
 Let $D, D'$ be Clifford ZX diagrams with respective noise models $\mathcal N: \mathcal F \to \mathbb{N}^+, \, \mathcal N': \mathcal F' \to \mathbb{N}^+$ such that $D$ under $\mathcal N$ is fault-bounded by $D'$ under $\mathcal N'$.
 Then, only using the rules in \cref{fig:complete-axioms} and \cref{fig:fault-equivalence-complete-additional-axioms} (b), ${D}_{\mathcal N}$ can be rewritten into ${D'}_{\mathcal N'}$.
\end{theorem}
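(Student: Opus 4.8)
The plan is to mirror the structure of the preceding $w$-fault-equivalence proof, reducing everything to the unique fault normal form and then using the new rule from \cref{fig:fault-equivalence-complete-additional-axioms}~(b) to absorb the asymmetry of fault boundedness. First I would record that fault boundedness implies semantic equivalence: taking the trivial fault $F_1 = I$ on $D$ (undetectable, with $\weightfunc(F_1) = 0$) forces a fault $F_2$ on $D'$ with $\weightfunc(F_2) \le 0$, hence inconsequential, so that $D = D^{F_1} = {D'}^{F_2} = D'$. By~\autocite{backensZXcalculusCompleteStabilizer2014} the fault-free parts of the two normal forms therefore coincide, and it only remains to reconcile the fault gadgets.

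Next I would bring $D_{\mathcal N}$ into its fault normal form $N_D$ and $D'_{\mathcal N'}$ into $N_{D'}$ using only the rules of \cref{fig:complete-axioms}, as guaranteed by \cref{cor:fault-normal-form-exists}. In both forms every fault gadget represents an undetectable fault (\cref{prop:detecting-region-popping}), so the relevant data is, for each diagram, the collection of undetectable spacetime-equivalence classes together with the minimum weight attained in each. The hypothesis $D \FaultBnd D'$ says precisely that every undetectable class of $\mathcal N$ is also a class of $\mathcal N'$ (via \cref{prop:spacetime-equivalence-undetectable}, since $D^{F_1} = {D'}^{F_2}$ means $F_1, F_2$ lie in the same class), and that for each shared class the weight recorded by $N_{D'}$ is at most that recorded by $N_D$; moreover $N_{D'}$ may carry additional classes absent from $N_D$. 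Thus transforming $N_D$ into $N_{D'}$ requires both introducing the extra fault gadgets and lowering the recorded weight on the shared classes.

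I would achieve both by introducing, via rule \cref{fig:fault-equivalence-complete-additional-axioms}~(b), a fault gadget for each atomic fault of $\mathcal N'$ and then re-applying the normalisation pipeline of \cref{prop:detecting-region-popping}, \cref{prop:unfold-all-gadget-combinations}, \cref{prop:reduce-equivalent-faults} together with the lexicographic ordering. Rule (b) applies because the separated normal form exposes idealised, fault-free edges onto which a fresh spawning edge of any weight may be attached, and it is sound in exactly this direction since adding noise only weakens the diagram. After this insertion the combined noise model has atomic faults $\mathcal F \cup \mathcal F'$, all undetectable, and re-normalising returns a diagram in fault normal form.

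The crux is verifying that this combined normal form is exactly $N_{D'}$, i.e.\ that the per-class minimum weight of $\mathcal F \cup \mathcal F'$ equals that of $\mathcal F'$ alone. The inequality $\le$ is immediate since $\mathcal F' \subseteq \mathcal F \cup \mathcal F'$. For $\ge$, take any product $F = F_{\mathcal F} F_{\mathcal F'}$ of combined atomic faults landing in an undetectable class $C$; since every gadget in $N_D$ is undetectable, $F_{\mathcal F}$ is an undetectable fault of $D$, so $D \FaultBnd D'$ supplies an $\mathcal F'$-product of weight at most $\weightfunc(F_{\mathcal F})$ and the same effect, and substituting it for $F_{\mathcal F}$ yields a pure $\mathcal F'$-product in $C$ of weight at most $\weightfunc(F)$. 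Hence the minima agree, the combined normal form coincides with $N_{D'}$, and I would conclude by reversing the rewrite sequence that brings $D'_{\mathcal N'}$ to $N_{D'}$, exactly as in \cref{thm:completeness}. I expect this weight-matching step, and the bookkeeping confirming that the inserted faults genuinely realise $\mathcal N'$'s lowered weights on shared classes rather than merely bounding them, to be the main obstacle.
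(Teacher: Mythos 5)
Your proof is correct and follows the paper's skeleton --- bring both diagrams into fault normal form via \cref{cor:fault-normal-form-exists}, bridge the remaining asymmetry with the directed rule of \cref{fig:fault-equivalence-complete-additional-axioms}~(b), and finish by reversing the normalisation of ${D'}_{\mathcal N'}$ --- but you execute the reconciliation step genuinely differently. The paper does gadget-by-gadget surgery: for each class shared by the two normal forms it applies rule (b) in its $w \geq w'$ instance directly to the spawning edge to change the recorded weight, and for each class present only on the $D'$ side it introduces an idealised gadget with the fault-free Clifford rules and unidealises its spawning edge with the $w = -$ instance. You instead insert \emph{all} of the $D'$-side gadgets (using only the $w = -$ instance of rule (b)), re-run the normalisation pipeline (\cref{prop:unfold-all-gadget-combinations}, \cref{prop:reduce-equivalent-faults}), and discharge an extra proof obligation the paper never incurs: that the per-class minimum weight over $\langle \mathcal F \cup \mathcal F' \rangle$ equals that over $\langle \mathcal F' \rangle$, which you prove by replacing the $\mathcal F$-part of any mixed product with a same-effect $\mathcal F'$-product of no larger weight supplied by $D \FaultBnd D'$. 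This makes your route longer, but it buys two things: it shows the $w \geq w'$ instance of rule (b) is dispensable for completeness, since lowering a spawning-edge weight is derivable from insertion followed by merging (\cref{prop:merge-gadgets}); and it records explicitly that fault boundedness already forces semantic equivalence (your trivial-fault argument), which the paper leaves implicit. Note also that your account of what $D \FaultBnd D'$ imposes on the normal forms --- shared classes carry weight in ${D'}_{\mathcal N'}$ at most their weight in ${D}_{\mathcal N}$, so shared weights must be \emph{lowered} --- is the direction consistent with the definition and with the appendix soundness proof of rule (b); the paper's prose at this point (``increase the weight of all the fault gadgets in ${D}_{\mathcal N}$ that have lower weight'') states the inequality the other way round, although the intended mechanics coincide with yours.
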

\begin{proof}
 Similar to above, we can bring both diagrams into their fault normal form using only our fault-equivalent rewrites.
 As $D$ is fault-bounded by $D'$, any fault gadget on $D$ must have at most the same weight as the corresponding fault gadget on $D'$.
 But then, we can use the rewrite from \cref{fig:fault-equivalence-complete-additional-axioms}~(b) to increase the weight of all the fault gadgets in ${D}_{\mathcal N}$ that have a lower weight than their corresponding gadget in ${D'}_{\mathcal N'}$.
 If there are faults that are possible on ${D'}_{\mathcal N'}$ but not on ${D}_{\mathcal N}$, we can use the fault-free rules to introduce a fault-free fault gadget that corresponds to that fault and then use \cref{fig:fault-equivalence-complete-additional-axioms}~(b) to unidealise the spawning edge.
 But then, we have a directed rewrite from ${D}_{\mathcal N}$ to ${D'}_{\mathcal N'}$ using only the allowed rules.
\end{proof}

Finally, using the two rules, we can state: 
\begin{theorem}
 Let $D, D'$ be Clifford ZX diagrams with respective noise models $\mathcal N: \mathcal F \to \mathbb{N}^+, \, \mathcal N': \mathcal F' \to \mathbb{N}^+$ such that $D$ under $\mathcal N$ is $w$-fault-bounded by $D'$ under $\mathcal N'$.
 Then, only using the rules in \cref{fig:complete-axioms} and \cref{fig:fault-equivalence-complete-additional-axioms}, ${D}_{\mathcal N}$ can be rewritten into ${D'}_{\mathcal N'}$.
\end{theorem}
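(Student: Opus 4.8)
The plan is to combine the strategies of the two immediately preceding theorems: rule \cref{fig:fault-equivalence-complete-additional-axioms}~(a) is used to deal with the weight threshold $w$, while rule \cref{fig:fault-equivalence-complete-additional-axioms}~(b) is used to deal with the asymmetry inherent in boundedness. First I would bring both $D_{\mathcal N}$ and $D'_{\mathcal N'}$ into fault normal form using only the rules of \cref{fig:complete-axioms}; these are genuine fault-equivalence rewrites, hence in particular sound for $w$-fault boundedness. Instantiating the hypothesis $D \wFaultBnd{w} D'$ at the trivial fault (of weight $0<w$), which is undetectable on the non-zero diagram, forces a matching $D'$-fault of weight $0$, i.e.\ an inconsequential one, so $D = D'$ semantically. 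Consequently the fault-free parts of the two normal forms normalise to the same Clifford diagram by Clifford completeness, and it remains only to reconcile the noisy layers.

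Next I would restrict attention to that noisy layer. Since $w$-fault boundedness constrains only faults of weight strictly below $w$, I would apply rule \cref{fig:fault-equivalence-complete-additional-axioms}~(a), exactly as in the proof for $w$-fault equivalence, to strip every fault gadget annotated with weight $\geq w$ from both normalised diagrams; call the results $M$ and $M'$. By construction $M$ now carries precisely one gadget per undetectable spacetime-equivalence class of weight $<w$ that $\mathcal N$ can produce, each annotated with the minimal weight of its class, and likewise $M'$ for $\mathcal N'$.

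The heart of the argument is then to realise the boundedness hypothesis on these surviving layers using rule \cref{fig:fault-equivalence-complete-additional-axioms}~(b). For each class $c$ present in $M$, say with weight $w^D_c<w$, the definition of $\wFaultBnd{w}$ supplies a $D'$-fault of weight at most $w^D_c$ with the same effect; because $D=D'$ this is a fault of the same spacetime-equivalence class, so $c$ also survives in $M'$ with $w^{D'}_c \leq w^D_c < w$. I would therefore use rule \cref{fig:fault-equivalence-complete-additional-axioms}~(b) to bring the annotation of $c$ in $M$ down to $w^{D'}_c$ — the worsening direction sanctioned by the rule. Classes that appear in $M'$ but not in $M$ — those that $\mathcal N$ either cannot produce at all or produces only with weight $\geq w$ — are introduced into $M$ by first adding a fully idealised fault gadget with the fault-free Clifford rules and then unidealising its spawning edge to weight $w^{D'}_c<w$ via rule \cref{fig:fault-equivalence-complete-additional-axioms}~(b). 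After reordering lexicographically by commutation (\cref{prop:commuting-targets}), the modified $M$ agrees with $M'$ on the nose.

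Finally I would assemble the directed derivation
\[ D_{\mathcal N} \;\to\; \mathrm{NF}(D_{\mathcal N}) \;\to\; M \;\to\; M' \;\to\; \mathrm{NF}(D'_{\mathcal N'}) \;\to\; D'_{\mathcal N'}, \]
where the last two legs simply invert the stripping and normalisation that were applied to $D'_{\mathcal N'}$; this is legitimate because both rule \cref{fig:fault-equivalence-complete-additional-axioms}~(a) and the rules of \cref{fig:complete-axioms} are sound in either direction for $w$-fault boundedness. Transitivity and compositionality of $w$-fault boundedness then certify the whole chain. The main obstacle I expect is the bookkeeping at the threshold, namely ensuring the two rules interlock cleanly: I must verify that no class of $D$-weight $\geq w$ is ever asked to match anything on the $D'$ side (this is guaranteed by removing all such gadgets \emph{before} any application of rule \cref{fig:fault-equivalence-complete-additional-axioms}~(b)), and dually that every low-weight class of $D'$ absent from $D$ can be freshly created — precisely the case that forces rule \cref{fig:fault-equivalence-complete-additional-axioms}~(b)'s idealised-to-non-idealised capability and where the asymmetry of boundedness and the cutoff at $w$ genuinely interact.
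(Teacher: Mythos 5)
Your proposal is correct and follows essentially the same route as the paper's proof: bring both diagrams into fault normal form using the rules of \cref{fig:complete-axioms}, strip all gadgets of weight at least $w$ with rule \cref{fig:fault-equivalence-complete-additional-axioms}~(a), then use rule \cref{fig:fault-equivalence-complete-additional-axioms}~(b) to adjust annotations and introduce (via an idealised gadget that is then unidealised) the classes present only on the $D'$ side, finally inverting the stripping and normalisation applied to $D'_{\mathcal N'}$. If anything, your write-up is more careful than the paper's three-sentence sketch --- in particular you get the direction of the weight adjustment right (lowering $D$'s annotations to the $D'$ values, as rule (b)'s side condition $w \geq w'$ requires, whereas the paper loosely says ``increase''), and you make explicit both the semantic equality $D = D'$ via the trivial fault and the legitimacy of applying rule (a) in reverse.
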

\begin{proof}
 As $D$ is $w$-fault-bounded by $D'$, we are guaranteed that all faults of weight less than $w$ on $D'$ have a corresponding fault of at least the same weight on $D'$.
 Therefore, just as for the previous two proofs, we can first remove all faults of weight more than $w$ from ${D}_{\mathcal N}$ using \cref{fig:fault-equivalence-complete-additional-axioms}~(a) and then increase the weight and introduce the missing ones to rewrite the normal form of ${D}_{\mathcal N}$ into the normal form of ${D'}_{\mathcal N'}$.
\end{proof}

    \section{Conclusion}\label{sec:future-work}
In this work, we provided an axiomatisation of a ZX calculus that is sound and complete for showing fault equivalence between Clifford diagrams.
At the centre of the diagrammatic implementation, we utilised fault gadgets to represent and manipulate faults.
Using the fault-equivalent rewrites, we provided a method to push out all the faults, separating the semantic part of the diagram from the noisy part.
For this, we use the idea of spacetime equivalent faults and show how they can be characterised in terms of their interaction with the Pauli webs.
Having separated the semantic part of the diagram from the noisy part, we can compare two different ZX diagrams by independently looking at each of the two parts.
Leveraging existing completeness results for fault-free Clifford ZX diagrams~\autocite{backensZXcalculusCompleteStabilizer2014,backens2017AsimplifiedStabiliserZXCalculus,KissingerWetering2024Book}, our work primarily focused on comparing the noisy part of the diagram.
We showed that, using fault-equivalent rewrites, the fault gadget on the diagrams can be rewritten to have one representative fault gadget per spacetime equivalence class of faults.

Beyond the theoretical implications of the completeness results, their constructive proofs offer insight into an algorithm for automatic fault equivalence checking.
This work is accompanied by an implementation~\autocite{paritea2026code}, providing \eg an automated variant of checking fault equivalence and an interface for extracting a Tanner graph.
The approach to checking fault equivalence derived from the constructive proof has an exponential complexity advantage over the naive approach~\autocite{rueschCompletenessFaultTolerant2025}.
Instead of scaling exponentially in the number of faults, this algorithm scales in the number of physical qubits and detecting regions.
The implementation in~\autocite{paritea2026code} optimises this further by allowing some transformations outside the ZX calculus.

One way to view the procedure presented in this work is that when we manipulate the fault gadgets, we fault-equivalently change the noise model the diagram is exposed to.
For example, after pushing out the fault gadgets, we can view the result as the original diagram under a different, fault-equivalent noise model, where now all faults only act on the boundary and the flip operators.
As pointed out at the end of~\cref{sec:separating-semantics-and-noise}, this allows us, for example, to extract the detector-error model~\autocite{Gidney2021stimfaststabilizer,derksDesigningFaulttolerantCircuits2024} of the circuit under the given noise model.
However, we can consider that after eliminating all the detecting regions, the remaining faults are the undetectable ones.
This allows us to infer the distribution of undetectable faults induced by the noise model, from which we can derive other properties such as the circuit distance.
In future research, it would be interesting to focus more on the insights that can be gained via such procedures and explore whether they can be leveraged into efficient circuit analysis tools.

Beyond Clifford ZX diagrams, we would like to generalise fault equivalence and the completeness results to other fragments, working towards non-classically simulable ones, such as Clifford+T.
We expect this to be highly non-trivial, as such fragments do not preserve Paulis under conjugation.

Furthermore, the ZX calculus is usually considered in the case of qubits.
It would be interesting to explore fault equivalence and completeness on higher-dimensional systems~\autocite{ranchin2014quditZXcalculus, booth2022qupitZXCalculi, poor2023qupitCliffordCompleteness} or continuous-variable ZX~\autocite{shaikh2024fockedupzxcalculuspicturing, booth2024completeequationaltheoriesclassical, Nagayoshi_2025}.

Additionally, fault equivalence, as defined in this work, considers weighted Pauli noise.
While this captures a common noise model assumed in the literature, an interesting avenue for future work would be to generalise fault equivalence beyond weighted Pauli noise to other noise models, such as stochastic Pauli noise.
For this, the complete set of rewrites presented in this work provides a promising starting point.
Once other noise models are proposed, completeness results for fault equivalence under these noise models would probably be derived from this work.

Finally, the implementation accompanying this work is still in early stages, and usability as well as performance can be considerably improved.
More features could be added, such as integration with other QEC libraries like the highly-parallel sampling library \texttt{stim}~\autocite{Gidney2021stimfaststabilizer}.
Additionally, as research progresses, support for alternative noise models or circuits beyond qubits could be considered.
Considering the exponential cost of the proposed algorithm, adaptations for highly structured, larger circuits could be added.
For example, substantial improvements could be achieved for circuits that are composed of smaller gadgets.

Overall, this work is an important step in understanding diagrammatic calculi for fault equivalence while providing exciting new avenues for analysing and understanding the behaviour of circuits under noise that ranges beyond completeness results.

    \begin{acknowledgements}
        We would like to thank Șerban Cercelescu for the useful discussions that led to the inspiration for the pushing out procedure.
We are grateful to Boldizsár Poór, Julio Magdalena De La Fuente, and Maximilian Schweikart for the discussions, inspiration, and feedback on various drafts of this work.
BR thanks Simon Harrison for his generous support for the Wolfson Harrison UK Research Council Quantum Foundation Scholarship. 
AK is supported by the Engineering and Physical Sciences Research Council grant number EP/Z002230/1, “(De)constructing quantum software (DeQS)”. 
BR is employed part-time by Quantinuum. 
The wording in some sections of this paper has been refined using LLMs.
    \end{acknowledgements}

    \cleardoublepage
    \printbibliography[heading=bibintoc,title={References}]
    \cleardoublepage

    \bookmarksetup{startatroot}
    \appendix
    \renewcommand{\thesection}{\Alph{section}}
    \section{Missing proofs}
\label{appendix:missing-proofs}
\soundnessprop*
\begin{proof}
    In this proof, we refer to the sides of the rules in their displayed form as \LHS and \RHS.
    Furthermore, whenever we find an atomic $Z$ flip with weight $w_i$ on diagram $D$, we will refer to it as $Z_{D,i}$, and to the faulty diagram as $D^{Z_{D,i}}$.
\begin{proofparts}
    \item[\textsf{Clifford Rules}] For each such rule, both the \LHS and \RHS are fully idealised.
        Thus, \cref{prop:fully-idealised-equivalence} is applicable to yield the claim directly.

    \item[$\TextFEElim$] The atomic faults on the \LHS have a one-to-one correspondence to the \RHS, such that for each atomic fault the respective faulty diagram is the same.
        This implies that the group of possible faults generated by both diagrams must be isomorphic, and since the weight annotation does not change, the induced weight function is also the same.
        Thus, there cannot exist any atomic fault on one side that does not have a correspondence or even a different atomic weight on the other, and so it can also not be the case with non-atomic faults.

    \item[$\TextFEXPhase$] Both sides only allow either the trivial fault or a single $Z$ flip to occur.
        Then, we can observe that for both faults, the faulty diagrams are directly the same:
        \[ \tikzfig{A-missing-proofs/x-phase-faulty-diagram-proof} \]
        Furthermore, the annotated weight of $Z_{\LHS}$ and $Z_\RHS$ is the same, so the induced weight function must also be the same, yielding fault equivalence.

    \item[$\TextFECommute$] The proof is analogous to the previous case, in that we can again reason about the faulty diagrams:
        \[ \tikzfig{A-missing-proofs/commute-faulty-diagram-proof} \]
        We are able to remove the scalar in the intermediate step as there is no $k_1,k_2$ where the scalar is zero, and we do not distinguish faulty diagrams up to a scalar.

    \item[$\TextFEScalar$] The \RHS is fully idealised and thus trivially fault-bound by the \LHS.
        Furthermore, we can observe that all faults on the \LHS have faulty diagrams equal to the trivial fault $I$:
        \[ \tikzfig{A-missing-proofs/scalar-faulty-diagram-proof} \]
        Since $\weightfunc(I) = 0 \leq \weightfunc(F)$ for any fault $F$ on the \LHS, every fault has a correspondence with less or equal weight, providing fault equivalence by definition.

    \item[$\TextFEMerge$] Via repeated application of $\Fusion$, both $Z_{\LHS,1},Z_{\LHS,2}$ have a faulty diagram equivalent to that of as $Z_\RHS$, while their product and the trivial fault for the \LHS have an equivalent faulty diagram to the trivial fault on the \RHS.
        So every fault that can be generated on either side has at least one corresponding fault on the other, thus checking their induced weight remains.
        But everything corresponding to a trivial fault forms a straightforward case, and by definition of the minimum, it holds that
        \[ \forall i \in 1,2: \min(w_1,w_2) \leq w_i \quad\text{and}\quad \exists i \in 1,2: w_i \leq \min(w_1,w_2)\,, \]
        providing the other cases.

    \item[$\TextFEComb$] Via application of $\Copy$ and $\Fusion$, we see that both $Z_{\LHS,1},Z_{\LHS,2}$ have one-to-one correspond to $Z_{\RHS,1},Z_{\RHS,2}$, and so their product also has a correspondence to the product on the other side.
        We then consider $Z_{\RHS,12}$, and see that it forms an additional possibility to correspond to the product on the \LHS:
        \[ \tikzfig{A-missing-proofs/comb-faulty-diagram-proof} \]
        Thus, any combination on the \RHS is reproducible on the \LHS, so arguing about their induced weight remains.

        On both sides, the atomic faults weighted with $w_1,w_2$ are already annotated with their induced weight.
        Additionally, the atomic fault weighted with $w_1 + w_2$ on the \RHS corresponding to the product on the \LHS is exactly weighted with the weight sum of said product.
        Thus, every atomic fault on both sides has a correspondence of exactly equal weight.
        Any non-atomic fault $F$ has at least one atomic fault with the same faulty diagram, and it is straightforward to check that the weight of $F$ is at least that of the atomic fault.
        Thus, every fault on either side has a correspondence of at most the same weight.

    \item[$\TextFEDetect$] First, we see that the faults annotated with $w_2,\dots,w_n$ on the \LHS have a direct correspondence to the faults on the \RHS, and vice versa.
        For $Z_{\LHS,1}$, observe that the free-floating spider on either side forms a detecting region with the annotated edges:
        \[ \tikzfig{A-missing-proofs/detect-pauli-webs} \]
        Thus, $Z_{\LHS,1}$ in particular is detectable.
        Further, observe that the faults annotated with sums on the \RHS generate the faulty diagrams of all undetectable combinations of $Z_{\LHS,1}$ with other atomic faults.
        This implies that removing/introducing $Z_{\LHS,1}$ connected to the free-floating spider does not remove/introduce any new class of possible faulty diagrams.
        The reasoning about weights is directly analogous to the case of $\TextFEDetect$, and so fault equivalence follows.
\end{proofparts}
\end{proof}

\addingspiderstabilisers*
\begin{proof}
    In the main body, we have already shown how to add a $ZZ$ stabiliser to a green spider.
    The other cases follow similarly:
    For a green spider with a phase of $k \in \{0, \pi\}$, we can add an $X$ stabiliser as follows:
    \[ \tikzfig{A-missing-proofs/stab-fe-opposite-color-k-pi} \]

    For a green spider with a phase of $\frac{\pi}{2}$, we can add an all $X$ and a single $Y$ stabiliser as follows:
    \begin{equation}
        \label{eq:stab-fe-proof-opposite-color-pi-2}
         \vcenter{\hbox{\tikzfig{A-missing-proofs/stab-fe-opposite-color-pi-2}}} 
    \end{equation}

    For a green spider with a phase of $-\frac{\pi}{2}$, we can add an all $X$ and a single $Y$ stabiliser as follows:
    \[ \tikzfig{A-missing-proofs/stab-fe-opposite-color-minu-pi-2} \]

    As these generate all spider stabilisers, we can now add any generating set of spider stabilisers to the fault gadget.
    Finally, to complete the proof, we have to merge all the targets as shown in~\cref{prop:changing-fault-gadgets-1}.
\end{proof}

\wsoundnessprop*
\begin{proof}
    For any two diagrams $D_1,D_2$ under any two noise models, both $D_1 \wFaultEq{w} D_2$ and $D_1 \FaultBnd D_2$ imply $D_1 \wFaultBnd{w} D_2$.
    So if both rules are sound \wrt their respective properties, both are also sound \wrt $w$-fault boundedness individually.
    We show their individual soundness:
\begin{proofparts}
    \item[$\TextFELimit$] The \RHS is fully idealised and thus trivially fault-bound by the \LHS.
        As the single non-trivial atomic fault on the \LHS is by the side condition at or above weight $w$, checking $w$-fault equivalence reduces to finding a correspondence for the trivial fault $I_\LHS$, which is naturally $I_\RHS$ with a straightforward weight check.

    \item[$\TextFBndBound$] We distinguish the cases $w = -$ and $w \geq w'$.
        In the first case, $w = -$ is just a more explicit version of an idealised edge, yielding a fully idealised \LHS and thus leading to the claim trivially.

        In the second case, the only non-trivial faults on both sides are $Z_\LHS$ and $Z_\RHS$, and they clearly have the same faulty diagram.
        Via the assumed side condition, it holds that $\weightfunc(Z_\LHS) = w \geq w' = \weightfunc(Z_\RHS)$.
        Thus, for all $\tilde(w) \in \mathbb{N}^+$, all conditions for $\widetilde{w}$-fault boundedness are fulfilled, yielding the claim.
\end{proofparts}
\end{proof}
    \section{Additional Derived ZX Calculus Rules}
We introduce three additional ZX calculus rules, derivable from our axiomatisation of the ZX calculus:
\begin{proposition}[Hopf Rule]
    \begin{equation}
        \label{eq:hopf-rule}
        \tikzfig{02-preliminaries/hopf}
    \end{equation}
\end{proposition}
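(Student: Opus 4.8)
The plan is to derive the Hopf rule from two of the basic Clifford ZX axioms already available in our rule set, namely the bialgebra rule (\textsc{Bialgebra}) and the copy rule (\textsc{Copy}), together with spider fusion (\textsc{Fusion}) and identity removal (\textsc{Elim}) for clean-up. This is the standard route to the Hopf law, so the content lies in assembling the axioms in the correct order and tracking scalars, rather than in any new idea. I read the left-hand side as a green spider joined to a red spider by two parallel internal edges, with whatever external legs each spider carries left untouched throughout the derivation.

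First I would use \textsc{Fusion} in its unfusing direction to expose the doubled edge as a clean bipartite pattern: unfuse each spider so that the two connecting wires run between a bare $2$-legged green node and a bare $2$-legged red node, while the external legs are parked on separate fused spiders that do not participate in the rewrite. This places the diagram into exactly the complete-bipartite shape that \textsc{Bialgebra} expects. I would then apply \textsc{Bialgebra} to the doubled edge, which replaces the two parallel wires by the complementary crossing configuration and, crucially, produces a pair of $1$-legged spiders of each colour sitting between the two halves. Applying \textsc{Copy} to these $1$-legged spiders collapses them and severs the remaining connection, so that the green and red parts become genuinely disconnected. A final pass of \textsc{Fusion} and \textsc{Elim} re-absorbs the auxiliary spiders and recovers the right-hand side.

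The main obstacle I anticipate is bookkeeping rather than anything conceptual. In the paper's convention several of these axioms (in particular \textsc{Copy} and \textsc{Bialgebra}) hold only up to a non-zero global scalar, so I would have to carry the accumulated scalar factor explicitly and check that the normalisations agree on the two sides; this is the one place where a careless application can silently drop or introduce a factor. A secondary care point is ensuring the external legs are fully unfused off before \textsc{Bialgebra} is invoked, since the rule fires only on the complete bipartite pattern. As an independent sanity check, and in fact as an alternative shorter proof, I note that both sides evidently denote the same linear map once rescaled, so the completeness theorem for the Clifford ZX calculus already guarantees that a derivation exists; the explicit sequence above simply exhibits one.
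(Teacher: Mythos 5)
Your plan fails at its central step. The bialgebra rule's matching pattern is the complete bipartite graph $K_{2,2}$ on \emph{two} green and \emph{two} red spiders joined by four single wires (or, read in the other direction, one green and one red spider joined by exactly \emph{one} wire). After unfusing, what you have is one green and one red spider joined by \emph{two parallel} wires --- a doubled $K_{1,1}$ edge --- and this matches neither side of the bialgebra rule. Moreover, no formulation of the rule ``produces a pair of $1$-legged spiders of each colour'': the rewrite behaviour you describe at this point is essentially the Hopf law itself, i.e.\ the thing being proved, smuggled in as an alleged consequence of \textsc{Bialgebra}. This is not a bookkeeping problem that more careful scalar-tracking would fix. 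Indeed, the natural attempts to force a match --- bending wires with OCM so that one of the two parallel edges plays the role of the single wire on the bialgebra right-hand side, or inserting units via fusion so that a merge-then-copy composite appears and applying \textsc{Bialgebra} to it --- all regenerate a doubled edge and go in a circle. This reflects an algebraic fact: the Hopf law is not a consequence of the bialgebra laws alone (not every bialgebra is a Hopf algebra), so any correct derivation must make essential use of ZX-specific structure beyond the step ``bialgebra, then copy'' --- exactly the part your sketch leaves unspecified. (A smaller slip: the unfused middle spiders cannot be $2$-legged as you state; each must keep a third leg to its parked external spider, since unfusing cannot disconnect the diagram.)

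Your fallback --- that both sides denote the same linear map up to a scalar, so completeness of the Clifford ZX calculus guarantees a derivation exists --- is not an acceptable substitute here. Beyond being non-constructive (the proposition's purpose, and the paper's proof, is an explicit chain of rewrites), it is circular in spirit: completeness proofs for the stabiliser ZX calculus are themselves built on derived lemmas of precisely this kind, the Hopf law among them, so one cannot invoke the completeness theorem to discharge the lemma. The paper's own proof is a direct diagrammatic derivation from its stated axioms; to repair your argument you would need to exhibit such a chain, and the honest starting point is to accept that \textsc{Bialgebra} cannot fire on the doubled edge and to find where the disconnection actually comes from (in known derivations this is where the copy rule, fusion, and the compact structure encoded in OCM interact non-trivially), rather than asserting that the doubled edge is already ``the shape Bialgebra expects.''
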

\begin{proof}
    \[ \tikzfig{02-preliminaries/hopf-proof}\, \qedhere \]
\end{proof}

\begin{proposition}[Alternative Hadamard Decomposition]
    \begin{equation}
        \label{eq:hadamard-decomposition}
        \tikzfig{02-preliminaries/hadamard-decomposition}
    \end{equation}
\end{proposition}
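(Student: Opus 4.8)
The plan is to derive this identity purely from the base ZX axioms, treating the Hadamard box on the left-hand side via its Euler form and reducing the right-hand side to the same configuration. First I would rewrite the Hadamard using the Euler decomposition $\Euler$ that sits among our axioms, replacing the box by the three-spider stack of $\frac{\pi}{2}$-phase spiders in alternating colours. This turns the goal into a statement relating two stacks of phase spiders, which is exactly the kind of claim the local rewrites are designed to settle.

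From there the argument is a short chain of local rewrites. I would apply the colour-change rule $\Colour$ to recolour the outer spiders so that neighbouring spiders share a colour, then $\Fusion$ to merge them, adding their phases modulo $2\pi$. Any $\pi$ phase that must be carried across a spider of the opposite colour is moved with $\PiCommute$, and a $\pi$ phase landing on a one-legged spider is absorbed via $\Copy$. If the recolouring produces a pair of parallel edges between two spiders, the Hopf rule $\Hopf$ just established removes them. After these steps the right-hand side collapses to the same spider arrangement that the Euler form produced, giving the equality up to connectivity, which $\OCM$ then closes.

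The step I expect to be the genuine obstacle is not the connectivity rewriting but the scalar and global-phase bookkeeping. As noted in the main text, most of our axioms — in particular $\Euler$ and $\Colour$ — are stated only up to a non-zero global phase, so after all the fusions the two sides will coincide as linear maps only up to a scalar. The delicate part is checking that the leftover discrepancy is precisely the $e^{\pm i\pi/4}$ phase inherent to the Hadamard's Euler angles, and hence exactly the global phase we are entitled to discard. If a scalar-accurate version is wanted instead, I would track the $\frac{1}{\sqrt 2}$ normalisation explicitly through each $\Fusion$ and $\Euler$ application, which is the one genuinely error-prone piece of the calculation.
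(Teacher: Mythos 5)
Your proposal is correct and matches the paper's own route: the paper likewise unfolds the Hadamard box into its three-spider Euler form (there "by definition" of the Hadamard shorthand, which is the same move as invoking \(\TextEuler\)) and then closes the remaining equality by a short chain of standard Clifford rewrites, working modulo a non-zero global scalar exactly as you describe. The scalar bookkeeping you flag is a non-issue under the paper's stated convention of disregarding non-zero global phases.
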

\begin{proof}
    The first equality holds by definition of the Hadamard gate.
    The second equality is easily shown in the ZX calculus:
    \[ \tikzfig{02-preliminaries/hadamard-decomposition-proof} \qedhere \]
\end{proof}

\begin{proposition}[Transformation of one-legged spiders from~\autocite{backens2017AsimplifiedStabiliserZXCalculus}]
    \begin{equation}
        \label{eq:state-conversion}
        \tikzfig{B-additional-rules/state-conversion}
    \end{equation}
\end{proposition}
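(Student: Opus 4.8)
The plan is to regard \cref{eq:state-conversion} as a purely fault-free equality of one-legged spiders (\ie of stabiliser states) and to derive it explicitly from the basic ZX axioms together with the two rules established immediately above: the Hopf rule $\Hopf$ (to cancel any parallel edges that arise) and the alternative Hadamard decomposition $\Eq{eq:hadamard-decomposition}$. Because both the \LHS and the \RHS lie in the Clifford fragment, completeness of the underlying stabiliser calculus already guarantees that \emph{some} derivation exists; the task is therefore only to exhibit a short explicit one, in keeping with the diagrammatic proofs of the preceding two propositions.

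First I would use the alternative Hadamard decomposition $\Eq{eq:hadamard-decomposition}$ to rewrite any Hadamard appearing in the identity as its Euler sequence of $\frac{\pi}{2}$-phased spiders of alternating colour. This replaces the colour mismatch between the two sides with explicit phase data on which the spider rules can act directly. With the Hadamards eliminated, I would then repeatedly apply $\Fusion$ to merge adjacent spiders of the same colour, adding their phases, and $\Colour$ to reconcile the colours of the two sides. Any Pauli phase left sitting on a spider of the opposite colour is pushed across using $\Copy$ and $\PiCommute$, after which a final round of $\Fusion$ collapses the chain down to the single one-legged spider on the other side.

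The step I expect to be the main obstacle is not the rewriting itself but the scalar bookkeeping. In the stabiliser fragment these state identities typically hold only up to a nonzero global scalar, and the decomposition $\Eq{eq:hadamard-decomposition}$ introduces both a scalar and a global phase of its own. I would therefore carry the accumulated scalar explicitly through each rewrite and, if \cref{eq:state-conversion} is intended to be scalar-accurate, reintroduce a compensating scalar at the end; otherwise I would simply record that the equality holds up to the nonzero global scalar that we routinely disregard, exactly as the preceding derived rules are stated.
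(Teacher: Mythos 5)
Your overall plan---an explicit rewrite chain built from $\Colour$, the alternative Hadamard decomposition \cref{eq:hadamard-decomposition}, $\Fusion$ and $\Copy$, with the scalar carried along explicitly---has the same shape as the paper's proof, which is exactly such a chain. But two things need repair. The lesser one: your opening appeal to completeness is circular in this context. \cref{eq:state-conversion} is itself one of the axioms of the simplified stabiliser calculus of Backens, and the paper derives it from its own axiom set precisely so that Backens' completeness theorem can be invoked for that axiom set (this is how the derived rules are used in \cref{sec:normal-form}); completeness of Backens' calculus only guarantees a derivation in \emph{her} rule set, and completeness of \emph{this} paper's rule set is what these appendix derivations are establishing. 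You do not lean on the remark, but it should be deleted. Relatedly, the identity itself contains no Hadamard and produces no parallel edges, so your first step is vacuous and $\Hopf$ never fires: the Hadamard only appears after $\Colour$ is applied to one side, so colour change has to come first, not as a later ``reconciliation'' step.

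The more serious issue is that the single non-trivial step of the derivation is missing, and the tools you name for it would not fire. After $\Colour$ introduces a Hadamard, \cref{eq:hadamard-decomposition} (or $\Euler$) replaces it by three $\pm\frac{\pi}{2}$ rotations of alternating colour, and $\Fusion$ cancels one of them against the phase of the one-legged spider, you are left with a $\pm\frac{\pi}{2}$ rotation of each colour applied to a $0$-phase one-legged spider. At this point nothing of the same colour is adjacent, and neither $\Copy$ ``pushing phases across'' nor $\PiCommute$ applies: $\PiCommute$ moves $\pi$-phases through spiders of the other colour, whereas the leftover phases here are $\pm\frac{\pi}{2}$, and they are not pushed anywhere---they must be annihilated. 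The move that closes the proof is an eigenstate absorption: un-fuse the $\pm\frac{\pi}{2}$ phase of the spider whose colour is opposite to that of the state onto an extra leg as a one-legged phase-state, apply $\Copy$ to copy the $0$-phase state through the now phase-free spider, absorb the resulting inner product of two one-legged spiders as a non-zero scalar, and finish with a final $\Fusion$. Without this move (which is exactly where your recipe stalls), the described procedure cannot reach the other side of \cref{eq:state-conversion}.
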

\begin{proof}
    \[ \tikzfig{B-additional-rules/state-conversion-proof} \qedhere \]
\end{proof}

\end{document}